\PassOptionsToPackage{unicode}{hyperref}
\PassOptionsToPackage{hyphens}{url}
\PassOptionsToPackage{dvipsnames,svgnames,x11names}{xcolor}
\documentclass[
  11pt,
  a4paper,
]{article}
\usepackage{xcolor}
\usepackage[a4paper,textheight=24cm,textwidth=15.5cm]{geometry}
\usepackage{amsmath,amssymb}
\usepackage{iftex}
\ifPDFTeX
  \usepackage[T1]{fontenc}
  \usepackage[utf8]{inputenc}
  \usepackage{textcomp} 
\else 
  \usepackage{unicode-math} 
  \defaultfontfeatures{Scale=MatchLowercase}
  \defaultfontfeatures[\rmfamily]{Ligatures=TeX,Scale=1}
\fi
\usepackage[]{libertinus}
\ifPDFTeX\else
\fi
\IfFileExists{upquote.sty}{\usepackage{upquote}}{}
\IfFileExists{microtype.sty}{
  \usepackage[]{microtype}
  \UseMicrotypeSet[protrusion]{basicmath} 
}{}
\makeatletter
\@ifundefined{KOMAClassName}{
  \IfFileExists{parskip.sty}{%
    \usepackage{parskip}
  }{
    \setlength{\parindent}{0pt}
    \setlength{\parskip}{6pt plus 2pt minus 1pt}}
}{
  \KOMAoptions{parskip=half}}
\makeatother
\makeatletter
\ifx\paragraph\undefined\else
  \let\oldparagraph\paragraph
  \renewcommand{\paragraph}{
    \@ifstar
      \xxxParagraphStar
      \xxxParagraphNoStar
  }
  \newcommand{\xxxParagraphStar}[1]{\oldparagraph*{#1}\mbox{}}
  \newcommand{\xxxParagraphNoStar}[1]{\oldparagraph{#1}\mbox{}}
\fi
\ifx\subparagraph\undefined\else
  \let\oldsubparagraph\subparagraph
  \renewcommand{\subparagraph}{
    \@ifstar
      \xxxSubParagraphStar
      \xxxSubParagraphNoStar
  }
  \newcommand{\xxxSubParagraphStar}[1]{\oldsubparagraph*{#1}\mbox{}}
  \newcommand{\xxxSubParagraphNoStar}[1]{\oldsubparagraph{#1}\mbox{}}
\fi
\makeatother

\usepackage{longtable,booktabs,array}
\usepackage{calc} 
\usepackage{etoolbox}
\makeatletter
\patchcmd\longtable{\par}{\if@noskipsec\mbox{}\fi\par}{}{}
\makeatother
\IfFileExists{footnotehyper.sty}{\usepackage{footnotehyper}}{\usepackage{footnote}}
\makesavenoteenv{longtable}
\usepackage{graphicx}
\makeatletter
\newsavebox\pandoc@box
\newcommand*\pandocbounded[1]{
  \sbox\pandoc@box{#1}%
  \Gscale@div\@tempa{\textheight}{\dimexpr\ht\pandoc@box+\dp\pandoc@box\relax}%
  \Gscale@div\@tempb{\linewidth}{\wd\pandoc@box}%
  \ifdim\@tempb\p@<\@tempa\p@\let\@tempa\@tempb\fi
  \ifdim\@tempa\p@<\p@\scalebox{\@tempa}{\usebox\pandoc@box}%
  \else\usebox{\pandoc@box}%
  \fi%
}
\def\fps@figure{htbp}
\makeatother

\NewDocumentCommand\citeproctext{}{}

\makeatletter
 \let\@cite@ofmt\@firstofone
 \def\@biblabel#1{}
 \def\@cite#1#2{{#1\if@tempswa , #2\fi}}
\makeatother
\newlength{\cslhangindent}
\newlength{\csllabelwidth}
\newenvironment{CSLReferences}[2] 
 {\begin{list}{}{%
  \setlength{\itemindent}{0pt}
  \setlength{\leftmargin}{0pt}
  \setlength{\parsep}{0pt}
  \ifodd #1
   \setlength{\leftmargin}{\cslhangindent}
   \setlength{\itemindent}{-1\cslhangindent}
  \fi
  \setlength{\itemsep}{#2\baselineskip}}}
 {\end{list}}
\usepackage{calc}

\providecommand{\tightlist}{%
  \setlength{\itemsep}{0pt}\setlength{\parskip}{0pt}}

\usepackage{amsmath}
\usepackage{tikz}
\usepackage{xcolor}
\usepackage{tabularx}
\usepackage{orcidlink}
\usepackage{lineno}
\usepackage{unicode-math}
\usepackage{draftwatermark}
\usepackage{adjustbox}
\usepackage{float}
\makeatletter
\@ifpackageloaded{caption}{}{\usepackage{caption}}
\AtBeginDocument{%
\ifdefined\contentsname
  \renewcommand*\contentsname{Table of contents}
\else
  \newcommand\contentsname{Table of contents}
\fi
\ifdefined\listfigurename
  \renewcommand*\listfigurename{List of Figures}
\else
  \newcommand\listfigurename{List of Figures}
\fi
\ifdefined\listtablename
  \renewcommand*\listtablename{List of Tables}
\else
  \newcommand\listtablename{List of Tables}
\fi
\ifdefined\figurename
  \renewcommand*\figurename{Figure}
\else
  \newcommand\figurename{Figure}
\fi
\ifdefined\tablename
  \renewcommand*\tablename{Table}
\else
  \newcommand\tablename{Table}
\fi
}
\@ifpackageloaded{float}{}{\usepackage{float}}
\floatstyle{ruled}
\@ifundefined{c@chapter}{\newfloat{codelisting}{h}{lop}}{\newfloat{codelisting}{h}{lop}[chapter]}
\floatname{codelisting}{Listing}

\usepackage{amsthm}
\theoremstyle{plain}
\newtheorem{proposition}{Proposition}[section]
\theoremstyle{remark}
\AtBeginDocument{}

\makeatother
\makeatletter
\@ifpackageloaded{caption}{}{\usepackage{caption}}
\@ifpackageloaded{subcaption}{}{\usepackage{subcaption}}
\makeatother
\makeatletter
\@ifpackageloaded{algorithm}{}{\usepackage{algorithm}}
\makeatother
\makeatletter
\@ifpackageloaded{algpseudocode}{}{\usepackage{algpseudocode}}
\makeatother
\makeatletter
\@ifpackageloaded{caption}{}{\usepackage{caption}}
\makeatother
    \makeatletter
    \newcommand\fs@nocaption{
      \def\@fs@cfont{\bfseries}
      \let\@fs@capt\floatc@plain
      \def\@fs@pre{}%
      \def\@fs@post{\kern2pt\hrule}%
      \def\@fs@mid{\hrule\kern2pt}%
      \let\@fs@iftopcapt\iftrue}
    \makeatother
  
\usepackage{bookmark}
\IfFileExists{xurl.sty}{\usepackage{xurl}}{} 
\makeatletter
\@ifundefined{xmpquote}{\newcommand{\xmpquote}[1]{#1}}{}
\makeatother
\hypersetup{
  pdftitle={How to Train Your Intercept},
  pdfauthor={Johan Larsson; Frederik Fabricius Bjerre},
  pdfkeywords={\xmpquote{intercepts}, \xmpquote{coordinate
descent}, \xmpquote{optimization}, \xmpquote{regularization}},
  colorlinks=true,
  linkcolor={blue},
  filecolor={Maroon},
  citecolor={Blue},
  urlcolor={Blue},
  pdfcreator={LaTeX via pandoc}}

\ifXeTeX

\fi
\ifPDFTeX
\pdftrailerid{}
\pdftrailer{/ID [ <d461e0a7c4fe00a992bf5524f70fd92d> <d461e0a7c4fe00a992bf5524f70fd92d> ]}
\fi
\ifLuaTeX
\pdfvariable trailerid {[ <d461e0a7c4fe00a992bf5524f70fd92d> <d461e0a7c4fe00a992bf5524f70fd92d> ]}
\fi

\title{How to Train Your Intercept}
\usepackage{etoolbox}
\makeatletter
\providecommand{\subtitle}[1]{
  \apptocmd{\@title}{\par {\large #1 \par}}{}{}
}
\makeatother
\subtitle{A Comparison of Intercept Update Strategies in Coordinate
Descent}
\author{Johan Larsson \and Frederik Fabricius Bjerre}
\date{2026-10-13}
\begin{document}
\definecolor{computo-blue}{HTML}{034E79}

\begin{center}
  {\Large\bfseries How to Train Your Intercept}
  \\[.3cm]{\large A Comparison of Intercept Update Strategies in
Coordinate Descent}
\end{center}

\bigskip
\begin{center}
      \textbf{Johan
Larsson}\textsuperscript{1}~\orcidlink{0000-0002-4029-5945}{\renewcommand{\thefootnote}{\fnsymbol{footnote}}\footnote{Corresponding author: \href{mailto:johan@jolars.co}{johan@jolars.co}}}\quad
      \textbf{Frederik Fabricius
Bjerre}\textsuperscript{1}~\orcidlink{0009-0004-1079-1616}\quad
  
  \smallskip
  {\small
      \textsuperscript{1}Department of Mathematical Sciences, University
of Copenhagen\quad
    }

  \bigskip

  Date: 2026-10-13 \quad Last modified: 2026-10-13
  \end{center}

\bigskip
\begin{abstract}
Coordinate descent solvers for regularized generalized linear models
(GLMs) differ in how they update the intercept. This seemingly routine
choice can substantially affect convergence when the response is
imbalanced. For direct coordinate descent on the original GLM loss, we
compare a conservative step based on worst-case curvature, a safeguarded
Newton step based on current curvature, and repeated Newton steps that
solve the intercept subproblem. The conservative update can leave much
of the interaction between the intercept and the coefficients
unresolved, producing a slowdown when that interaction controls
convergence. A single Newton step avoids this attenuation and, in our
experiments, follows nearly the same outer trajectory as exact
minimization while requiring fewer inner steps. Simulated and real-data
experiments, including comparisons across six production solvers and a
controlled intervention within skglm, support this explanation. We also
show how the same mechanism appears in quadratic-approximation methods
and extends to multinomial models with rare classes. We recommend a
safeguarded Newton update for direct coordinate descent and full
intercept optimization within frozen local quadratic surrogates.
\end{abstract}

\noindent%
{\it Keywords:} intercepts, coordinate
descent, optimization, regularization
\vfill

\DraftwatermarkOptions{stamp=false}

\floatname{algorithm}{Algorithm}

\renewcommand*\contentsname{Contents}
{
\setcounter{tocdepth}{3}
\tableofcontents
}

\section{Introduction}\label{introduction}

Popular solvers for regularized generalized linear models~(GLMs) agree
on coordinate descent~(CD) but disagree about one coordinate: the
intercept. GLMs usually include this coordinate to represent the
baseline response and to absorb constant shifts in the predictors
without changing the fitted slopes. It is unpenalized and never drops
out of the fitted model, yet solvers update it in different ways.
Research on coordinate descent has examined acceleration,
parallelization, and convergence rates, but has largely treated the
intercept update as a routine implementation choice. That choice can
dominate convergence speed when the response is imbalanced. For direct
coordinate descent, a single safeguarded Newton step per pass removes
this slowdown in our experiments.

The disagreement follows a deeper split among implementations. Some
solvers, including \textbf{skglm}\footnote{Among the surveyed solvers,
  only \textbf{skglm}'s intercept update is version-dependent: version
  0.5 and earlier take a conservative gradient step, while a later patch
  replaces it with a Newton step (Section~\ref{sec-production-solvers}).
  We pin the version wherever this behavior matters.}~(Bertrand et al.
2022), run coordinate descent directly on the original GLM loss. Others
repeatedly replace that loss with a quadratic approximation and solve
the approximation by coordinate descent. A local quadratic already
contains the current intercept curvature, so the update strategies
coincide within it. A quadratic built from a worst-case curvature bound
retains the same problem as the conservative direct update. The six
production solvers in Section~\ref{sec-production-solvers} use these
designs. We explain in Section~\ref{sec-theory} and
Section~\ref{sec-irls} when the distinction matters.

We study the optimization problem
\begin{equation}\protect\phantomsection\label{eq-primal-problem}{
  \mathop{\mathrm{minimize}}_{\beta_0 \in \mathbb{R},\beta \in \mathbb{R}^p} \Big( P(\beta_0,\beta) = F(\beta_0, \beta) + G_{\lambda}(\beta) \Big)
}\end{equation} where \(P\) is the primal objective, \(F\) is the loss,
and \(G_\lambda\) penalizes large coefficients while leaving the
intercept unpenalized. Throughout the paper, we assume that
\(G_\lambda\) is a proper, closed, convex, separable penalty and that
\(F\) is the loss averaged over observations, \[
  F(\beta_0, \beta) = \frac{1}{n}\sum_{i=1}^n f(x_i^\top \beta + \beta_0; y_i),
\] where \(f\) is smooth, twice differentiable, and convex. Averaging by
\(n\) keeps the global Lipschitz constant for the intercept,
\(L_0 = \sup f''\), independent of \(n\). It also matches the objective
scaling of the production solvers that we compare in
Section~\ref{sec-production-solvers}:
\((1/n) \cdot \text{loss} + \lambda \|\beta\|_1\). For brevity, we write
\(x_i\) for the \(i\)th row of the design matrix \(X\) and \(x_j\) for
its \(j\)th column, and we let \((\hat{\beta}_0, \hat{\beta})\) denote a
solution to Equation~\ref{eq-primal-problem}. Under these assumptions,
\(P\) is a convex composite objective. The assumptions cover every
penalty and model in our analysis and experiments. We do not consider
nonconvex penalties.

We compare three strategies for updating the intercept once per pass
over the coefficient coordinates:

\begin{description}
\tightlist
\item[Gradient Strategy]
Takes a single gradient step based on a global curvature bound.
\item[Newton Strategy]
Takes a single Newton step based on the curvature at the current
iterate, with an Armijo backtracking line search to guard against
overshooting.
\item[Exact Strategy]
Minimizes the loss over the intercept while holding the coefficients
fixed.
\end{description}

In the analysis, \emph{exact} means setting the intercept to its
conditional minimizer. The experiments compute this minimizer
numerically, with tolerances documented in
\hyperref[sec-s-methodology]{Supplement S3}.

These strategies can converge at substantially different rates when the
response is imbalanced---for example, when positive cases are rare in
logistic regression. The three paths in Figure~\ref{fig-parametric}
trace the strategies across the level sets of
Equation~\ref{eq-primal-problem} for a toy problem with one feature and
an intercept. The gradient strategy approaches the solution through many
small steps, whereas the Newton and exact strategies reach it in only a
few.

\begin{figure}

\centering{

\pandocbounded{\includegraphics[keepaspectratio]{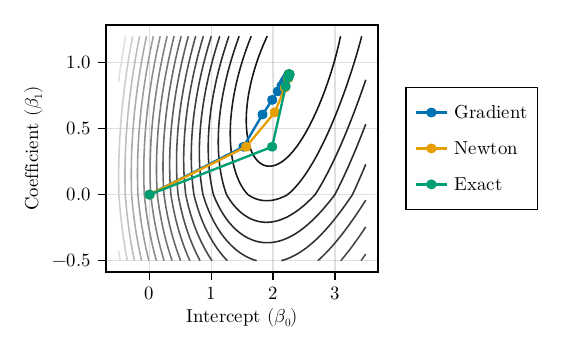}}

}

\caption{\label{fig-parametric}Intercept-update trajectories on a toy
logistic problem. The paths show the iterates of the gradient, Newton,
and exact strategies over the objective level sets for a one-feature
model with an intercept.}

\end{figure}%

In Figure~\ref{fig-first-example}, we compare how quickly the three
strategies converge when fitting \(\ell_1\)-regularized logistic
regression to the w1a dataset~(Platt 1998). The Newton and exact
strategies bring the relative suboptimality bound below \(10^{-8}\)
after roughly 45 passes, while the gradient strategy takes more than
seven times as many.

\begin{figure}

\centering{

\pandocbounded{\includegraphics[keepaspectratio]{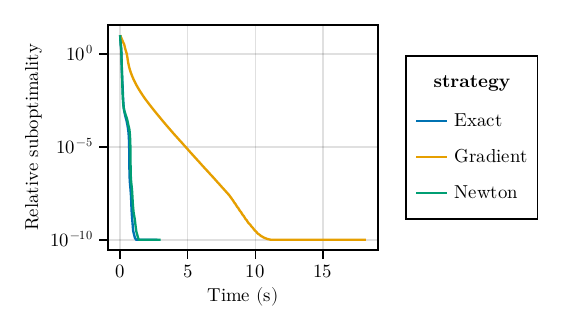}}

}

\caption{\label{fig-first-example}Overall convergence on w1a. The plot
shows a relative suboptimality bound versus time for the three intercept
strategies in \(\ell_1\)-regularized logistic regression.}

\end{figure}%

We make four contributions:

\begin{itemize}
\item
  We formalize three strategies for updating an unpenalized intercept:
  gradient, Newton, and exact minimization.
\item
  We use a Schur complement to describe how an intercept update changes
  the coefficient gradients. A frozen quadratic model then explains why,
  under strong coupling, the gradient-to-Newton pass-count ratio
  approaches the ratio of global to local intercept curvature.
\item
  We classify CD-based GLM solvers by their treatment of curvature and
  the intercept, and we evaluate six production solvers using controlled
  comparisons where possible.
\item
  We extend the analysis to multinomial logistic regression, where rare
  classes create low-curvature directions. Experiments on synthetic and
  real data show that Newton updates closely match exact minimization
  and can be much faster than gradient updates.
\end{itemize}

\section{Related work}\label{sec-related-work}

The choice of intercept update touches three strands of the literature,
none of which has analyzed it as a separate problem. The first concerns
how CD-based GLM solvers incorporate the intercept. Existing work
generally treats this as a routine implementation choice. The
\textbf{glmnet} literature~(Friedman et al. 2010; Tay et al. 2023), for
example, describes two schemes for weighting the inner logistic solver
in iteratively reweighted least squares (IRLS): local weights and a
global 1/4 upper bound. The software exposes them through
\texttt{type.logistic\ =\ "Newton"} and \texttt{"modified.Newton"},
which the documentation presents as a speed trade-off rather than as a
choice of intercept strategy.

\textbf{LIBLINEAR}'s newGLMNET solver~(Fan et al. 2008; Yuan et al.
2012) treats the intercept as the unregularized last coordinate of each
prox-Newton subproblem. Coordinate descent then optimizes the intercept
along with the coefficients. By the end of the inner solve, the
intercept has reached its conditional minimum for the quadratic. The
solver has no separate intercept loop on the original loss.\footnote{Source:
  \url{https://github.com/cjlin1/liblinear/blob/491c9f1/linear.cpp\#L1972-L1977}.
  \texttt{solve\_l1r\_lr} (\texttt{linear.cpp:1796}) appends the
  intercept as a constant feature and excludes it from regularization
  (\texttt{linear.cpp:1789-1790}). Inside the inner CD solve of the
  quadratic program, the intercept update is the pure Newton step
  \texttt{z\ =\ -G/H} (\texttt{linear.cpp:1972-1977}). The outer Newton
  loop ends without any separate intercept loop on the original loss.}

\textbf{BlitzL1}~(Johnson and Guestrin 2015) does the same inside the
prox-Newton inner CD solve. After each subproblem, it also runs a capped
one-dimensional Newton loop on the original loss to bring the intercept
gradient within tolerance.\footnote{Source:
  \url{https://github.com/tbjohns/BlitzL1/blob/aef8d02/src/solver.cpp\#L27-L55}.
  The loop is called from \texttt{run\_prox\_newton\_iteration}
  (\texttt{solver.cpp:262}) after the working-set CD pass and step-size
  backtracking.} \textbf{celer}~(Massias et al. 2020) does not support
intercepts for \(\ell_1\)-logistic regression. Implementations therefore
make varied choices, but previous work has not directly analyzed the
response distributions for which those choices matter.

A second thread is the Frisch--Waugh--Lovell tradition of partialling
out. In linear regression, the intercept can be handled separately:
centering the features decouples it from the slopes and, when the
intercept is left unpenalized, gives \(\hat\beta_0 = \bar y\)~(Lovell
1963). \textbf{glmnet} leaves the intercept unpenalized on
translation-invariance grounds---penalizing \(\beta_0\) would make the
path depend on the arbitrary location of \(y\)---and this decoupling
makes that choice essentially free under Gaussian loss. After solving
for the slopes, the intercept is recovered as
\(\hat\beta_0 = \bar y - \bar x^\top \hat\beta\).

For non-Gaussian GLMs, the decoupling fails because curvature varies
across observations. Ordinary feature centering no longer removes the
interaction between the intercept and the slopes, and the closed form
for \(\hat\beta_0\) disappears. \textbf{adelie}~(Yang and Hastie 2024)
handles this dependence through its proximal quasi-Newton approximation.
At each linearization, it builds a weighted Gaussian surrogate and
absorbs the intercept into that surrogate rather than treating it as a
separate CD coordinate.\footnote{Source:
  \url{https://github.com/JamesYang007/adelie/blob/d6a1229/adelie/src/include/adelie_core/solver/solver_gaussian_pin_naive.hpp}.
  The inner Gaussian solver uses IRLS-weighted column means in its
  gradient update (\texttt{solver\_gaussian\_pin\_naive.hpp:87,\ 120})
  and recovers the intercept at the end of the inner solve as the
  weighted mean of the working response plus the residual mean
  (\texttt{solver\_gaussian\_pin\_naive.hpp:392}). The outer GLM solver
  computes the column means using the IRLS weights
  (\texttt{solver\_glm\_naive.hpp:366}).} For Gaussian data, the
framework in Section~\ref{sec-theory} reduces to the familiar result:
the loss has constant curvature, so the gradient and Newton strategies
coincide on the intercept.

The third thread is majorization--minimization~(MM). The binary logistic
curvature bound \(L_0 = 1/4\) is the binary specialization of the
quadratic Hessian majorant used by Böhning (1992) to build an MM
algorithm for multinomial logistic regression. Hunter and Lange (2004)
give the canonical exposition of MM. Krishnapuram et al. (2005) apply
Böhning's bound to L1-penalized multinomial logistic regression, a
direct precedent for our multinomial extension.

In regularized logistic regression, the same construction appears in the
\texttt{modified.Newton} mode of \textbf{glmnet} and the \texttt{MM}
mode of \textbf{biglasso}~(Zeng and Breheny 2021)
(Table~\ref{tbl-classification}). Because the global majorant does not
adapt to local curvature, it can produce a much smaller intercept
correction than a Newton step when the response is imbalanced. In
multinomial regression, the same mismatch appears along directions
associated with rare classes. The MM literature treats this as loose
majorization. We quantify the resulting slowdown for imbalanced binary
and rare-class multinomial problems (Section~\ref{sec-irls},
Figure~\ref{fig-multinomial-sweep}).

\section{Theory}\label{sec-theory}

The three strategies differ in how they respond to intercept curvature.
The gradient strategy uses a global upper bound, Newton uses the
curvature at the current iterate, and the exact strategy minimizes the
intercept subproblem. On imbalanced data, the local curvature can be
much smaller than the global bound. When the intercept---or a direction
coupling it with a coefficient---controls convergence, this mismatch can
make the gradient strategy slower by a factor proportional to the ratio
of global to local curvature. We first analyze the effect of one
intercept update, then follow its effect through repeated passes on a
fixed quadratic model and compare the resulting local rates with
experiments.

Throughout this section, we work with the canonical-link GLM loss
averaged over observations, \[
  F(\beta_0, \beta) = \frac{1}{n}\sum_{i=1}^n f(x_i^\top \beta + \beta_0; y_i).
\] We first derive the scalar-intercept results for the Gaussian,
binomial, and Poisson models. We later extend them to multinomial
logistic regression, whose intercept is a vector. For observation \(i\),
let \(\eta_i = x_i^\top \beta + \beta_0\) denote the linear predictor,
and let \(g^{-1}\) denote the inverse link. The conditional mean is then
\(\mathop{\mathrm{E{}}}(y_i \mid \eta_i) = g^{-1}(\eta_i)\). For the
canonical-link models considered here, the derivative of the negative
log-likelihood simplifies to the \emph{generalized residual}, \[
  f'(\eta_i; y_i) = g^{-1}(\eta_i) - y_i = r_i.
\] All gradients below follow from this canonical-link identity, and all
Hessians follow from its derivative. The identity does not hold for
arbitrary noncanonical links. The loss, link, and inverse link for the
four models we study---Gaussian, binomial, Poisson, and
multinomial---appear in Table~\ref{tbl-glm}.

\begin{longtable}[]{@{}
  >{\raggedright\arraybackslash}p{(\linewidth - 6\tabcolsep) * \real{0.1731}}
  >{\raggedright\arraybackslash}p{(\linewidth - 6\tabcolsep) * \real{0.4551}}
  >{\raggedright\arraybackslash}p{(\linewidth - 6\tabcolsep) * \real{0.2051}}
  >{\raggedright\arraybackslash}p{(\linewidth - 6\tabcolsep) * \real{0.1667}}@{}}
\caption{Loss, link, and inverse-link functions for the GLM families
studied. The multinomial row uses the reference-class parameterization
with class \(m\) as the reference (\(\eta_m = 0\),
\(\mu_m = 1 - \sum_{j<m}\mu_j\)), so \(\eta\) and \(\mu\) in that row
are \((m-1)\)-vectors, and \(\log\) and \(\exp\) are applied
elementwise.}\label{tbl-glm}\tabularnewline
\toprule\noalign{}
\begin{minipage}[b]{\linewidth}\raggedright
Model
\end{minipage} & \begin{minipage}[b]{\linewidth}\raggedright
\(f(\eta; y)\)
\end{minipage} & \begin{minipage}[b]{\linewidth}\raggedright
\(g(\mu)\)
\end{minipage} & \begin{minipage}[b]{\linewidth}\raggedright
\(g^{-1}(\eta)\)
\end{minipage} \\
\midrule\noalign{}
\endfirsthead
\toprule\noalign{}
\begin{minipage}[b]{\linewidth}\raggedright
Model
\end{minipage} & \begin{minipage}[b]{\linewidth}\raggedright
\(f(\eta; y)\)
\end{minipage} & \begin{minipage}[b]{\linewidth}\raggedright
\(g(\mu)\)
\end{minipage} & \begin{minipage}[b]{\linewidth}\raggedright
\(g^{-1}(\eta)\)
\end{minipage} \\
\midrule\noalign{}
\endhead
\bottomrule\noalign{}
\endlastfoot
Gaussian & \(\frac{1}{2}(y - \eta)^2\) & \(\mu\) & \(\eta\) \\
Binomial & \(\log(1 + e^\eta) - \eta y\) &
\(\log\left( \frac{\mu}{1 - \mu} \right)\) &
\(\frac{e^\eta}{1 + e^\eta}\) \\
Poisson & \(e^\eta - \eta y\) & \(\log(\mu)\) & \(e^\eta\) \\
Multinomial &
\(\log\left( 1 + \sum_{j=1}^{m-1} e^{\eta_j} \right) - \sum_{k=1}^{m-1} y_k \eta_k\)
& \(\log\left( \frac{\mu}{1 - \sum_{j=1}^{m-1} \mu_j} \right)\) &
\(\frac{\exp(\eta)}{1 + \sum_{j=1}^{m-1} e^{\eta_j}}\) \\
\end{longtable}

\subsection{Coordinate descent}\label{coordinate-descent}

Coordinate descent updates one coordinate at a time while holding the
others fixed. For coefficient \(\beta_j\), the chain rule gives \[
  \frac{\partial}{\partial \beta_j} F(\beta_0, \beta)
  = \frac{1}{n}\sum_{i=1}^n x_{ij} f'(\eta_i; y_i).
\]

The smooth part of a coefficient or intercept update takes a step in the
negative partial-gradient direction, \[
  \beta_j^+ = \beta_j
              - \gamma_j \frac{\partial}{\partial \beta_j} F(\beta_0, \beta),
\] where \(\gamma_j\) is the step size. The usual CD choice is a Newton
step, based on the second derivative \[
  H_{jj} = \frac{\partial^2}{\partial \beta_j^2} F(\beta_0, \beta)
         = \frac{1}{n}\sum_{i=1}^n x_{ij}^2 f''(\eta_i; y_i).
\]

Letting \(w_i = f''(\eta_i; y_i)\) be the \emph{weights}, we can rewrite
the Hessian diagonal as \[
  H_{jj} = \frac{1}{n}\sum_{i=1}^n x_{ij}^2 w_i.
\]

This yields the step size \[
  \gamma_j = H_{jj}^{-1}
           = \left( \frac{1}{n}\sum_{i=1}^n x_{ij}^2 w_i \right)^{-1}.
\]

For the intercept, the local curvature is \[
  H_{00} = \frac{\partial^2}{\partial \beta_0^2} F(\beta_0, \beta)
         = \frac{1}{n}\sum_{i=1}^n f''(\eta_i; y_i).
\] The corresponding Newton step size is \(\gamma_0 = H_{00}^{-1}\).
Production solvers, however, use three different intercept updates:

\begin{enumerate}
\def\labelenumi{\arabic{enumi}.}
\tightlist
\item
  A Newton step, setting \(\gamma_0\) as above.
\item
  A gradient descent step, setting \(\gamma_0\) to the reciprocal of the
  global Lipschitz constant with respect to \(\beta_0\) or choosing it
  by a backtracking line search.
\item
  Repeated safeguarded Newton steps on the intercept subproblem until
  the scalar convergence criterion described below is met---the
  \emph{exact strategy}.
\end{enumerate}

For a slope coefficient, this smooth step must also account for the
penalty. By separability, write
\(G_\lambda(\beta) = \sum_{j=1}^p g_{\lambda,j}(\beta_j)\) and define \[
  \operatorname{prox}_{\gamma g}(v)
  = \operatorname*{arg\,min}_{u \in \mathbb{R}} \left\{ g(u) + \frac{1}{2\gamma}(u - v)^2 \right\}.
\] The coordinate update is therefore the proximal-gradient step \[
  \beta_j^+
  = \operatorname{prox}_{\gamma_j g_{\lambda,j}} \left( \beta_j - \gamma_j \frac{\partial}{\partial\beta_j}F(\beta_0,\beta) \right).
\] For the lasso, \(g_{\lambda,j}(u) = \lambda|u|\), this is
soft-thresholding at \(\gamma_j\lambda\). The intercept has no
corresponding proximal step because it is unpenalized. The resulting
coordinate-descent algorithm appears in Algorithm~\ref{algo-cd}.

\floatstyle{ruled}
\restylefloat{algorithm}
\begin{algorithm}[ht]
    \caption{\label{algo-cd}Cyclic coordinate descent.}
    \begin{algorithmic}[1]
        \State \textit{Input:} Data $(X, y)$, $\lambda > 0$, initial intercept and coefficients $(\beta_0, \beta)$
        \While{not converged}
            \For{$j$ in $\{1, 2, ..., p\}$}
                \State $\beta_j \gets \operatorname{prox}_{\gamma_j g_{\lambda,j}}\left( \beta_j - \gamma_j \frac{\partial}{\partial \beta_j} F(\beta_0, \beta) \right)$
            \EndFor

            \If{intercept strategy is "Newton"}
                \State $\delta \gets -\left( \frac{\partial^2}{\partial \beta_0^2} F(\beta_0, \beta) \right)^{-1} \frac{\partial}{\partial \beta_0} F(\beta_0, \beta)$
                \State $\alpha \gets 1$
                \While{$F(\beta_0 + \alpha \delta, \beta) > F(\beta_0, \beta) + c \alpha \delta \frac{\partial}{\partial \beta_0} F(\beta_0, \beta)$}
                    \State $\alpha \gets \alpha / 2$
                \EndWhile
                \State $\beta_0 \gets \beta_0 + \alpha \delta$
            \ElsIf{intercept strategy is "Gradient"}
                \State $\beta_0 \gets \beta_0 - \frac{1}{L_0} \frac{\partial}{\partial \beta_0} F(\beta_0, \beta)$
            \ElsIf{intercept strategy is "Exact"}
                \State $k \gets 0$
                \While{$|\partial_0 F| > 10^{-10}$ and $k < 50$}
                    \State $\delta \gets -\left( \frac{\partial^2}{\partial \beta_0^2} F(\beta_0, \beta) \right)^{-1} \frac{\partial}{\partial \beta_0} F(\beta_0, \beta)$
                    \State choose $\alpha$ by Armijo backtracking as in the Newton branch
                    \State $\beta_0 \gets \beta_0 + \alpha \delta$
                    \State $k \gets k + 1$
                \EndWhile
                \If{$k = 50$ and $|\partial_0 F| > 5 \times 10^{-10}$}
                    \State raise a convergence error
                \EndIf
            \EndIf
        \EndWhile
        \State \textit{Output:} $(\beta_0, \beta)$
    \end{algorithmic}
\end{algorithm}
\floatstyle{plain}

In the implementation, every coefficient update in
Algorithm~\ref{algo-cd} uses a Tseng--Yun Armijo backtracking line
search based on the proximal model decrease, so each pass satisfies the
descent condition by construction. The pseudocode omits this guard for
readability. The Newton branch likewise backtracks only when the full
step overshoots, and the exact branch enforces the residual criterion
described above. \hyperref[sec-s-methodology]{Supplement S3} gives the
constants, iteration caps, and numerical tolerances.

\subsection{What sets the intercept
apart}\label{what-sets-the-intercept-apart}

Four properties distinguish the intercept \(\beta_0\) from a slope
coefficient \(\beta_j\):

\begin{enumerate}
\def\labelenumi{\arabic{enumi}.}
\item
  The intercept's design column is \(\mathbf{1}_n\). Treating
  \(\beta_0\) as a coordinate is equivalent to augmenting \(X\) with a
  constant column of ones. Consequently, \(H_{00} = (1/n)\sum_i w_i\) is
  a mean of weights rather than a weighted mean of squared features. The
  global Lipschitz constant
  \(L_0 = \|\mathbf{1}\|^2 f''_{\max}/n = f''_{\max}\) does not shrink
  with the data; for binary logistic regression, \(L_0 = 1/4\). Finally,
  the cross-Hessian \(H_{0j} = (1/n)\sum_i w_i x_{ij}\) is the
  \(w\)-weighted column mean. The intercept therefore couples to every
  slope through a single scalar rather than through pairwise feature
  correlations.
\item
  The intercept is unpenalized. The regularizer \(G_\lambda\) acts only
  on \(\beta\), so \(\beta_0\) is the only coordinate whose update is
  smooth---no proximal operator, no soft-thresholding, no
  nondifferentiable optimality condition. Hence, the strategies in
  Algorithm~\ref{algo-cd} can update \(\beta_0\) directly.
\item
  The intercept is never inactive. State-of-the-art CD solvers use
  screening rules and working sets to limit the number of coordinates
  visited each pass. In high-dimensional settings, most \(\beta_j\) are
  visited in only a fraction of passes, and their average per-pass cost
  is correspondingly small. The intercept is in the active set of every
  pass by construction, so its per-pass treatment dominates the overall
  intercept-update cost, unlike that of any single slope.
\item
  In IRLS-CD, optimizing the intercept re-centers the quadratic
  surrogate by setting the weighted mean of its residuals to zero. No
  individual slope performs this role. \textbf{adelie} can therefore
  absorb the intercept into a parameterization of the inner Gaussian
  problem centered by weighted means rather than update it as a separate
  coordinate. Section~\ref{sec-irls} derives this identity.
\end{enumerate}

The ratio \(L_0/H_{00}\) compares the Newton step size \(1/H_{00}\) with
the fixed gradient step size \(1/L_0\). The first property explains why
this ratio is unbounded under response imbalance: \(L_0\) does not adapt
to the curvature at the current iterate, while \(H_{00}\) collapses with
\((1/n)\sum_i w_i\) as the linear predictor saturates. The second and
third explain why solvers can choose among several strategies and why
the cost of that choice persists on every pass. The fourth explains why
an IRLS solver can absorb the intercept into its quadratic surrogate.

\subsection{Cross-influence of intercept and
coefficients}\label{cross-influence-of-intercept-and-coefficients}

An intercept update can change the gradients used for the next
coefficient updates. If the intercept lags behind its conditional
optimum, those gradients reflect the remaining intercept error, which
can slow progress across passes. The update strategies differ in how
much of this error they remove. To see how the error reaches the
coefficients, we examine the curvature shared by the intercept and each
coefficient. Let \(w_i = f''(\eta_i; y_i)\). The Hessian blocks are \[
  H_{jk}
  = \frac{\partial^2}{\partial \beta_j \partial \beta_k} F(\beta_0, \beta)
  = \frac{1}{n}\sum_{i=1}^n x_{ij} x_{ik} w_i
\] for the coefficient--coefficient block and \[
  H_{0j}
  = \frac{\partial^2}{\partial \beta_0 \partial \beta_j} F(\beta_0, \beta)
  = \frac{1}{n}\sum_{i=1}^n x_{ij} w_i
\] for the intercept--coefficient block. The coefficients couple to each
other \emph{pairwise} through the weighted Gram matrix of the features.
Correlated features produce large \(|H_{jk}|\), which is the classical
reason coordinate descent struggles on such designs. The intercept
instead couples to each coefficient \emph{globally} through the weighted
mean of that feature, without an explicit pairwise-correlation term.

For squared loss, \(w_i = 1\) and \(H_{0j} = (1/n)\sum_i x_{ij}\), so
centering the features makes \(H_{0j} = 0\) and the intercept decouples
from the coefficients entirely. For other losses, the weights depend on
the linear predictor and the response, so \(H_{0j}\) is a
\(w_i\)-weighted mean of \(x_{ij}\), which is generally nonzero even
after centering. That is why the intercept and coefficients can be
strongly coupled in GLMs, and why the choice of intercept-update
strategy can affect convergence.

\subsubsection{A normalized measure of intercept--coefficient
coupling}\label{a-normalized-measure-of-interceptcoefficient-coupling}

The cross-curvature \(H_{0j}\) measures how much a small intercept
change affects the gradient for coefficient \(j\). Its magnitude depends
on the feature's units, so we normalize it by the curvature in each
coordinate. For \(H_{00} > 0\) and \(H_{jj} > 0\), define
\begin{equation}\protect\phantomsection\label{eq-rho}{
  \rho_{0j} = \frac{H_{0j}}{\sqrt{H_{00} H_{jj}}} \in [-1, 1],
}\end{equation} where the bound follows from Cauchy--Schwarz. If
\(\rho_{0j} = 0\), changing the intercept leaves the coefficient's
gradient unchanged to first order. In the local quadratic model,
\(|\rho_{0j}|\) close to one means that adjusting the intercept along
with the coefficient can greatly reduce the curvature compared with
holding the intercept fixed.

For example, suppose that \(x_{ij} = 2\) for every observation.
Increasing \(\beta_j\) by \(0.1\) adds \(0.2\) to every linear
predictor. Decreasing \(\beta_0\) by \(0.2\) cancels this change
exactly, so the loss stays unchanged. Here \(\rho_{0j} = 1\): the
intercept and coefficient have interchangeable effects on the loss. A
similar local effect occurs when the feature is nearly constant at a
nonzero value among observations with substantial curvature weights
\(w_i\), even if it varies across the full data set.

This coupling makes it important for the intercept to keep up with
coefficient changes. A coefficient update can move the conditional
optimum of the intercept, and if the next intercept update leaves much
of that error unresolved, the error carries into subsequent coefficient
gradients and risks hampering progress of the whole solver.

Positive rescaling of the feature or loss leaves \(\rho_{0j}\)
unchanged, while reversing the feature's sign reverses \(\rho_{0j}\). We
use \(\rho_{0j}\) only when both diagonal curvatures are positive---if
\(H_{00} = 0\), a local Newton step is also undefined.

\subsubsection{The profiled objective}\label{the-profiled-objective}

As a reference for comparing the strategies, suppose that the intercept
always reaches its conditional optimum. The coefficients then determine
the \emph{profiled} objective, obtained by minimizing over the
intercept: \[
  \tilde F(\beta) = \min_{\beta_0} F(\beta_0, \beta).
\]

Evaluate the Hessian blocks at \((\beta_0^\star(\beta), \beta)\). When
\(H_{00} > 0\), the Hessian of \(\tilde F\) is the Schur complement
\begin{equation}\protect\phantomsection\label{eq-schur}{
  \tilde H_{jk} = H_{jk} - \frac{H_{0j} H_{0k}}{H_{00}},
}\end{equation} and in particular
\begin{equation}\protect\phantomsection\label{eq-schur-diag}{
  \tilde H_{jj} = H_{jj} - \frac{H_{0j}^2}{H_{00}} = H_{jj}(1 - \rho_{0j}^2).
}\end{equation}

When \(|\rho_{0j}|\) is close to \(1\), adjusting the intercept offsets
much of a change in coefficient \(j\). The resulting profiled curvature
\(\tilde H_{jj}\) is therefore much smaller than the fixed-intercept
curvature \(H_{jj}\): \(\rho_{0j}^2\) is the fraction removed in the
local quadratic model (Equation~\ref{eq-schur-diag}). This difference
affects coordinate descent in two ways. Immediately after optimizing the
intercept, the coefficient gradient equals that of \(\tilde F\), but the
step still uses \(H_{jj}\) and can be conservative relative to the
profiled curvature. After a coefficient update, the intercept may become
stale. Until it is updated again, its error enters subsequent
coefficient gradients through the cross-curvature \(H_{0j}\) and can
slow progress.

\subsubsection{Strategies as Schur
approximations}\label{strategies-as-schur-approximations}

An intercept error changes the gradients used to update the
coefficients. We compare the three strategies by asking how much of this
gradient error each removes. Exact minimization supplies the profiled
gradient, one Newton step approximates it locally, and a gradient step
can leave much of the error behind. We first illustrate this comparison
with logistic regression, then establish the local bounds and consider
whether extra intercept iterations justify their cost.

Write the intercept as \(a\) and one coefficient as \(b\). Consider 30
observations, with ten at each feature value \(x = 0, 1, 2\) and,
respectively, one, one, and two positive responses. Their average
logistic loss is
\begin{equation}\protect\phantomsection\label{eq-logistic-example}{
  F(a, b)
  = \frac{1}{30}\sum_{i=1}^{30} \left[ \log(1 + e^{a+bx_i}) - y_i(a + bx_i) \right].
}\end{equation} For fixed \(b\), the optimal intercept \(a^\star(b)\)
makes the average fitted probability equal the observed proportion of
positive responses, \(4/30\). Unlike a quadratic intercept subproblem,
this problem generally requires more than one Newton step to solve. The
coefficient gradient is \[
  \partial_b F(a, b) = \frac{1}{30}\sum_{i=1}^{30} x_i[p_i(a,b) - y_i],
  \qquad
  p_i(a,b) = \frac{1}{1 + e^{-(a+bx_i)}}.
\] Changing the intercept changes the fitted probabilities \(p_i\) and
therefore the coefficient gradient, even though \(b\) stays fixed.

Start at \((a, b) = (-2, 1)\) and hold \(b\) fixed. The gradient
strategy uses the global intercept curvature bound \(L_0 = 1/4\),
whereas Newton uses the current curvature \(H_{00}\). Exact minimization
solves \(\partial_a F = 0\) numerically. All three updates decrease the
loss at this starting point, and the full Newton step satisfies the
Armijo condition used by our solver. Their effects on the coefficient
gradient differ:

\begin{longtable}[]{@{}
  >{\raggedright\arraybackslash}p{(\linewidth - 4\tabcolsep) * \real{0.2462}}
  >{\raggedleft\arraybackslash}p{(\linewidth - 4\tabcolsep) * \real{0.2923}}
  >{\raggedleft\arraybackslash}p{(\linewidth - 4\tabcolsep) * \real{0.4615}}@{}}
\caption{Intercept and coefficient gradient for
Equation~\ref{eq-logistic-example} at \(b = 1\). Values are rounded to
three decimal places. The exact row gives the profiled gradient
\(\tilde F'(1)\).}\label{tbl-logistic-updates}\tabularnewline
\toprule\noalign{}
\begin{minipage}[b]{\linewidth}\raggedright
Intercept update
\end{minipage} & \begin{minipage}[b]{\linewidth}\raggedleft
Intercept afterward
\end{minipage} & \begin{minipage}[b]{\linewidth}\raggedleft
Coefficient gradient afterward
\end{minipage} \\
\midrule\noalign{}
\endfirsthead
\toprule\noalign{}
\begin{minipage}[b]{\linewidth}\raggedright
Intercept update
\end{minipage} & \begin{minipage}[b]{\linewidth}\raggedleft
Intercept afterward
\end{minipage} & \begin{minipage}[b]{\linewidth}\raggedleft
Coefficient gradient afterward
\end{minipage} \\
\midrule\noalign{}
\endhead
\bottomrule\noalign{}
\endlastfoot
None & \(-2.000\) & \(0.256\) \\
Gradient & \(-2.651\) & \(0.116\) \\
Newton & \(-2.885\) & \(0.072\) \\
Exact & \(-3.106\) & \(0.035\) \\
\end{longtable}

Here Newton brings the coefficient gradient closer to the profiled
gradient than the gradient strategy does, but it does not reach it. This
example illustrates the distinction; the local bounds below state when
the remaining error becomes negligible. For a quadratic loss, one full
Newton step instead solves the intercept subproblem exactly, so Newton
and exact minimization coincide.

Figure~\ref{fig-schur-example} shows both the intercept updates and what
happens after an exact solve. The optimal intercept depends on \(b\):
changing the coefficient while holding the intercept fixed can move the
iterate away from the curve \(a = a^\star(b)\). Thus, even exact
intercept optimization need not give later coefficient updates in the
same pass the profiled gradient.

\begin{figure}

\centering{

\pandocbounded{\includegraphics[keepaspectratio]{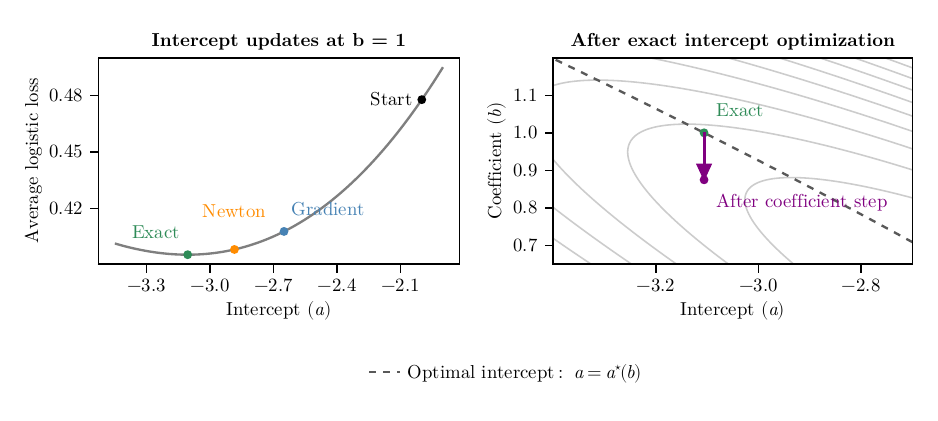}}

}

\caption{\label{fig-schur-example}Intercept updates for the logistic
example in Equation~\ref{eq-logistic-example}. Left: the loss as a
function of the intercept with \(b = 1\), showing the starting point and
the three updated intercepts. Right: after exact intercept optimization,
a coefficient Newton step with the intercept held fixed moves \(b\) from
\(1\) to about \(0.875\) and decreases the loss, but leaves the curve of
conditionally optimal intercepts. Gray curves are loss level sets. The
right panel enlarges the neighborhood of the exact update.}

\end{figure}%

To explain these differences for a general smooth loss, we expand the
coefficient gradient after an intercept step. If we take the step
\(\beta_0 \mapsto \beta_0 - \alpha \partial_0 F\), twice continuous
differentiability gives
\begin{equation}\protect\phantomsection\label{eq-gradient-response}{
  \partial_j F(\beta_0 - \alpha\partial_0 F, \beta)
  = \partial_j F(\beta_0, \beta)
    - \alpha H_{j0} \partial_0 F(\beta_0, \beta) + r_j(\alpha),
}\end{equation} where \(r_j(\alpha) = o(|d_0|)\) as the intercept
displacement \(d_0 = -\alpha \partial_0 F\) tends to zero. If \(H_{j0}\)
is locally Lipschitz, then \(r_j(\alpha) = O(d_0^2)\). For a quadratic
objective, the gradient is affine, so the remainder is zero.

When \(H_{00} > 0\), the local Schur correction
\(\partial_j F - (H_{j0}/H_{00})\partial_0 F\) is the gradient that a
Newton intercept step predicts for the profiled objective at the current
\(\beta\). It equals the profiled gradient in a frozen quadratic and
approximates it locally for a nonlinear loss. Conditional minimization
gives the true profiled gradient. The three strategies differ in how
closely the gradient response in Equation~\ref{eq-gradient-response}
approaches that target.

\paragraph{What exact intercept optimization
achieves}\label{what-exact-intercept-optimization-achieves}

Exact minimization provides the reference for comparing the strategies.
It gives the next coefficient update the profiled gradient, but the
update still uses the curvature with the intercept held fixed. The
following proposition makes this distinction precise.

\begin{proposition}[]\protect\hypertarget{prp-profile-equiv}{}\label{prp-profile-equiv}

(First-update profile equivalence.) Let \(F\) be twice continuously
differentiable and strictly convex in \(\beta_0\) for every \(\beta\),
and let \(\tilde F(\beta) = \min_{\beta_0} F(\beta_0, \beta)\). Suppose
the exact strategy has driven \(\beta_0\) to the conditional minimizer
\(\beta_0^{(t)} = \arg\min_{\beta_0} F(\beta_0, \beta^{(t)})\), and
suppose \(H_{00} > 0\) and \(\tilde H_{jj} > 0\) there. Then the
\emph{first} coefficient update of the next pass uses the gradient
\(\partial_j F(\beta_0^{(t)}, \beta^{(t)}) = \partial_j \tilde F(\beta^{(t)})\),
so its update direction agrees with the corresponding coordinate
direction for \(\tilde F\) at \(\beta^{(t)}\). Its diagonal step size
remains \(1/H_{jj}\). Because \(\tilde H_{jj} \leq H_{jj}\) by
Equation~\ref{eq-schur-diag}, this step size is no larger than
\(1/\tilde H_{jj}\), the step size for \(\tilde F\)'s local quadratic
model along \(e_j\).

\end{proposition}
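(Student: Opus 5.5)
The plan has three parts: an envelope-theorem computation for the gradient identity, a differentiation of that identity to obtain the Schur complement, and a comparison of reciprocals for the step sizes.

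\emph{Step 1: the profiled gradient.} Strict convexity in \(\beta_0\), together with the hypothesis that the conditional minimizer exists at \(\beta^{(t)}\), gives the first-order condition \(\partial_0 F(\beta_0^{(t)}, \beta^{(t)}) = 0\). Since \(H_{00} > 0\) there, the implicit function theorem applies to \(\partial_0 F(\beta_0, \beta) = 0\). It yields a \(C^1\) map \(\beta \mapsto \beta_0^\star(\beta)\) near \(\beta^{(t)}\) with \(\beta_0^\star(\beta^{(t)}) = \beta_0^{(t)}\). By strict convexity, this local root is the unique minimizer, so \(\tilde F(\beta) = F(\beta_0^\star(\beta), \beta)\) on a neighborhood. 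The chain rule then gives
\[
  \partial_j \tilde F = \partial_0 F \, \partial_j \beta_0^\star + \partial_j F.
\]
The first term vanishes at \(\beta^{(t)}\), which is the envelope identity.

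\emph{Step 2: the update direction.} Algorithm~\ref{algo-cd} begins the next pass with the coefficient update. It is evaluated at \((\beta_0^{(t)}, \beta^{(t)})\), before any other coordinate changes. Its smooth argument is therefore \(\beta_j - \gamma_j \partial_j \tilde F(\beta^{(t)})\), and the proximal map is unchanged. The update is thus the coordinate proximal step for \(\tilde F\) along \(e_j\), taken with step size \(\gamma_j = 1/H_{jj}\). The curvature \(H_{jj}\) is still computed with the intercept held fixed.

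\emph{Step 3: the step-size comparison.} Here I need \(\tilde H_{jj}\) to be the second derivative of \(\tilde F\). I would differentiate \(\partial_j \tilde F(\beta) = \partial_j F(\beta_0^\star(\beta), \beta)\), which holds on the whole neighborhood. Using
\[
  \partial_k \beta_0^\star = -\frac{H_{0k}}{H_{00}},
\]
obtained by differentiating \(\partial_0 F(\beta_0^\star(\beta), \beta) = 0\), this recovers Equation~\ref{eq-schur}. Its diagonal is Equation~\ref{eq-schur-diag}, so \(\tilde H_{jj} = H_{jj}(1-\rho_{0j}^2)\). Cauchy--Schwarz gives \(\rho_{0j}^2 \le 1\), hence \(\tilde H_{jj} \le H_{jj}\). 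Since \(\tilde H_{jj} > 0\) by hypothesis, both quantities are positive, and taking reciprocals reverses the inequality: \(1/H_{jj} \le 1/\tilde H_{jj}\).

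\emph{Expected obstacle.} The only delicate point is justifying differentiability of \(\beta_0^\star\), which is needed so that \(\tilde F\) is \(C^2\) and the Schur formula is its true Hessian. This is exactly what \(H_{00} > 0\) provides through the implicit function theorem. Strict convexity alone would not suffice, since the curvature could vanish at a strict minimizer. The remaining steps are direct computation.
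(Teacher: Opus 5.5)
Your proposal is correct and follows the paper's proof essentially step for step: the implicit function theorem at \(H_{00} > 0\), the envelope identity \(\partial_j \tilde F(\beta) = \partial_j F(\beta_0^\star(\beta), \beta)\), and then the Schur diagonal and reciprocals. You also rederive Equation~\ref{eq-schur} by differentiating that identity, which the paper simply cites.

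One small correction concerns your step-size comparison. The inequality \(\tilde H_{jj} \le H_{jj}\) follows from \(H_{0j}^2/H_{00} \ge 0\), that is, from \(\rho_{0j}^2 \ge 0\). It does not follow from the Cauchy--Schwarz bound \(\rho_{0j}^2 \le 1\), which gives the lower bound \(\tilde H_{jj} \ge 0\) instead. That lower bound is in any case superseded by the hypothesis \(\tilde H_{jj} > 0\).
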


\begin{proof}
Let \(\beta_0^\star(\beta)\) be the conditional minimizer near
\(\beta^{(t)}\). It exists at \(\beta^{(t)}\) by assumption and is
unique by strict convexity. Since \(H_{00} > 0\) there, the implicit
function theorem applied to \(\partial_0 F = 0\) makes \(\beta_0^\star\)
differentiable locally. Thus
\(\tilde F(\beta) = F(\beta_0^\star(\beta), \beta)\) near
\(\beta^{(t)}\). Differentiating through \(\beta_0^\star\) and using the
stationarity condition
\(\partial_0 F(\beta_0^\star(\beta), \beta) = 0\), \[
  \partial_j \tilde F(\beta)
  = \partial_j F(\beta_0^\star(\beta), \beta)
    + \partial_0 F\big(\beta_0^\star(\beta), \beta\big)\partial_j \beta_0^\star(\beta)
  = \partial_j F\big(\beta_0^\star(\beta), \beta\big).
\] At the conditional minimizer, \(\partial_0 F = 0\), so the
intercept's dependence on \(\beta\) contributes nothing to the
derivative. Evaluating at \(\beta_0^{(t)} = \beta_0^\star(\beta^{(t)})\)
gives the gradient equality. For the step size,
\(\tilde H_{jj} = H_{jj}(1 - \rho_{0j}^2) \leq H_{jj}\), so
\(1/H_{jj} \leq 1/\tilde H_{jj}\)~(Equation~\ref{eq-schur-diag}). The
update therefore uses a step size no larger than the profiled local
model's \(1/\tilde H_{jj}\). This local comparison alone does not
guarantee descent of \(\tilde F\) after a finite update.
\end{proof}

The proposition establishes a gradient identity at the current
coefficients. It does not turn a whole coordinate sweep into
optimization of \(\tilde F\): the coefficient steps still use
\(H_{jj}\), and later updates can move the intercept away from its
conditional optimum. Before examining that drift, we ask how closely one
Newton step reproduces the benefit of an exact solve.

\paragraph{How closely one Newton step matches an exact
solve}\label{how-closely-one-newton-step-matches-an-exact-solve}

A full Newton step leaves an intercept error that is quadratic in the
current error, provided the curvature is positive and varies smoothly
nearby. Write \(g(t) = \partial_0 F(t, \beta)\), let
\(\beta_0^* = \beta_0^*(\beta)\) satisfy \(g(\beta_0^*) = 0\), and set
\(H_{00} = g'(\beta_0)\). Taylor-expanding \(g(\beta_0^*)\) around the
current iterate \(\beta_0\) gives
\begin{equation}\protect\phantomsection\label{eq-intercept-taylor}{
  \partial_0 F(\beta_0, \beta)
  = H_{00}(\beta_0 - \beta_0^*)
    - \tfrac{1}{2}\partial_{000} F(\xi)(\beta_0 - \beta_0^*)^2,
}\end{equation} for some \(\xi\) between \(\beta_0\) and
\(\beta_0^*(\beta)\). Dividing by \(H_{00}\) and recognizing
\(\beta_0^{\text{Newton}} = \beta_0 - \partial_0 F/H_{00}\) on the left
rewrites Equation~\ref{eq-intercept-taylor} as the standard Newton
identity for the iterate error:
\begin{equation}\protect\phantomsection\label{eq-newton-iterate}{
  \bigl|\beta_0^{\text{Newton}} - \beta_0^*(\beta)\bigr|
  = \frac{|\partial_{000} F(\xi)|}{2 |H_{00}|}(\beta_0 - \beta_0^*)^2.
}\end{equation} For a local bound, suppose that \(g'(t) \ge m > 0\) and
\(|g''(t)| \le M\) between \(\beta_0\) and \(\beta_0^*\). The mean-value
theorem gives \(m|\beta_0 - \beta_0^*| \le |g(\beta_0)|\), so
Equation~\ref{eq-newton-iterate} yields
\begin{equation}\protect\phantomsection\label{eq-newton-error}{
  \bigl|\beta_0^{\text{Newton}} - \beta_0^*(\beta)\bigr|
  \le \frac{M}{2m^3} \bigl|\partial_0 F(\beta_0, \beta)\bigr|^2.
}\end{equation} Here we may take \(M\) as a bound on
\(|\partial_{000}F| = |(1/n)\sum_i f'''(\eta_i)|\), and hence as no
larger than \(\sup |f'''|\). For squared loss, \(f''' \equiv 0\) and the
Newton step is exact. For logistic regression, \(|f'''|\) is bounded by
a small constant. Thus, wherever the intercept curvature stays bounded
away from zero, the residual shrinks quadratically as the current
intercept gradient vanishes.

A small intercept error matters only through its effect on the
coefficient update. We therefore carry this bound into the coefficient
gradient. Let \(r_0 = \beta_0^{\text{Newton}} - \beta_0^*(\beta)\) be
the Newton residual, with \(|r_0|\) bounded by
Equation~\ref{eq-newton-error}. Taylor-expanding \(\partial_j F\) in
\(\beta_0\) around the conditional minimizer \(\beta_0^*(\beta)\), using
\(\partial_j F(\beta_0^*, \beta) = \partial_j \tilde F(\beta)\) and
\(H_{j0}^* = \partial_{0j} F(\beta_0^*, \beta)\), yields
\begin{equation}\protect\phantomsection\label{eq-newton-coupling}{
  \partial_j F\bigl(\beta_0^{\text{Newton}}, \beta\bigr)
  = \partial_j \tilde F(\beta) + H_{j0}^*r_0 + O(r_0^2),
}\end{equation} The coefficient gradient differs from the profiled
gradient by \(b_j = H_{j0}^*r_0 + O(r_0^2)\). For a penalized problem,
we also need to know how this difference changes the actual coordinate
step. We isolate its effect by comparing both steps at the same positive
curvature \(h_j\). Write
\(G_\lambda(\beta) = \sum_j G_{\lambda,j}(\beta_j)\) and define the
profiled step \[
  d_j^{\mathrm{prof}}
  = \operatorname{prox}_{G_{\lambda,j}/h_j} \left( \beta_j - \frac{\partial_j\tilde F(\beta)}{h_j} \right)
    - \beta_j.
\] The step \(d_j^{\mathrm{N}}\) after the Newton intercept update
replaces \(\partial_j\tilde F\) by \(\partial_j\tilde F + b_j\).
Nonexpansiveness of the proximal map gives
\begin{equation}\protect\phantomsection\label{eq-newton-good-enough}{
  |d_j^{\mathrm{N}} - d_j^{\mathrm{prof}}| \leq \frac{|b_j|}{h_j},
  \qquad
  |b_j| \ll h_j |d_j^{\mathrm{prof}}|
}\end{equation} as an absolute perturbation bound and a sufficient
criterion for a small relative step error, respectively. The relevant
comparison is with the composite coordinate residual
\(h_j d_j^{\mathrm{prof}}\). At an active lasso optimum,
\(\partial_j\tilde F = -\lambda\operatorname{sign}(\beta_j)\), so the
smooth gradient itself generally does not vanish.

Under the local conditions of Equation~\ref{eq-newton-error}, \(b_j\) is
\(O(|\partial_0 F|^2)\) when the cross derivatives remain bounded. If
\(h_j\) stays bounded away from zero, the profiled coordinate step
remains first order in the iterate error, and \(\partial_0 F\) vanishes
at least linearly, the Newton perturbation is second order and becomes
negligible relative to that step. This is a conditional local
comparison: when the profiled step is zero or vanishes faster, the
absolute bound in Equation~\ref{eq-newton-good-enough} still applies,
but it supplies no relative guarantee.

Far from the optimum, the full-step approximation can fail when the
third derivative is large, the intercept curvature approaches zero, or
the intercept gradient is large. Poisson regression, where
\(f''' = e^\eta\), can exhibit this cold-start behavior. Armijo
backtracking safeguards descent, but a damped step has a different error
bound from a full Newton step. With \(e_0 = \beta_0 - \beta_0^*(\beta)\)
and accepted step fraction \(\alpha \in (0,1]\), the damped intercept
error is
\begin{equation}\protect\phantomsection\label{eq-damped-newton-error}{
  \beta_0^+ - \beta_0^*(\beta) = (1 - \alpha)e_0 + \alpha r_0.
}\end{equation} The quadratic error bound applies when the full step is
accepted. For \(\alpha < 1\), sufficient decrease alone does not remove
the first-order term \((1 - \alpha)e_0\).

The error bound and coupling correction describe a single Newton step at
the current iterate
(Equation~\ref{eq-newton-error}--Equation~\ref{eq-newton-coupling}). The
experiments in Section~\ref{sec-results} show how those gains accumulate
across an outer solve.

\paragraph{What a gradient step leaves
behind}\label{what-a-gradient-step-leaves-behind}

The gradient strategy uses the curvature bound \(L_0\) in place of the
current intercept curvature \(H_{00}\). Its step can therefore be much
smaller than the Newton step. When \(H_{00} > 0\), substituting
\(\alpha = 1/L_0\) into Equation~\ref{eq-gradient-response} gives
\begin{equation}\protect\phantomsection\label{eq-grad-residual}{
  \partial_j F^{\text{grad}}
  = \partial_j F - \frac{H_{00}}{L_0}\frac{H_{j0}}{H_{00}}\partial_0 F
    + r_j(1/L_0),
}\end{equation} so, to first order, the gradient strategy removes a
fraction \(H_{00}/L_0 \in (0, 1]\) of the coupling component
\((H_{j0}/H_{00})\partial_0 F\) that Newton targets at leading order.
The exact strategy instead reaches the true profiled gradient by
conditional minimization. In a frozen quadratic, \(r_j(1/L_0) = 0\) and
the correction fraction is exact. For a nonlinear loss, the
leading-order residual relative to the local Schur-corrected target is
\((1 - H_{00}/L_0)(H_{j0}/H_{00})\partial_0 F + r_j(1/L_0)\). When
\(L_0 = \infty\), the intercept step is zero, so no coupling is removed.

The leading-order correction is fully effective only when
\(H_{00} \approx L_0\), that is, when the Lipschitz bound is tight at
the current iterate. Whenever \(H_{00}\) is appreciably below
\(L_0\)---as happens for logistic regression once predictions become
confident, or whenever the weights \(w_i\) are concentrated on a small
subset of observations---the gradient strategy leaves a fraction
\(1 - H_{00}/L_0\) of the leading-order coupling correction unapplied.
The remaining intercept error continues to affect the coefficient
gradients and can slow subsequent passes.

For Poisson loss, this strategy has no finite global curvature bound to
use: \(L_0 = \infty\) gives a zero intercept step at every finite
iterate. The coefficient gradients then retain the full effect of the
intercept error. This limitation concerns the strategy based on a global
bound; a gradient method with an adaptive step size or line search need
not leave the intercept fixed.

\paragraph{Why extra intercept iterations may not pay
off}\label{why-extra-intercept-iterations-may-not-pay-off}

The comparisons above hold the coefficients fixed. During a coordinate
sweep, however, each coefficient update can move the intercept's
conditional optimum. Even an exact intercept solve therefore guarantees
the profiled gradient only for the first coefficient update of the next
pass.

To measure this drift, expand \(\partial_0 F\) in the \(\beta_j\)
direction at the conditional minimizer. With
\(\partial_{0j} F = H_{0j}\) and
\(\partial_0 F(\beta_0^{(t)}, \beta^{(t)}) = 0\), a single coefficient
update \(\beta_j^{(t)} \mapsto \beta_j^{(t)} + \delta_j\) gives, when
the relevant Hessian entries are locally Lipschitz,
\begin{equation}\protect\phantomsection\label{eq-intercept-drift}{
  \partial_0 F \big(\beta_0^{(t)}, \beta^{(t)} + \delta_j e_j \big)
  = H_{0j}\delta_j + O(\delta_j^2).
}\end{equation} When this change is nonzero, the intercept is no longer
conditionally optimal, and Proposition~\ref{prp-profile-equiv} no longer
guarantees the profiled gradient for the next coefficient update. The
drift depends on both the coupling and the coefficient step; it need not
occur after every update.

The exact strategy spends extra work optimizing an intercept that the
next coefficient update may immediately move away from its optimum. Let
\(k_0\) count the inner work units needed at the end of a pass to reach
\(|\partial_0 F| < \varepsilon\), including one residual check when the
intercept already meets this tolerance. Across the imbalance levels in
our experiments, \(k_0\) ranged from one to five, with each unit costing
\(O(n)\) work~(Figure~\ref{fig-per-pass-cost}). With the coefficients
held fixed, Newton's method converges rapidly near the intercept's
conditional minimizer: under its usual local convergence conditions,
reaching tolerance \(\varepsilon\) requires
\(O(\log\log(1/\varepsilon))\) further iterations. This rate does not
account for the work needed to enter that neighborhood, nor does it
guarantee that the intercept starts there after each coefficient pass.

Together, the Newton error bound and the drift calculation explain why
one Newton step can be enough: locally, it leaves only a second-order
intercept error, while the next coefficient update can introduce fresh
error even after an exact solve. These calculations do not establish how
much the extra inner iterations affect convergence over a full solve.
The experiments in \hyperref[sec-per-pass-cost]{Supplement S4.3} supply
that comparison: on our test problems, the additional work produces
little improvement in outer convergence.

The tradeoff can differ in other settings. Along a warm-started
\(\lambda\) path, a small initial intercept gradient can reduce \(k_0\)
to one or two work units, leaving little overhead for the exact
strategy. Strong coupling makes any remaining intercept error more
consequential for the coefficient gradients
(Equation~\ref{eq-newton-coupling}), but it also increases the drift
caused by coefficient updates (Equation~\ref{eq-intercept-drift}).
Coupling alone therefore does not determine whether extra intercept
iterations are worthwhile.

IRLS and prox-Newton methods also require a separate comparison because
they update the coefficients through a quadratic surrogate. With the
coefficients fixed, one full Newton step solves that surrogate's
intercept subproblem exactly. Updating the coefficients can still move
its optimal intercept, and rebuilding the surrogate does not by itself
eliminate this drift. \textbf{LIBLINEAR} resolves the intercept within
its frozen quadratic, whereas \textbf{BlitzL1} applies an exact-strategy
loop to the original loss after the prox-Newton step
(Section~\ref{sec-irls}). The cost and benefit depend on where these
solves occur in the outer algorithm. Our recommendation to avoid extra
intercept iterations concerns the cold-start, direct-CD experiments
studied here; it is not a general claim about these other settings.

\subsubsection{Local-rate
interpretation}\label{local-rate-interpretation}

To follow the correction in Equation~\ref{eq-grad-residual} through
successive passes, we study a local quadratic model. For the lasso,
freeze the Hessian at the optimum and restrict the coefficients to the
active set \(A\) with their signs fixed. Assume this restricted Hessian
is positive definite and define the collective coupling
\begin{equation}\protect\phantomsection\label{eq-collective-coupling}{
  \kappa = \frac{H_{0A}H_{AA}^{-1}H_{A0}}{H_{00}} \in [0, 1).
}\end{equation} If each pass first minimizes the active coefficient
block and then updates the intercept, Newton contracts the intercept
error by \(\kappa\), whereas the gradient strategy contracts it by
\(1 - (H_{00}/L_0)(1 - \kappa)\). As \(\kappa \to 1\), the ratio of the
required pass counts approaches \(L_0/H_{00}\). Thus, coupling
determines when the intercept mode controls convergence, and the
curvature ratio determines how much slower the gradient strategy becomes
once it does.

This is a local explanation, not a global iteration-complexity theorem.
A coordinate sweep retains coefficient error modes that the exact block
solve eliminates. \hyperref[sec-s-rate]{Supplement S1.1} gives the block
calculation and validates the full frozen coordinate-sweep model against
nonlinear convergence rates. The centering experiment there varies
coupling while holding the intercept curvature fixed, exposing the
transition to the \(L_0/H_{00}\) plateau.

\subsubsection{A safe Newton variant}\label{a-safe-newton-variant}

The Newton-coupling bias (Equation~\ref{eq-newton-coupling}) identifies
the cold-start regime---large \(|\partial_0 F|\) combined with rapidly
varying \(f''\)---as the setting where an unguarded Newton step on the
intercept perturbs the next coefficient pass enough to matter. The fix
is to wrap the Newton direction \(-\partial_0 F / H_{00}\) in an Armijo
backtracking line search: accept the full step (\(\alpha = 1\)) if it
decreases \(F\) enough. Otherwise, halve \(\alpha\) until it does. Our
Newton strategy includes this safeguard. When it accepts the full step,
the safeguard adds constant work; when the linear model is inadequate,
the number of backtracking trials grows logarithmically. To isolate the
effect of the guard, we compare the guarded variant directly with a bare
Newton step in Figure~\ref{fig-cold-start} and
Figure~\ref{fig-cold-start-poisson} and show when the guard changes the
result.

\subsubsection{Extension to vector
intercepts}\label{sec-multinomial-main}

The same mechanism extends to multinomial logistic regression. Under a
reference-class parameterization, the intercept becomes a
\((K - 1)\)-vector, and its Hessian is \[
  H_{00}
  = \frac{1}{n}\sum_{i=1}^n \left\{ \operatorname{diag}(p_{i,1:K-1}) - p_{i,1:K-1}p_{i,1:K-1}^{\top} \right\}.
\] Rare free classes create low-curvature coordinate directions. A rare
reference class creates a low-curvature common-shift direction. A global
Böhning majorant therefore attenuates the intercept correction in
precisely the modes associated with rare classes. A single
Armijo-guarded Newton block step adapts to those modes, while the exact
strategy adds inner iterations without a visible improvement in the
outer trajectory. \hyperref[sec-vector-intercepts]{Supplement S2.1}
derives the block update, identifies the relevant Rayleigh quotients,
and tests this prediction on a grid of synthetic problems.

The pattern appears on two real datasets with contrasting dimensions
(Figure~\ref{fig-multinomial-summary}). Yeoh2002 has \(n = 248\),
\(p = 12\,625\), and six classes, while StatLog Shuttle has
\(n = 58\,000\), \(p = 9\), and seven classes. We set
\(\lambda = 0.05\lambda_{\max}\) in both problems, where
\(\lambda_{\max}\) is the smallest penalty for which all fitted slopes
are zero.

On Yeoh2002, the gradient strategy needs roughly four times as many
passes as Newton to bring the relative suboptimality bound below
\(10^{-4}\). On Shuttle, whose rarest classes contain only ten and
thirteen observations, it fails to reach \(10^{-4}\) within 1,000
passes; Newton and exact reach that threshold in \(20\)--\(21\) passes.

\begin{figure}

\centering{

\pandocbounded{\includegraphics[keepaspectratio]{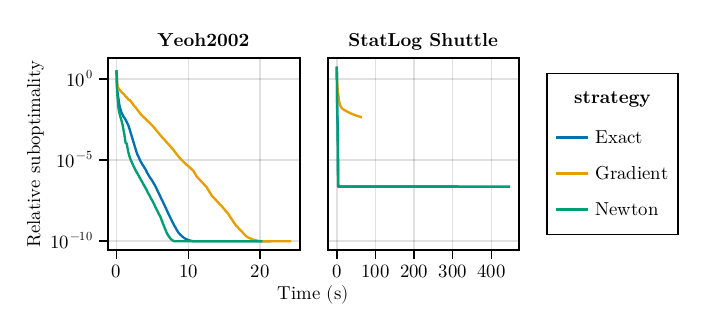}}

}

\caption{\label{fig-multinomial-summary}Multinomial logistic regression
on two real datasets. The panels show a relative suboptimality bound
versus time for Yeoh2002 (\(p \gg n\)) and StatLog Shuttle (\(n \gg p\))
at \(\lambda = 0.05\lambda_{\max}\) under cyclic coordinate descent.}

\end{figure}%

\subsection{Connection to IRLS}\label{sec-irls}

So far, we have considered \emph{direct} coordinate descent on \(F\),
which recomputes the weights \(w_i = f''(\eta_i; y_i)\) after every
coefficient update. A widely used alternative---and the one used by
\textbf{glmnet}~(Friedman et al. 2010)---is iteratively reweighted least
squares (IRLS)~(McCullagh and Nelder 1989). At the current iterate
\((\beta_0^{(t)}, \beta^{(t)})\), IRLS forms the weighted least squares
quadratic surrogate \[
  \tilde F^{(t)}(\beta_0, \beta)
  = \frac{1}{2n} \sum_{i=1}^n w_i^{(t)} \bigl( z_i^{(t)} - \beta_0 - x_i^\top \beta \bigr)^2,
\] with working response
\(z_i^{(t)} = \eta_i^{(t)} - r_i^{(t)} / w_i^{(t)}\). It then solves
\(\tilde F^{(t)}\) by inner CD and re-linearizes. In practice, solvers
use two kinds of weights: \emph{local} weights
\(w_i^{(t)} = f''(\eta_i^{(t)}; y_i)\) (\textbf{glmnet}'s
\texttt{type.logistic\ =\ "Newton"}), and \emph{MM-majorized} weights
replacing \(w_i^{(t)}\) by a global upper bound on \(f''\) (e.g.,
\(w_i \equiv 1/4\) for logistic regression, as in \textbf{glmnet}'s
\texttt{type.logistic\ =\ "modified.Newton"}, following the MM tradition
of Hunter and Lange (2004)).

Within an inner solve with local weights, \(\tilde F^{(t)}\) is
quadratic in \(\beta_0\) with curvature
\((1/n)\sum_i w_i^{(t)} = H_{00}\), so the inner Lipschitz constant
satisfies \(L_0^{\text{inner}} = H_{00}\). The fraction in
Equation~\ref{eq-grad-residual} is therefore
\(H_{00} / L_0^{\text{inner}} = 1\). Because a Newton step on a
quadratic is exact, the gradient, Newton, and exact strategies all
coincide on the inner subproblem. Thus, the distinction that we analyze
is specific to direct-CD solvers, such as \textbf{skglm}, and does not
arise within a local-weight IRLS inner solve.

The strategies no longer coincide under MM majorization. Replacing
\(w_i^{(t)}\) by the constant majorant \(1/4\) in the logistic case, the
closed-form inner intercept update on \(\tilde F^{(t)}\) is \[
  \beta_0 \gets \beta_0 + \frac{4}{n} \sum_{i=1}^n (y_i - p_i)
                = \beta_0 - \frac{1}{L_0} \partial_0 F, \qquad L_0 = 1/4,
\] which is the gradient strategy with a global Lipschitz step. The
identity holds at the linearization point, where \(p_i\) is evaluated,
so it describes the first inner intercept update after each
re-linearization; once inner CD has moved \(\beta\), subsequent inner
updates drive the intercept to the conditional minimum of the frozen
majorant quadratic (Table~\ref{tbl-classification}) rather than taking
further \(1/L_0\) steps on the original loss.

The correction transmitted at each re-linearization is nevertheless
governed by the same attenuation. The Schur-coupling residual
(Equation~\ref{eq-grad-residual}) thus applies directly to
\textbf{glmnet}'s \texttt{modified.Newton} mode: under response
imbalance, the leading-order correction fraction \(H_{00}/L_0\) shrinks
and the intercept update slows, approaching a true stall only as fitted
probabilities reach the boundary.

MM majorization was introduced to stabilize the working response \(z_i\)
when \(w_i \to 0\)---the same regime in which the resulting update
converges slowly. For Poisson loss, no global upper bound on \(f''\)
exists, so this MM construction is unavailable. \textbf{glmnet} instead
uses local-weight IRLS, where the three strategies coincide as described
above.

Changing from local weights to the constant majorant also changes every
coefficient's curvature and cross-curvature in the surrogate. A
within-package mode comparison therefore isolates the IRLS weight
scheme, not the intercept update alone. The intercept correction is the
component of that scheme analyzed by Equation~\ref{eq-grad-residual};
Figure~\ref{fig-skglm-controlled} compares changes to intercept handling
within a direct-CD solver.

Across IRLS linearizations, the weights \(w_i^{(t)}\) change with
\(\eta^{(t)}\), so even a fully resolved inner intercept becomes stale
once the weights are recomputed. This outer-level drift is analogous to
but distinct from the within-pass drift of
Equation~\ref{eq-intercept-drift}, which is specific to direct CD.

The appropriate intercept strategy therefore depends on the solver
family. The production solvers fall into the classes in
Table~\ref{tbl-classification}. Of the CD-based packages that we
surveyed, only \textbf{skglm} 0.5 and earlier apply a conservative
gradient update directly to the original loss. The remaining
solvers---\textbf{glmnet}, \textbf{biglasso}, \textbf{adelie},
\textbf{LIBLINEAR}, \textbf{BlitzL1}, \textbf{Lasso.jl}, and
\textbf{ncvreg}~(Breheny and Huang 2011)---all linearize before running
CD, using either IRLS or proximal-Newton~(Lee et al. 2014) on a
quadratic upper bound.\footnote{Sources for the two solvers not in
  Table~\ref{tbl-classification}: \textbf{ncvreg} recomputes local
  weights \texttt{w{[}i{]}\ =\ fmax2(mu\ *\ (1\ -\ mu),\ 0.0001)} at
  each linearization and updates the intercept as the
  weighted-least-squares conditional minimum of the surrogate
  (\texttt{b0{[}l{]}\ =\ xwr\ /\ xwx\ +\ a0},
  \url{https://github.com/pbreheny/ncvreg/blob/77a3d83/src/glm.c\#L198-L237});
  \textbf{Lasso.jl} fits GLMs by \textbf{glmnet}-style IRLS, an outer
  re-linearization loop with working weights and residuals around an
  inner CD solve
  (\url{https://github.com/JuliaStats/Lasso.jl/blob/cffedae/src/coordinate_descent.jl\#L696-L704}).}
The mode switches within \textbf{glmnet} (\texttt{Newton} versus
\texttt{modified.Newton}) and \textbf{biglasso} (\texttt{Newton} versus
\texttt{MM}) isolate the effect of the weight scheme analyzed in
Equation~\ref{eq-grad-residual}.

We exclude two related implementations from the table.
\textbf{Scikit-learn}'s L1-logistic path either delegates to
\textbf{LIBLINEAR} or uses SAGA, a variance-reduced stochastic-gradient
method outside the CD framework analyzed here. \textbf{celer}'s
L1-logistic estimator does not support an intercept in version 0.7.4,
and its documentation directs users to \textbf{skglm} for logistic
regression.

\begin{longtable}[]{@{}
  >{\raggedright\arraybackslash}p{(\linewidth - 4\tabcolsep) * \real{0.3750}}
  >{\raggedright\arraybackslash}p{(\linewidth - 4\tabcolsep) * \real{0.1750}}
  >{\raggedright\arraybackslash}p{(\linewidth - 4\tabcolsep) * \real{0.4500}}@{}}
\caption{Classification of CD-based production solvers by algorithmic
family and intercept treatment. IRLS and prox-Newton rows conditionally
minimize a frozen quadratic surrogate, not the original GLM loss;
\textbf{BlitzL1} alone additionally applies a post-step one-dimensional
Newton loop on the original
loss.}\label{tbl-classification}\tabularnewline
\toprule\noalign{}
\begin{minipage}[b]{\linewidth}\raggedright
Solver
\end{minipage} & \begin{minipage}[b]{\linewidth}\raggedright
Family
\end{minipage} & \begin{minipage}[b]{\linewidth}\raggedright
Intercept treatment
\end{minipage} \\
\midrule\noalign{}
\endfirsthead
\toprule\noalign{}
\begin{minipage}[b]{\linewidth}\raggedright
Solver
\end{minipage} & \begin{minipage}[b]{\linewidth}\raggedright
Family
\end{minipage} & \begin{minipage}[b]{\linewidth}\raggedright
Intercept treatment
\end{minipage} \\
\midrule\noalign{}
\endhead
\bottomrule\noalign{}
\endlastfoot
\textbf{skglm} 0.5 & Constant-Lipschitz CD & Gradient on original loss
(step size \(1/4\)) \\
\textbf{glmnet} (\texttt{type.logistic\ =\ "modified.Newton"}) & MM-IRLS
& Conditional minimum of global-majorant quadratic \\
\textbf{biglasso} (\texttt{alg.logistic\ =\ "MM"}) & MM-IRLS &
Conditional minimum of global-majorant quadratic \\
\textbf{glmnet} (\texttt{type.logistic\ =\ "Newton"}) & Local-IRLS &
Conditional minimum of frozen local quadratic \\
\textbf{biglasso} (\texttt{alg.logistic\ =\ "Newton"}, default) &
Local-IRLS & Conditional minimum of frozen local quadratic \\
\textbf{adelie} & Local-IRLS & Conditional minimum of weighted, centered
quadratic \\
\textbf{LIBLINEAR} (newGLMNET, L1-LR) & Prox-Newton & Conditional
minimum of frozen quadratic; unregularized \\
\textbf{BlitzL1} & Prox-Newton & Post-step Newton convergence on
original loss \\
\end{longtable}

\textbf{LIBLINEAR} (newGLMNET) and \textbf{BlitzL1} treat the intercept
differently at the boundary of each prox-Newton step. \textbf{LIBLINEAR}
leaves the intercept unregularized inside the inner CD solve. As the
quadratic program converges, the intercept reaches the conditional
minimum of the frozen quadratic, which generally differs from that of
the original loss. \textbf{BlitzL1} additionally runs an explicit
one-dimensional Newton loop to convergence on the original loss after
each subproblem and therefore applies the exact strategy at that
boundary.

The within-pass drift of Equation~\ref{eq-intercept-drift} penalizes the
exact strategy within direct CD because the inner Hessian drifts between
iterations. That critique does not apply at the prox-Newton boundary,
where the next subproblem's Hessian is recomputed anyway.

For direct CD, we therefore recommend a Newton update of the intercept.
Within each IRLS or prox-Newton quadratic surrogate, we recommend
optimizing the intercept fully, with an optional correction on the
original loss at the linearization boundary. The surveyed production
solvers follow the first two recommendations; \textbf{BlitzL1} also
follows the third.

In Figure~\ref{fig-irls-comparison}, we compare direct CD and IRLS on a
synthetic logistic problem. Throughout the synthetic experiments,
\(\mu_0\) denotes the baseline mean response at \(x = 0\)---a
positive-class probability for binomial models and a rate for Poisson
models---and \(s\) is the number of nonzero slopes in each generated
coefficient vector. Once the slopes enter the linear predictor,
\(\mu_0\) need not equal the empirical response mean.

\begin{figure}

\centering{

\pandocbounded{\includegraphics[keepaspectratio]{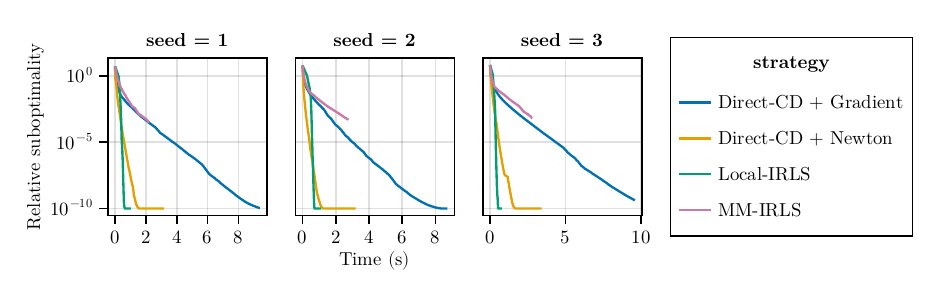}}

}

\caption{\label{fig-irls-comparison}Direct CD and IRLS on an imbalanced
logistic design. The panels show a relative suboptimality bound versus
time for \(\mu_0 = 0.99\), \(n = 500\), \(p = 1000\), \(s = 10\),
\(\lambda = 0.05\lambda_{\max}\), and three random seeds. The Local-IRLS
curve is shown once because the three intercept strategies coincide on
the inner quadratic.}

\end{figure}%

Local-IRLS converges fastest under \(\mu_0 = 0.99\), crossing the
\(10^{-8}\) bound in about 25 passes across the three seeds
(Figure~\ref{fig-irls-comparison}). Direct-CD with Newton takes roughly
twice as many passes, while Direct-CD with the gradient strategy takes
several hundred. MM-IRLS remains above \(10^{-4}\) after 200 passes.
Within Local-IRLS, the three strategies coincide exactly: once
\(L_0^{\text{inner}} = H_{00}\), they share the single inner update
\(-\partial_0 \tilde F^{(t)} / H_{00}\) and produce identical iterates.
The figure therefore plots only one curve. Unit tests verify exact
agreement of the intercept and coefficient paths.

Local-IRLS also outpaces Direct-CD with the Newton strategy on this
problem. This difference concerns the two algorithmic families rather
than their intercept strategies: IRLS-CD uses a constant inner Hessian
and has locally quadratic outer convergence, which favors it on
canonical-link GLMs where the linearization is tight. Direct-CD has the
edge when \(f'''\) is large (e.g.\\
Poisson with extreme rates, where each linearization is loose far from
the current iterate), when saturation makes the working response
unstable, or when screening and working-set methods (standard in modern
\(\ell_1\) solvers) work poorly with re-linearization~(Bertrand et al.
2022). This family-level difference does not change the intercept
recommendation: within either family, use local Newton curvature rather
than a global Lipschitz bound.

\section{Results}\label{sec-results}

\subsection{Methodology and reproducibility}\label{sec-methodology}

We run the internal experiments using our Julia implementations of the
solvers and report outer passes as the primary measure of work. Unless
otherwise noted, synthetic designs use cyclic ordering, fixed seeds,
standardized features, and an AR(1) correlation of \(\rho = 0.6\). We
evaluate comparable trajectories against the strongest feasible dual
value found for their shared problem instance, so the displayed
suboptimality bound does not depend on which strategy produced the
primal iterate. We use pinned datasets in the real-data experiments and
follow the same shared-reference principle in the production comparison.
We load cached outputs for computationally expensive experiments. The
toy examples in Figure~\ref{fig-parametric} and
Figure~\ref{fig-schur-example} run when the notebook renders. We pin the
Julia, R, and Python environments in the repository and document the
problem construction, tolerances, budgets, data preparation, hardware,
and reproduction protocol in \hyperref[sec-s-methodology]{Supplement
S3}.

\subsection{Simulated data}\label{simulated-data}

We begin with the simplest consequence of Section~\ref{sec-theory}. The
gradient strategy's intercept step is the local Newton direction scaled
by \(H_{00}/L_0\). As imbalance pushes \(\mu_0\) toward one, \(H_{00}\)
shrinks while \(L_0 = 1/4\) remains fixed. We therefore expect the
intercept update to slow.

\subsubsection{Synthetic trajectories and
safeguards}\label{synthetic-trajectories-and-safeguards}

We compare the strategies on a standardized binary-logistic design with
\(n = 500\), \(p = 1000\), \(s = 10\), AR(1) correlation \(\rho = 0.6\),
cyclic CD, and \(\lambda = 0.05\lambda_{\max}\).

\begin{figure}

\centering{

\pandocbounded{\includegraphics[keepaspectratio]{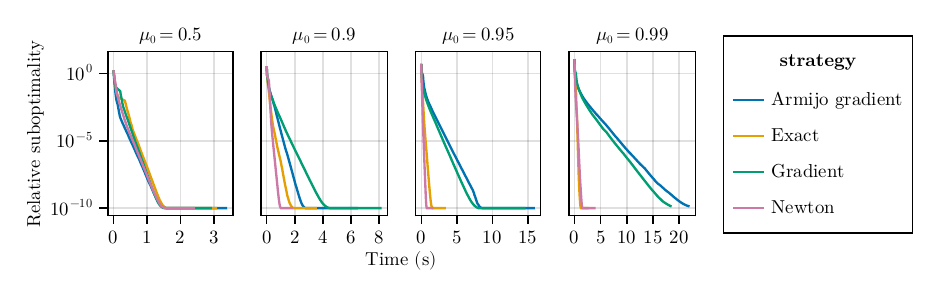}}

}

\caption{\label{fig-mu-extreme}Imbalance sweep on the standardized
logistic design. The panels show a relative suboptimality bound versus
time for the intercept-update variants in the legend as \(\mu_0\)
varies, with \(n = 500\), \(p = 1000\), \(s = 10\),
\(\lambda = 0.05\lambda_{\max}\), and cyclic CD.}

\end{figure}%

The Schur-coupling residual (Equation~\ref{eq-grad-residual}) predicts
that the gradient strategy will slow as \(H_{00}/L_0\) approaches zero
when \(\mu_0\) approaches one. At \(\mu_0 = 0.99\), the gradient
strategy needs about fifteen times as many passes as Newton to reach
\(10^{-8}\) in our experiment, whereas the Newton and exact strategies
remain close across the sweep (Figure~\ref{fig-mu-extreme}). Their
agreement supports the prediction that exact profiling adds little
progress in the outer solve beyond one Newton step
(Equation~\ref{eq-intercept-drift});
\hyperref[sec-per-pass-cost]{Supplement S4.3} compares their inner work.

\begin{figure}

\centering{

\pandocbounded{\includegraphics[keepaspectratio]{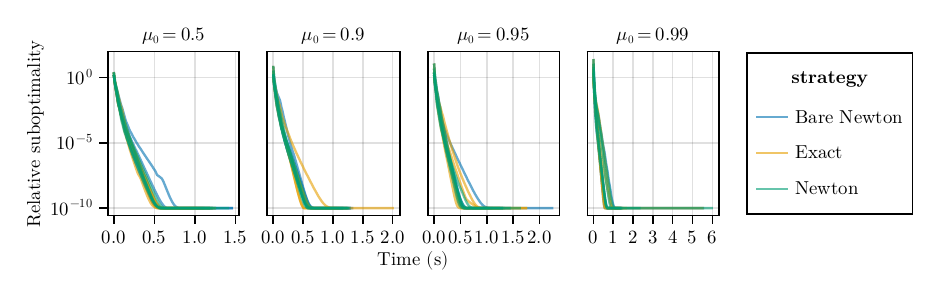}}

}

\caption{\label{fig-cold-start}Logistic cold-start comparison. The
panels show a relative suboptimality bound versus time for bare Newton,
Armijo-guarded Newton, and the exact strategy across the \(\mu_0\)
levels of Figure~\ref{fig-mu-extreme}. Each panel shows five simulated
replicates per strategy.}

\end{figure}%

The Newton-coupling bias (Equation~\ref{eq-newton-coupling}) leads to a
second prediction: at a cold start, a large \(|\partial_0 F|\) can make
an unguarded Newton step on the intercept unstable. To test this
prediction directly, we run bare Newton (without Armijo backtracking)
alongside the guarded variant and the exact strategy in
Figure~\ref{fig-cold-start}.

For logistic regression, the three trajectories are indistinguishable,
even in the most imbalanced case at \(\mu_0 = 0.99\). The Armijo guard
rarely activates: most guarded runs are identical bit for bit to their
bare Newton counterparts, and the remaining pairs begin to differ only
late in the solve, never at cold start. Every guarded and bare Newton
pair crosses the shared \(10^{-8}\) bound on the same pass. The exact
strategy differs from them by only a handful of passes.

Thus, the Armijo safeguard does not improve the logistic cold starts
tested here: the full Newton step is accepted at the initial iterate,
and the few later differences do not systematically favor either
variant. Logistic curvature is bounded, \(f'' = \mu(1 - \mu) \le 1/4\),
but bounded curvature alone does not guarantee that an undamped Newton
step decreases the objective at every iterate. These experiments
establish the narrower empirical result for the zero-initialized problem
family above.

Losses with unbounded curvature behave differently. For Poisson
regression with the canonical log link, \(f''(\eta) = e^\eta\) and
\(f'''(\eta) = e^\eta\) both grow without bound, so the Newton
linearization is loose far from the optimum and the unguarded step can
overshoot by an arbitrary factor. We repeat the cold-start experiment on
synthetic Poisson data with \(n = 500\), \(p = 1000\), \(s = 10\),
\(\lambda = 0.1\lambda_{\max}\), and baseline rates
\(\mu_0 \in \{10, 30, 100, 300\}\).

\begin{figure}

\centering{

\pandocbounded{\includegraphics[keepaspectratio]{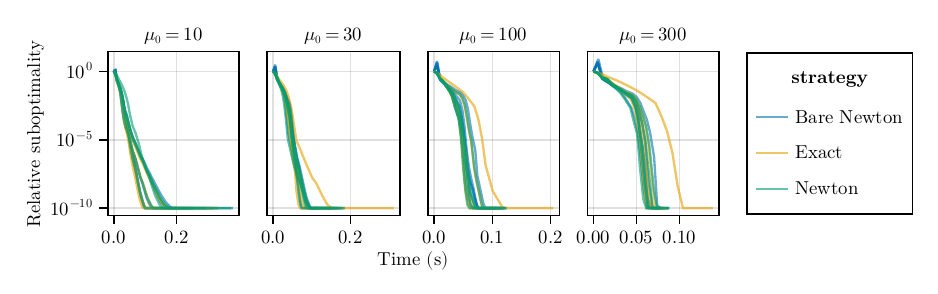}}

}

\caption{\label{fig-cold-start-poisson}Poisson cold-start comparison.
The panels show a relative suboptimality bound versus time for bare
Newton, Armijo-guarded Newton, and the exact strategy at baseline rates
\(\mu_0 \in \{10, 30, 100, 300\}\), with \(n = 500\), \(p = 1000\),
\(s = 10\), and \(\lambda = 0.1\lambda_{\max}\). Each panel shows five
simulated replicates per strategy.}

\end{figure}%

At \(\mu_0 = 300\), the first bare Newton step moves the cold-start
intercept from zero to \(\bar{y} - 1 \approx 300\), although the
intercept-only optimum is only \(\log(\bar{y}) \approx 6\). Because the
log link exponentiates the intercept, this step implies a fitted rate
near \(e^{300}\) and makes the loss astronomical at that intermediate
iterate. The safeguard prevents this overshoot
(Figure~\ref{fig-cold-start-poisson}): the unguarded bound jumps
several-fold after the first pass, while the guarded bound falls
immediately.

The overshoot is nonetheless recoverable. The outer CD loop's own
per-coordinate Armijo guard absorbs it over the next few passes, and the
final primal values agree to numerical precision. The three strategies
return to nearly the same trajectory within a few passes.

The guard therefore keeps the iterates numerically reasonable but
provides no visible improvement in the convergence rate here. The exact
strategy is competitive in both experiments but incurs the cost of
within-pass drift identified in Equation~\ref{eq-intercept-drift}. It
applies the same Armijo safeguard to every inner Newton direction. Every
returned scalar update satisfies \(|\partial_0 F| \le 10^{-10}\).
Reaching the iteration cap would stop the experiment before the result
could be cached.

\subsection{Real data}\label{real-data}

We next examine whether the same scaling appears on real problems. We
run the three strategies on six LIBSVM benchmark problems at
\(\lambda = 0.05\lambda_{\max}\). Table~\ref{tbl-real-logreg-problems}
lists the problems in order of the reciprocal intercept-only curvature
fraction \([4 \pi_+ (1 - \pi_+)]^{-1}\). The table lists the curvature
fraction itself, \(4 \pi_+ (1 - \pi_+)\). The six problems span a
ninefold range in this quantity and include sparse and dense designs in
both dimensional regimes.

The two most imbalanced problems, w1a and news20-3pct, are especially
useful. They have nearly the same \(\pi_+\) but very different sparse
designs: w1a has 300 web-indicator features, while news20-3pct is drawn
from a text corpus with an original vocabulary of roughly \(1.3\)
million terms and retains 4,675 features after filtering. Agreement
between them would suggest that the slowdown is not peculiar to w1a.

\begin{longtable}[]{@{}
  >{\raggedright\arraybackslash}p{(\linewidth - 10\tabcolsep) * \real{0.2321}}
  >{\raggedleft\arraybackslash}p{(\linewidth - 10\tabcolsep) * \real{0.0893}}
  >{\raggedleft\arraybackslash}p{(\linewidth - 10\tabcolsep) * \real{0.0893}}
  >{\raggedright\arraybackslash}p{(\linewidth - 10\tabcolsep) * \real{0.1250}}
  >{\raggedleft\arraybackslash}p{(\linewidth - 10\tabcolsep) * \real{0.1250}}
  >{\raggedleft\arraybackslash}p{(\linewidth - 10\tabcolsep) * \real{0.3393}}@{}}
\caption{LIBSVM problems used in Figure~\ref{fig-real-logreg}. \(\pi_+\)
is the positive-class marginal, and \(4\pi_+(1 - \pi_+)\) is the
intercept-only curvature fraction, not the fitted \(H_{00}/L_0\).
news20-3pct is a deterministic 60/1,940 subsample of
\protect\texttt{news20.binary}; after subsampling, features with fewer
than 20 nonzeros are
dropped.}\label{tbl-real-logreg-problems}\tabularnewline
\toprule\noalign{}
\begin{minipage}[b]{\linewidth}\raggedright
dataset
\end{minipage} & \begin{minipage}[b]{\linewidth}\raggedleft
\(n\)
\end{minipage} & \begin{minipage}[b]{\linewidth}\raggedleft
\(p\)
\end{minipage} & \begin{minipage}[b]{\linewidth}\raggedright
density
\end{minipage} & \begin{minipage}[b]{\linewidth}\raggedleft
\(\pi_+\)
\end{minipage} & \begin{minipage}[b]{\linewidth}\raggedleft
\(4\pi_+(1 - \pi_+)\)
\end{minipage} \\
\midrule\noalign{}
\endfirsthead
\toprule\noalign{}
\begin{minipage}[b]{\linewidth}\raggedright
dataset
\end{minipage} & \begin{minipage}[b]{\linewidth}\raggedleft
\(n\)
\end{minipage} & \begin{minipage}[b]{\linewidth}\raggedleft
\(p\)
\end{minipage} & \begin{minipage}[b]{\linewidth}\raggedright
density
\end{minipage} & \begin{minipage}[b]{\linewidth}\raggedleft
\(\pi_+\)
\end{minipage} & \begin{minipage}[b]{\linewidth}\raggedleft
\(4\pi_+(1 - \pi_+)\)
\end{minipage} \\
\midrule\noalign{}
\endhead
\bottomrule\noalign{}
\endlastfoot
w1a & 2,477 & 300 & sparse & 0.029 & 0.113 \\
news20-3pct & 2,000 & 4,675 & sparse & 0.030 & 0.116 \\
a4a & 4,781 & 123 & sparse & 0.248 & 0.747 \\
leukemia & 38 & 7,129 & dense & 0.711 & 0.823 \\
breast-cancer & 683 & 10 & dense & 0.350 & 0.910 \\
gisette & 6,000 & 5,000 & dense & 0.500 & 1.000 \\
\end{longtable}

\begin{figure}

\centering{

\pandocbounded{\includegraphics[keepaspectratio]{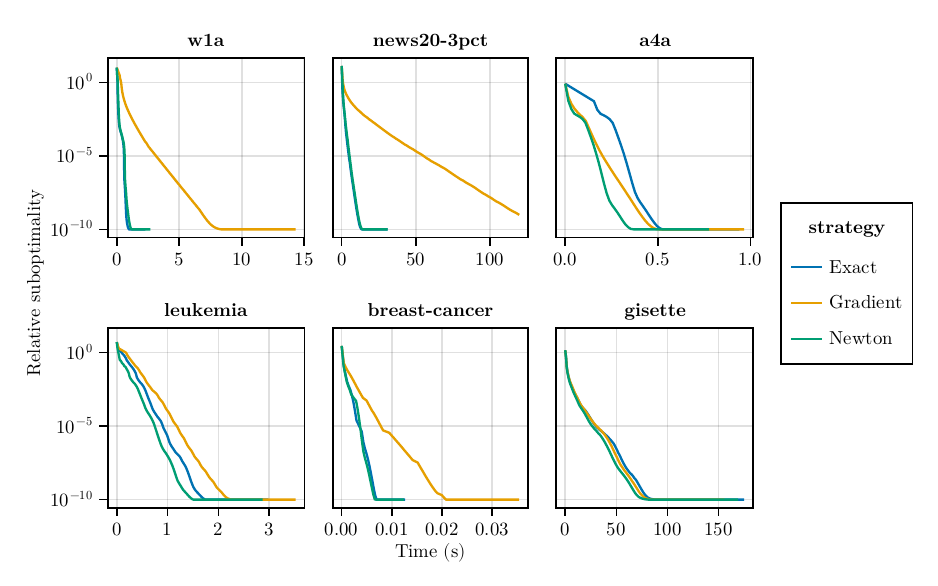}}

}

\caption{\label{fig-real-logreg}Real-data logistic convergence on six
LIBSVM problems. The panels show a relative suboptimality bound versus
time for the three intercept strategies at
\(\lambda = 0.05\lambda_{\max}\), in the problem order of
Table~\ref{tbl-real-logreg-problems}.}

\end{figure}%

The two severely imbalanced datasets reproduce both predicted patterns
on very different sparse designs. The gradient strategy takes about six
times as many passes as Newton to reach \(10^{-6}\) on w1a and about ten
times as many to reach \(10^{-4}\) on news20-3pct, before the latter
reaches its wall-time cap. Both datasets have small intercept-only
curvature fractions. Their large slowdowns across different designs
support the response-imbalance mechanism, although the fitted curvature
and collective coupling determine the local rate.

For the moderately imbalanced a4a and breast-cancer data, the gap
between the gradient and Newton strategies is modest---less than
threefold---consistent with their larger intercept-only curvature
fractions. Leukemia (\(n \ll p\), penalty-dominated) and gisette (dense
and balanced) serve as negative controls: leukemia shows little
separation, and the gisette pass counts are identical. The exact
strategy tracks Newton on every panel, consistent with
Equation~\ref{eq-intercept-drift} at this tolerance.

Poisson data offer a sharper version of the same test. The Poisson loss
has no global upper bound on \(f''(\eta) = e^\eta\), so \(L_0 = \infty\)
and the gradient-strategy update degenerates: the per-pass change is
\(-\sum_i (\hat\mu_i - y_i)/(L_0 n) = 0\), so the intercept remains
unchanged. We test this on the congress109 phrase-count corpus of Taddy
(2015).

The corpus contains \(n = 529\) speakers from the 107th--109th US
Congress and \(p = 999\) political phrases (the 1,000-phrase vocabulary
with the response phrase removed). The response \(y_i\) is speaker
\(i\)'s count of the phrase \texttt{american.people}
(\(\bar y = 11.83\), \(y_{\max} = 396\)), the highest-frequency phrase
in the corpus. The features are the remaining \(999\) phrase counts
after column standardization.

This setup follows the construction for distributed multinomial
regression in Taddy (2015), which approximates a multinomial model over
the vocabulary by fitting an independent Poisson lasso for each term.
The baseline rate \(\bar y \approx 12\) places the problem near the
beginning of the synthetic cold-start sweep in
Figure~\ref{fig-cold-start-poisson} (\(\mu_0 = 10\)). At this rate, the
first bare Newton step is \(\bar y - 1 \approx 10.83\), while the
intercept-only optimum is \(\log(\bar y) \approx 2.47\). We set
\(\lambda = 0.05\lambda_{\max}\) and use cyclic CD with a
\(1\,000\)-pass budget.

\begin{figure}

\centering{

\pandocbounded{\includegraphics[keepaspectratio]{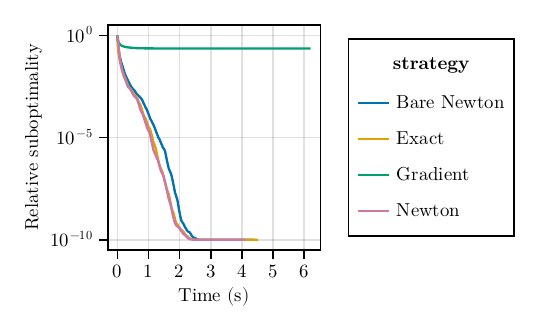}}

}

\caption{\label{fig-real-poisson}Poisson regression on congress109. The
plot shows a relative suboptimality bound versus time for the four
intercept strategies at \(n = 529\), \(p = 999\), \(\bar y = 11.83\),
and \(\lambda = 0.05\lambda_{\max}\).}

\end{figure}%

In Figure~\ref{fig-real-poisson}, the gradient strategy leaves the
intercept unchanged because \(L_0 = \infty\). This failure follows from
the loss family rather than from tuning or a problem-specific imbalance
level. Newton and exact follow the same trajectory because the CD
solver's per-coordinate Armijo loop (Tseng--Yun sufficient decrease,
Algorithm~\ref{algo-cd}) absorbs the cold-start overshoot before the
first intercept update occurs. The synthetic suboptimality curves expose
the unguarded first-pass excursion, but they rejoin the guarded curves
within a few passes at every baseline rate we ran
(Figure~\ref{fig-cold-start-poisson}).

\subsection{Robustness and computational cost}\label{sec-robustness}

The strategy ranking does not depend on the default cyclic ordering. On
w1a, permuting the coordinates changes pass counts by less than a factor
of two. Newton and exact remain fast, and gradient remains slow
(\hyperref[sec-ordering]{Supplement S4.1}). Warm-starting a
regularization path makes Newton and exact still harder to distinguish,
but it does not rescue the gradient strategy on severely imbalanced w1a
(\hyperref[sec-warmstart-path]{Supplement S4.2}).

Instrumenting the intercept solve explains the remaining difference
between Newton and exact. Across \(\mu_0 \in \{0.5, 0.9, 0.99\}\), exact
uses roughly one-and-a-half to twice as many inner work units before
reaching the shared \(10^{-8}\) bound while finishing in essentially the
same number of outer passes. Most of that overhead occurs in the first
cold-start passes. Warm starts largely avoid it
(\hyperref[sec-per-pass-cost]{Supplement S4.3}). The full ordering,
path, and per-pass trajectories appear in
\hyperref[sec-s-robustness]{Supplement S4}.

\subsection{Imbalance and regularization}\label{sec-imbalance-reg}

The preceding single-\(\lambda\) experiments vary one quantity at a
time. Both \(\lambda\) and \(\mu_0\) change \(\hat\eta\) and hence
\(H_{00}\), while the active set also changes the collective coupling in
Equation~\ref{eq-collective-coupling}. We sweep
\(\mu_0 \in \{0.5, 0.7, 0.9, 0.95, 0.99\}\) against
\(\lambda/\lambda_{\max} \in \{0.5, 0.2, 0.1, 0.05, 0.02, 0.01\}\) on
the standardized binary-logistic design of Figure~\ref{fig-mu-extreme}
(three seeds per cell), running each strategy until its local relative
duality gap reaches \(10^{-10}\) or it exhausts the budget of 5000 outer
passes. We report passes to the shared \(10^{-8}\) suboptimality bound
rather than elapsed time to separate the effect of curvature mismatch
from BLAS and memory effects.

\begin{figure}

\centering{

\pandocbounded{\includegraphics[keepaspectratio]{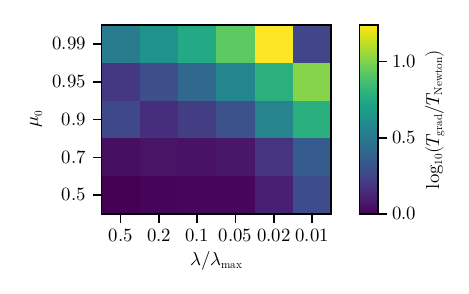}}

}

\caption{\label{fig-mu-reg-gradient}Gradient-to-Newton pass-count ratios
over the \((\mu_0, \lambda/\lambda_{\max})\) grid. The heatmap shows the
log ratio of mean pass counts,
\(\log_{10}(\bar T_{\mathrm{gradient}} / \bar T_{\mathrm{Newton}})\),
with each mean taken over three seeds on the standardized logistic
design; clipped cells mark runs that hit the 5000-pass budget.}

\end{figure}%

\begin{figure}

\centering{

\pandocbounded{\includegraphics[keepaspectratio]{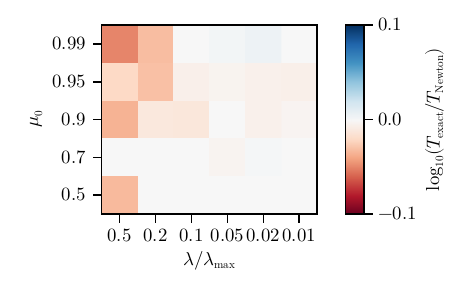}}

}

\caption{\label{fig-mu-reg-exact}Exact-to-Newton pass-count ratios over
the \((\mu_0, \lambda/\lambda_{\max})\) grid. The heatmap shows the log
ratio of mean pass counts,
\(\log_{10}(\bar T_{\mathrm{exact}} / \bar T_{\mathrm{Newton}})\), on
the same grid and with the same three-seed means as
Figure~\ref{fig-mu-reg-gradient}, using a centered diverging colormap
with range \(\pm 0.1\).}

\end{figure}%

Greater imbalance generally makes the gradient strategy slower relative
to Newton (Figure~\ref{fig-mu-reg-gradient}), although the
strongest-penalty column is not strictly monotone. The largest
informative average ratio is roughly \(17\) at \(\mu_0 = 0.99\) and
\(\lambda = 0.02\lambda_{\max}\). The \(\mu_0 = 0.99\),
\(\lambda = 0.01\lambda_{\max}\) corner is the only cell in which any
run exhausts the 5000-pass budget. In one seed, all three strategies hit
the cap without reaching the reporting threshold. The displayed average
ratio in that cell is therefore uninformative rather than small. We
return to it below.

At \(\lambda = 0.05\lambda_{\max}\), the regime most relevant to the
production path experiments, the empirical ratio grows from nearly one
under balance to about eight at \(\mu_0 = 0.99\). The corresponding
fitted curvature ratio \(L_0/H_{00}(\hat\eta)\) rises from about three
to twelve. This ratio gives the strong-coupling limit in
Equation~\ref{eq-frozen-block-rate}. It need not equal the pass-count
ratio at a finite tolerance when other convergence modes contribute.

As \(\lambda\) shrinks, the active set grows, \(H_{00}\) falls to or
below \(\mu_0(1 - \mu_0)\), and the gradient strategy takes more passes
relative to Newton. At \(\lambda = 0.01\lambda_{\max}\), for example,
the ratio rises from about two under balance to about ten at
\(\mu_0 = 0.95\). We omit the \(\mu_0 = 0.99\) corner because all three
strategies reach the cap in one seed, making the average uninformative.

By contrast, Figure~\ref{fig-mu-reg-exact} stays nearly flat at zero:
the ratio of exact to Newton pass counts is approximately one across the
grid, and no cell departs from unity by more than about \(10\%\). This
agrees with Equation~\ref{eq-intercept-drift}. When both methods
converge, the exact strategy needs essentially the same number of outer
passes as Newton. Its additional inner iterations therefore add a modest
overhead that grows with imbalance without reducing the number of outer
passes, as \hyperref[sec-per-pass-cost]{Supplement S4.3} shows.

The two heatmaps support the curvature-mismatch mechanism across
penalties: the empirical ratio generally grows as the intercept update
becomes more conservative (\(H_{00}/L_0 \to 0\)). Both curvature and
coupling change across this grid. The centering sweep of
Figure~\ref{fig-rho-centering} isolates coupling by holding
\(H_{00}/L_0\) fixed. The frozen coordinate-sweep model in
Figure~\ref{fig-rho-frozen} accounts for the transition from similar
rates to the \(L_0/H_{00}\) plateau as the coupled mode becomes slowest.

\subsection{Production solvers}\label{sec-production-solvers}

The production experiments test two aspects of the curvature
explanation. A controlled synthetic family tests how \textbf{glmnet} and
\textbf{biglasso} respond to increasing response imbalance. Their mode
switches change the full IRLS weight scheme. A separate comparison of
two \textbf{skglm} versions changes the intercept update and its
associated stopping residual. Additional diagnostics examine the other
production solvers on the shared synthetic problem, w1a, and
news20-3pct.

We evaluate every fit on the common objective
\(n^{-1}\sum_i \ell_i + \lambda\lVert\beta\rVert_1\). For each problem,
a high-accuracy fit supplies a reference primal \(P_{\mathrm{ref}}\) and
a feasible dual lower bound \(D_{\mathrm{ref}}\). The relative
certificate width
\((P_{\mathrm{ref}} - D_{\mathrm{ref}})/P_{\mathrm{ref}}\) is below
\(10^{-8}\). We use these certificates both to match accuracy in the
imbalance sweep and to bound suboptimality in the remaining diagnostics.

\subsubsection{\texorpdfstring{Response imbalance in \textbf{glmnet} and
\textbf{biglasso}}{Response imbalance in glmnet and biglasso}}\label{response-imbalance-in-glmnet-and-biglasso}

We generate three standardized Gaussian designs with \(n = 2000\),
\(p = 500\), AR(1) correlation \(0.6\), and ten nonzero generating
coefficients of size \(0.5\). Within each seed, we hold the design,
slope coefficients, and uniform random draws fixed. We vary only the
generating intercept, calibrating it so that the mean success
probability is \(\pi \in \{0.5,0.7,0.9,0.95,0.99\}\). The same uniform
draws generate nested binary responses as \(\pi\) increases. Thus, the
sweep changes response imbalance without changing the feature matrix or
slope signal. At the 1\% target minority share, the realized share
ranges from 0.55\% to 1.40\% across seeds.

We compare \texttt{Newton} with \texttt{modified.Newton} in
\textbf{glmnet} and with \texttt{MM} in \textbf{biglasso}. For each
problem, both modes receive the same explicit 50-point geometric path
from \(\lambda_{\max}\) to \(0.05\lambda_{\max}\). This fixes the
relative penalty; its absolute value and the fitted active set can
change with the responses. Before timing, we select each mode's loosest
tolerance on a half-decade grid from \(10^{-3}\) to \(10^{-13}\) whose
final fit satisfies \[
  \frac{P - D_{\mathrm{ref}}}{P_{\mathrm{ref}}} \le 10^{-6}.
\] We then warm each selected fit once and time seven repetitions,
interleaving the modes in a seeded random order. Each timed repetition
must reach the same accuracy target. Timings cover the full
regularization path and exclude input conversion, tolerance calibration,
explicit warm-up, and certificate evaluation. We allow at most \(10^6\)
passes per path in \textbf{glmnet} and \(10^4\) iterations per path
point in \textbf{biglasso}. We compare medians within each problem and
package only when both modes reach the accuracy target. The cached
results also record every configuration that fails to reach it on the
tolerance grid.

\begin{figure}

\centering{

\pandocbounded{\includegraphics[keepaspectratio]{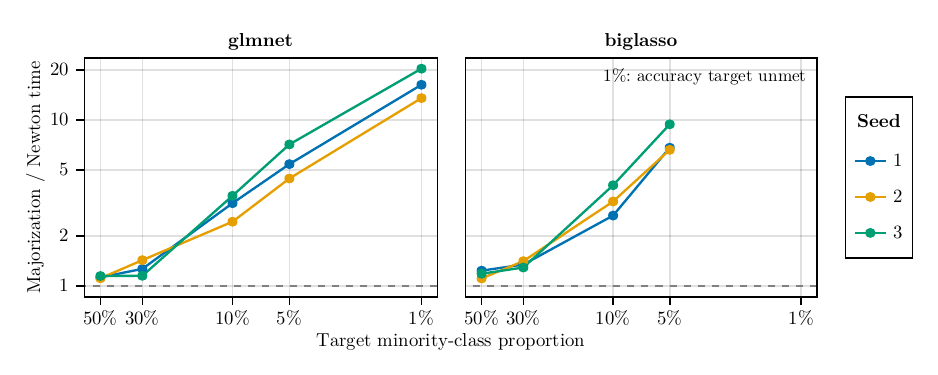}}

}

\caption{\label{fig-production-imbalance-time}Runtime sensitivity to
response imbalance in \textbf{glmnet} and \textbf{biglasso}. Each curve
follows one fixed design and slope signal as the generating intercept
changes. The vertical axis is the ratio of median full-path runtimes for
the majorization mode to the Newton mode, using seven warmed repetitions
per mode. Every fit meets the same certified relative accuracy target of
\(10^{-6}\). Ratios above one favor Newton. The horizontal axis gives
the target minority-class proportion, decreasing from left to right.
Missing comparisons mark problems where a mode did not reach the
accuracy target within the tolerance grid and iteration budget.}

\end{figure}%

The ratio of majorization to Newton runtime rises with imbalance in
every seed. The ratios are close to one under balance. At a 5\% target
minority share, they reach about \(4.4\)--\(7.1\) in \textbf{glmnet} and
\(6.6\)--\(9.4\) in \textbf{biglasso}. At 1\%, \textbf{glmnet}'s ratio
reaches \(13.5\)--\(20.4\). Both \textbf{biglasso} modes miss the
accuracy target on all three 1\% problems within the calibration grid
and iteration budget. We therefore show no runtime ratio for those
cells.

We evaluate the curvature fraction \(H_{00}/L_0\) and collective
coupling \(\kappa\) at the reference solution of each problem
(Figure~\ref{fig-production-imbalance-curvature}). These diagnostics
distinguish the looseness of the global bound from the strength of
intercept--coefficient coupling. The active set used for \(\kappa\)
contains coefficients from the reference solution whose absolute value
exceeds \(10^{-8}\). Timing covers the whole regularization path, so the
runtime ratio also depends on curvature and computational work at
earlier path points.

\begin{figure}

\centering{

\pandocbounded{\includegraphics[keepaspectratio]{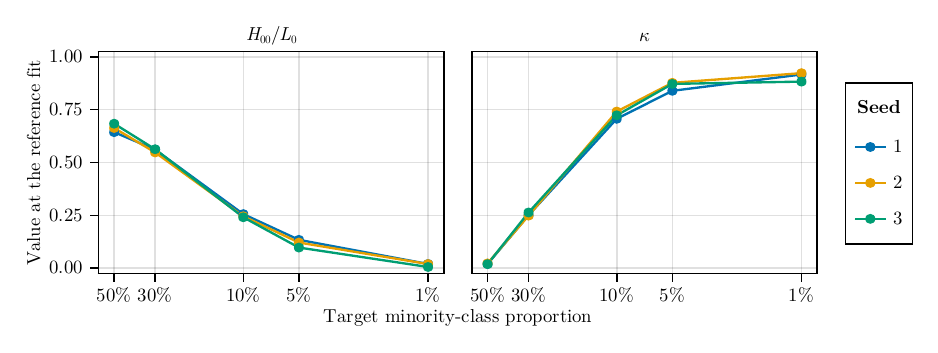}}

}

\caption{\label{fig-production-imbalance-curvature}Curvature and
coupling along the production-solver imbalance sweep. The diagnostics
are evaluated at independently certified reference fits for the same
fifteen problems as Figure~\ref{fig-production-imbalance-time}. The left
panel shows \(H_{00}/L_0\), with \(L_0=1/4\); the right panel shows the
collective active-set coupling \(\kappa\). Each curve follows one seed.}

\end{figure}%

Across seeds, \(H_{00}/L_0\) falls from \(0.64\)--\(0.68\) under balance
to \(0.006\)--\(0.020\) at the 1\% target minority share. The collective
coupling rises from about \(0.02\) to \(0.88\)--\(0.92\). The imbalance
sweep therefore combines a looser global bound with stronger
intercept--coefficient coupling, the regime in which the analysis
predicts the largest difference between curvature schemes.

The mode switches change coefficient curvatures as well as the intercept
correction. This experiment tests the sensitivity of the whole curvature
scheme to imbalance. The following \textbf{skglm} comparison changes
intercept handling within one solver.

\subsubsection{\texorpdfstring{A historical update comparison in
\textbf{skglm}}{A historical update comparison in skglm}}\label{a-historical-update-comparison-in-skglm}

The remaining solver diagnostics use the shared synthetic problem, w1a,
and news20-3pct, with \(\lambda = 0.05\lambda_{\max}\) in each case.
These problems cover a dense Gaussian design, sparse binary features,
and sparse text features. The \textbf{skglm} tolerance sweep fits the
target \(\lambda\) directly, while the \textbf{LIBLINEAR},
\textbf{BlitzL1}, and \textbf{adelie} comparisons use 50-point
regularization paths.\footnote{At the single-\(\lambda\) configuration
  reported in \hyperref[sec-cold-start-diag]{Supplement S5.2},
  \textbf{biglasso} \texttt{alg.logistic\ =\ "Newton"} returns
  inaccurate fits on both extreme-imbalance real datasets, while
  \textbf{adelie} returns an empty state on w1a. These results use loose
  solver tolerances and do not describe either package's default
  configuration.} Each plotted point reports the endpoint of a fit at
one tolerance, evaluated against the shared feasible dual reference
\(D^\star\) for that problem.

\begin{figure}

\centering{

\pandocbounded{\includegraphics[keepaspectratio]{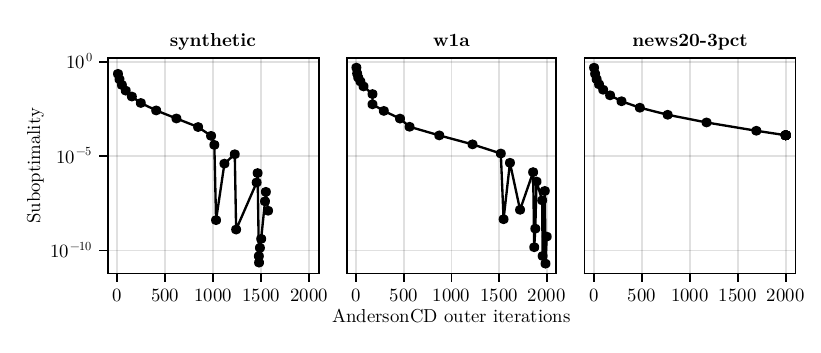}}

}

\caption{\label{fig-real-skglm}\textbf{skglm} 0.5 on three imbalanced
logistic problems. The panels show a suboptimality bound versus
\texttt{AndersonCD} outer iterations on the shared synthetic problem,
w1a, and news20-3pct, using the common feasible dual reference
\(D^\star\) for each problem.}

\end{figure}%

In \textbf{skglm} 0.5, the intercept update takes a conservative
gradient step on the original loss. Writing \(g_0 = \partial_0 F\) for
the averaged-loss gradient, the implementation subtracts \(g_0/4\),
whereas our gradient strategy subtracts \(g_0/L_0 = 4g_0\) with
\(L_0 = 1/4\). The historical step is therefore sixteen times smaller
than the sharp global-curvature step. It corresponds to an effective
curvature bound of \(4\), so its leading-order coupling correction is
\(H_{00}/4\), rather than \(4H_{00}\) (Equation~\ref{eq-grad-residual}).
The tolerance sweep shows slow progress under this additional
conservatism. We submitted a patch to \textbf{skglm} replacing this
gradient update with a Newton step. Our patch was merged upstream as
commit \texttt{8f9cbd77}.\footnote{\url{https://github.com/scikit-learn-contrib/skglm/pull/337},
  merged 2026-05-13. \textbf{skglm} 0.6 is unreleased at the time of
  writing; the controlled comparison below pins the two relevant commits
  directly.}

The \textbf{glmnet} and \textbf{biglasso} switches change the full IRLS
weight scheme, and the cross-package panels also reflect implementation
differences. We compare two \textbf{skglm} commits whose runtime code
differs only in the intercept callback
(Figure~\ref{fig-skglm-controlled}): \texttt{b03644fe}, the last pre-fix
commit and behaviorally identical to the released 0.5, and
\texttt{8f9cbd77}, the merged Newton fix. Running the same
\texttt{AndersonCD} driver on both yields a controlled 30-cell
comparison. The callback also defines the intercept component of
\texttt{AndersonCD}'s stopping criterion: it changes from \(|g_0|/4\) to
\(|g_0|/H_{00}\). Thus, the common numerical tolerance does not impose
the same intercept accuracy in the two versions. This comparison
measures the effect of the historical patch, including its change to
stopping semantics; it does not isolate Newton against the correctly
scaled global-curvature strategy at matched accuracy. The patch uses
bare Newton, so it does not test the Armijo safeguard. The
internal-solver experiments compare the correctly scaled gradient
strategy with safeguarded Newton under a shared objective certificate.
Sources for the historical callback and its use in the stopping
criterion are given in Supplement S1.1.

As in Figure~\ref{fig-mu-reg-gradient}, the ratio generally grows with
\(\mu_0\) and with \(1/\lambda\), although it is not monotone in every
cell. The pre-fix commit needs about \(30\) times as many iterations as
the post-fix commit even in the cell with the smallest difference, and
more than \(400\) times as many in the most imbalanced, weakly
regularized corner.

Even the smallest ratio is large: the historical gradient step is
unusually conservative, while the Newton version reaches its native
tolerance in fewer than 20 outer iterations. These ratios describe the
two implementations and should not be read as speedups over our sharp
global-curvature baseline. Many pre-fix runs hit the 5000-iteration cap,
making the largest displayed ratios lower bounds rather than exact
comparisons.

\begin{figure}

\centering{

\pandocbounded{\includegraphics[keepaspectratio]{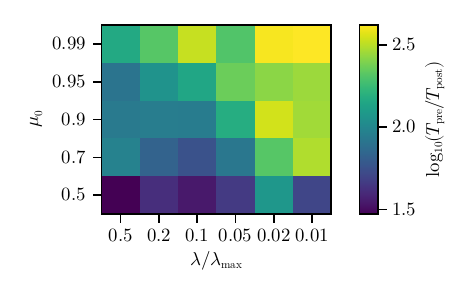}}

}

\caption{\label{fig-skglm-controlled}Within-\textbf{skglm} commit
comparison. The heatmap shows the log ratio of mean iteration counts,
\(\log_{10}(\bar T_{\mathtt{b03644fe}} / \bar T_{\mathtt{8f9cbd77}})\),
for \texttt{AndersonCD} across the same
\((\mu_0,\lambda/\lambda_{\max})\) grid as
Figure~\ref{fig-mu-reg-gradient}, with each mean taken over three seeds;
clipped cells mark runs that hit the 5000-iteration cap. The historical
gradient step is sixteen times smaller than our global-curvature
baseline, and the patch also changes the intercept stopping residual.}

\end{figure}%

\subsubsection{Other production solvers}\label{other-production-solvers}

The prox-Newton solvers provide a complementary test. Resolving the
intercept with local curvature inside each quadratic should avoid the
stall caused by a global majorant. An additional correction on the
original loss may improve the intercept at the subproblem boundary.
\textbf{LIBLINEAR}~(newGLMNET) implements the former, while
\textbf{BlitzL1} implements both. We compare the time they take to
approach the shared certificate on the same three problems in
Figure~\ref{fig-real-proxnewton}, complementing the \textbf{skglm}
iteration-count comparison and the \textbf{glmnet} and \textbf{biglasso}
imbalance sweep.

\begin{figure}

\centering{

\pandocbounded{\includegraphics[keepaspectratio]{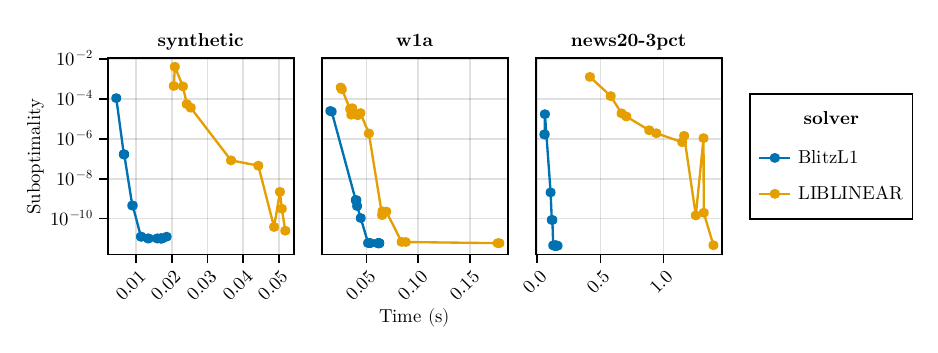}}

}

\caption{\label{fig-real-proxnewton}Prox-Newton solvers on three
imbalanced logistic problems. The panels show a suboptimality bound
versus wall-clock time for \textbf{LIBLINEAR} and \textbf{BlitzL1} on
the shared synthetic problem, w1a, and news20-3pct. \textbf{LIBLINEAR}
runs that hit the internal \texttt{max\_iter} cap without further
progress are omitted. Both solvers use the common feasible dual
reference \(D^\star\) for each problem. Each point is an independent fit
at one tolerance; the single-run timings are indicative.}

\end{figure}%

We test the same prediction in \textbf{adelie}. It builds the same
weighted Gaussian surrogate as \textbf{glmnet} \texttt{Newton}, but it
handles the intercept implicitly: rather than updating \(\beta_0\) as a
coordinate inside the inner CD pass, it centers the features using
weighted means at the start of each surrogate and tracks the residual
mean, reconstructing \(\beta_0 = \bar{y} + \overline{\mathrm{resid}}\)
at the end of the inner solve. By construction, the intercept therefore
reaches the surrogate optimum on every outer pass, with no separate
update step.

That places \textbf{adelie} in the same class as \textbf{glmnet}
\texttt{Newton} and \textbf{biglasso} \texttt{Newton}. In
Figure~\ref{fig-real-adelie}, we sweep its convergence tolerance across
the same three problems used above. The curves are flat for \texttt{tol}
values above roughly \(10^{-4}\): path mode warm-starts each of the 50
lambdas from its predecessor, and those warm starts alone land within
\(10^{-5}\) of the optimum, already meeting the tolerance. Below that
plateau, \textbf{adelie}'s final objectives lie within the shared
reference certificate: the suboptimality bounds are on the order of
\(10^{-11}\) and then become limited by the certificate. Each fit
completes in well under a second, and \textbf{adelie} shows no stall or
failed solve.

\begin{figure}

\centering{

\pandocbounded{\includegraphics[keepaspectratio]{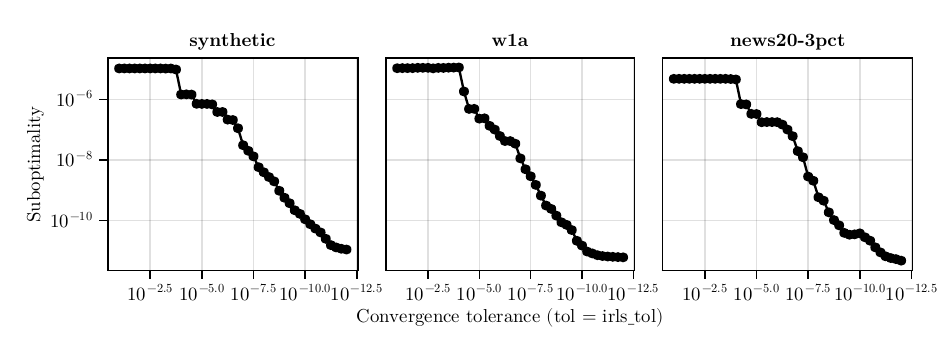}}

}

\caption{\label{fig-real-adelie}\textbf{adelie} on three imbalanced
logistic problems. The panels show a suboptimality bound versus
convergence tolerance on the same three problems used for
\textbf{skglm}, \textbf{LIBLINEAR}, and \textbf{BlitzL1}. The horizontal
axis is the coupled outer and inner tolerance
(\texttt{tol\ =\ irls\_tol}), and the bound uses the common feasible
dual reference \(D^\star\) for each problem.}

\end{figure}%

\subsubsection{Additional solver
diagnostics}\label{additional-solver-diagnostics}

Poisson loss provides a structural check: because \(f''(\eta) = e^\eta\)
is unbounded, the constant-majorant modes have no Poisson analogue. The
production solvers that support Poisson here use local IRLS or
prox-Newton curvature and converge without the gradient-strategy stall
(\hyperref[sec-production-poisson]{Supplement S5.1}).

A separate single-\(\lambda\) diagnostic probes the opposite risk: local
Newton curvature can be unstable at a cold start when solver tolerances
are deliberately loose. The observed \textbf{biglasso} and
\textbf{adelie} failures depend on the dataset and do not describe their
default path-mode calls. We therefore treat them as evidence for
safeguarding the Newton direction, not as a package ranking
(\hyperref[sec-cold-start-diag]{Supplement S5.2}).

\section{Discussion}\label{discussion}

Our analysis yields a practical decision rule: direct coordinate descent
on a smooth GLM loss should update the intercept with one Newton block
step and an Armijo safeguard. Repeatedly optimizing the intercept adds
inner work whose benefit is usually erased by the next coefficient
sweep. Warm-started paths are the main case in which that work becomes
cheap enough to reconsider (\hyperref[sec-warmstart-path]{Supplement
S4.2}). Solvers that freeze a local IRLS or proximal-Newton quadratic
should instead resolve the intercept within that surrogate, where the
quadratic curvature is already available.

The curvature ratio \(H_{00}/L_0\) explains this rule. On the original
nonlinear loss, a global-majorant update removes that fraction of the
local intercept--coefficient coupling to first order
(Equation~\ref{eq-grad-residual}). In the frozen quadratic model, the
fraction is exact. That model then explains how the one-step mismatch
becomes the late-stage \(L_0/H_{00}\) rate plateau under strong
collective coupling (Proposition~\ref{prp-frozen-block-rate},
Figure~\ref{fig-rho-frozen}). This is a local explanation, not a global
iteration-complexity theorem.

The production comparisons test the rule outside our implementation. The
\textbf{glmnet} and \textbf{biglasso} switches hold the package fixed
but change the full IRLS weight scheme, while the before-and-after
\textbf{skglm} comparison changes the intercept callback used for both
updating and stopping (Section~\ref{sec-production-solvers}). The
\textbf{skglm} intervention shows the practical effect of replacing its
unusually conservative historical update with Newton, including the
change to stopping semantics. It does not quantify the advantage over
the correctly scaled global-curvature strategy at matched accuracy. The
other comparisons support the broader prediction about global versus
local curvature, but they do not isolate that update from the rest of
the surrogate or implementation.

The Armijo guard completes the recommendation. It adds little work when
the full Newton step is acceptable and prevents the Poisson cold-start
overshoot in Figure~\ref{fig-cold-start-poisson}. Using it by default
avoids relying on a loss-specific global descent claim.

For multinomial problems, rare nonreference classes and rare reference
classes create low-curvature intercept directions of different forms
(Equation~\ref{eq-multinomial-free-rayleigh}--Equation~\ref{eq-multinomial-reference-rayleigh}).
With long-tailed label distributions, the mechanism that affects the
scalar intercept in binary models therefore routinely affects the
intercept block in multinomial models.

Our analysis has two main limitations. First, our results for the Newton
strategy are \emph{local}. Equation~\ref{eq-newton-error} and
Equation~\ref{eq-newton-coupling} are Taylor identities for a single
Newton step, and Proposition~\ref{prp-frozen-block-rate} freezes the
Hessian and solves the active coefficient block exactly. The frozen
products of permuted coordinate passes reproduce the nonlinear rates,
but we do not prove a general theorem for their Lyapunov exponents or a
global convergence rate for CD with a Newton-updated intercept.

Second, the production-solver imbalance sweep fixes the design family
and \(\lambda/\lambda_{\max} = 0.05\). It includes three seeds at five
prevalence levels, while the additional solver diagnostics use three
benchmark problems. The heatmap in Section~\ref{sec-imbalance-reg}
reproduces the binary direct-CD predictions across a 30-cell grid, but
the \textbf{glmnet} and \textbf{biglasso} comparison does not sweep the
regularization axis. Future work should extend that comparison and check
the multinomial predictions against \textbf{glmnet}'s
symmetric-multinomial path and \textbf{adelie}'s multi-response solver.

\section*{Funding}\label{funding}
\addcontentsline{toc}{section}{Funding}

This research received no external funding.

\section*{Supplementary material}\label{supplementary-material}
\addcontentsline{toc}{section}{Supplementary material}

The supplement collects the technical derivations, extended experiments,
and implementation details that support the main article. It follows the
order of the argument and keeps each figure beside the methods and
interpretation that it supports.

\subsection*{S1. Technical derivations and rate
diagnostics}\label{sec-s-theory}
\addcontentsline{toc}{subsection}{S1. Technical derivations and rate
diagnostics}

\subsubsection*{S1.1 Frozen quadratic rates and
validation}\label{sec-s-rate}
\addcontentsline{toc}{subsubsection}{S1.1 Frozen quadratic rates and
validation}

The Schur analysis describes how an intercept update changes the next
coefficient gradient. To study convergence over repeated passes, we also
need the order of the updates and the errors they leave behind. We
specialize to the \(\ell_1\) penalty used in the experiments, freeze the
Hessian at the optimum, restrict the coefficients to the active set
\(A\), and translate the optimum to zero. Once the active signs are
fixed, the penalty is locally affine and stationarity cancels the linear
terms. For the intercept and active coefficient error vector \(e\), the
quadratic model is \(Q(e) = \tfrac12 e^\top H e\), where \[
  H = \begin{bmatrix}
        H_{00} & H_{0A} \\
        H_{A0} & H_{AA}
      \end{bmatrix}.
\] The iterates move while this Hessian stays fixed. Assume \(H\) is
positive definite. The collective coupling between the intercept and the
active coefficient block is \(\kappa\), restating
Equation~\ref{eq-collective-coupling}: \[
  \kappa = \frac{H_{0A}H_{AA}^{-1}H_{A0}}{H_{00}} \in [0, 1).
\] The inverse coefficient Hessian accounts for joint coefficient
adjustments, so many individually weak intercept--coefficient couplings
can combine. Eliminating the coefficient block gives the Schur
complement \(H_{00} - H_{0A}H_{AA}^{-1}H_{A0} = H_{00}(1 - \kappa)\),
the intercept curvature after the coefficients have adjusted.

\begin{proposition}[]\protect\hypertarget{prp-frozen-block-rate}{}\label{prp-frozen-block-rate}

\emph{(Frozen two-block rate.)} Let \(q = H_{00}/L_0 \in (0,1]\). On the
frozen quadratic, suppose \(H\) is positive definite and each pass first
minimizes over the active coefficient block and then updates the
intercept. The Newton and gradient strategies contract the intercept
error by \[
  r_{\mathrm{N}} = \kappa, \qquad r_{\mathrm{G}} = 1 - q(1 - \kappa),
\] respectively. Exact coefficient-block minimization erases the
incoming coefficient error, so the only potentially nonzero eigenvalue
of each pass is its intercept contraction factor. For fixed
\(0 < \kappa < 1\) and nonzero initial intercept error, let
\(T_{\mathrm{G}}(\varepsilon)\) and \(T_{\mathrm{N}}(\varepsilon)\)
count passes to reduce its magnitude by a factor \(\varepsilon\). Then
\begin{equation}\protect\phantomsection\label{eq-frozen-block-rate}{
  \lim_{\varepsilon \to 0} \frac{T_{\mathrm{G}}(\varepsilon)}{T_{\mathrm{N}}(\varepsilon)}
  = \frac{\log \kappa}{\log \left( 1 - q(1 - \kappa) \right)}
  \longrightarrow \frac{1}{q}
  = \frac{L_0}{H_{00}} \quad \text{as } \kappa \to 1.
}\end{equation} For fixed \(q < 1\), this asymptotic ratio diverges as
\(\kappa \to 0\). At a fixed tolerance, however, Newton needs only one
pass once \(\kappa \leq \varepsilon\), and the gradient pass count stays
finite. These are different orders of limits. At \(\kappa = 0\), Newton
solves the decoupled quadratic in one pass. When \(q = 1\), the two
strategies coincide for every \(\kappa\).

\end{proposition}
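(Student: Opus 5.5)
The plan is to write one pass as a linear map on the error vector \(e = (e_0, e_A)\) and read off its spectrum. Because the frozen model is \(Q(e) = \tfrac12 e^\top H e\), exact minimization over the active block with the intercept error \(e_0\) held fixed sets \(\nabla_A Q = 0\). This gives
\[
  e_A = -H_{AA}^{-1}H_{A0}e_0,
\]
which depends only on \(e_0\): the incoming coefficient error is erased. The pass map therefore has rank at most one and factors through the scalar \(e_0\). Its only potentially nonzero eigenvalue is the scalar factor by which the subsequent intercept update multiplies \(e_0\).

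Next I compute the intercept gradient at the post-block point:
\[
  \partial_0 Q = H_{00}e_0 + H_{0A}e_A = H_{00}e_0 - H_{0A}H_{AA}^{-1}H_{A0}e_0 = H_{00}(1-\kappa)e_0 .
\]
The Newton step (Armijo accepts the full step on a quadratic) subtracts \(\partial_0 Q/H_{00}\), giving \(e_0^+ = \kappa e_0\). The gradient step subtracts \(\partial_0 Q/L_0\), giving \(e_0^+ = (1 - q(1-\kappa))e_0\). This proves the stated \(r_{\mathrm N}\) and \(r_{\mathrm G}\). I will also note the signs: \(\kappa \in [0,1)\) because \(H\) is positive definite (the Schur complement \(H_{00}(1-\kappa)\) is positive), and \(q \in (0,1]\), so both factors lie in \([0,1)\) and their moduli are their values. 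After \(T\) passes the intercept error is \(r^T e_0^{(0)}\), and the coefficient error is a fixed multiple of it, so tracking \(e_0\) suffices.

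For the limit, the pass count is \(T(\varepsilon) = \lceil \log\varepsilon / \log r\rceil\) whenever \(0<r<1\). The ratio \(\lceil a/\log r_{\mathrm G}\rceil / \lceil a/\log r_{\mathrm N}\rceil\) with \(a = \log\varepsilon \to -\infty\) tends to \(\log r_{\mathrm N}/\log r_{\mathrm G}\), since the ceilings perturb each count by less than one. This step needs \(0<\kappa\) so that \(\log\kappa\) is finite, and \(r_{\mathrm G}>0\). If \(q=1\), then \(r_{\mathrm G}=\kappa\) and the ratio is \(1\); otherwise \(r_{\mathrm G} \ge 1-q > 0\).

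For \(\kappa \to 1\), I set \(\delta = 1-\kappa\) and use \(\log(1-\delta) \sim -\delta\) and \(\log(1-q\delta) \sim -q\delta\), which gives the ratio \(1/q = L_0/H_{00}\). For \(\kappa \to 0\) with \(q<1\), the numerator \(\log\kappa \to -\infty\) while the denominator tends to \(\log(1-q)\), which is finite, so the ratio diverges. The remarks about orders of limits then follow directly. If \(\kappa \le \varepsilon\), one Newton pass already achieves the reduction. \(T_{\mathrm G}\) stays finite because \(r_{\mathrm G}<1\). At \(\kappa=0\), Newton yields \(e_0^+=0\), and then \(e_A=0\) after the block step. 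When \(q=1\), \(r_{\mathrm G}=r_{\mathrm N}\), so the two strategies coincide.

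The main obstacle is really the only delicate point: justifying that the ceiling-based pass counts have the claimed limiting ratio, and handling the edge cases \(\kappa=0\) and \(q=1\). Everything else is a short linear-algebra calculation.
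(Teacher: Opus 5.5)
Your proposal is correct and follows the paper's own argument. Block minimization gives \(e_A = -H_{AA}^{-1}H_{A0}e_0\) and the intercept gradient \(H_{00}(1-\kappa)e_0\), which yields the two contraction factors; logarithms and an expansion at \(\kappa = 1\) then give the limit, and your ceiling argument and sign checks supply details the paper leaves implicit. One small sharpening for the edge case: \(\kappa = 0\) forces \(H_{A0} = 0\), since \(H_{AA}^{-1}\) is positive definite, so the first block step already gives \(e_A = 0\) and one pass solves the quadratic; your wording instead suggests \(e_A\) vanishes only at a later block step.
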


To see this, let \(a\) denote the intercept error. Conditional
minimization gives \(\beta_A^+ = -H_{AA}^{-1}H_{A0}a\), after which the
intercept gradient is \(H_{00}(1 - \kappa)a\). Newton therefore leaves
error \(\kappa a\), whereas the gradient strategy leaves
\([1 - q(1 - \kappa)]a\). Taking logarithms gives
Equation~\ref{eq-frozen-block-rate}; expanding both logarithms at
\(\kappa = 1\) gives the limit.

We test the role of coupling by holding \(\mu_0\) fixed and shifting the
feature means. For the horizontal axis of
Figure~\ref{fig-rho-centering}, we summarize the coordinate-wise
couplings by their curvature-weighted mean, \[
  \bar\rho^2 = \frac{\sum_{j=1}^p H_{jj}\rho_{0j}^2}{\sum_{j=1}^p H_{jj}}.
\] This descriptive measure records coupling to individual coefficients;
\(\kappa\) also accounts for their joint adjustment through
\(H_{AA}^{-1}\).

On the synthetic logistic design with \texttt{means\ =\ :random}, each
column carries an \(O(1)\) mean. A centering fraction \(\alpha\)
interpolates between the standardized and uncentered designs,
\begin{equation}\protect\phantomsection\label{eq-centering-sweep}{
  X(\alpha) = \bigl(X - (1 - \alpha)\,\bar{x}^\top\bigr) / s,
  \qquad
  \alpha \in [0, 1],
}\end{equation} with the column scales \(s\) held fixed. Shifting a
column by a constant is absorbed exactly by the unpenalized intercept,
so \(\beta^\star\), \(\hat\eta\), \(H_{00}\), and \(\lambda_{\max}\) are
all invariant along the sweep: \(H_{00}/L_0\) remains about \(0.30\)
when \(\mu_0 = 0.5\) and \(0.17\) when \(\mu_0 = 0.9\), even as
\(\bar\rho^2\) increases from nearly zero to roughly \(0.5\). Thus, the
sweep changes the coupling without changing the curvature ratio.

In Figure~\ref{fig-rho-centering}, \(T_{\mathrm{G}}\) and
\(T_{\mathrm{N}}\) count passes to the shared relative suboptimality
bound \(10^{-6}\). Their empirical ratio rises sharply as coupling
increases from near zero and then settles near the strong-coupling limit
\(L_0/H_{00}\) in Equation~\ref{eq-frozen-block-rate}. To account for
both the transition and the plateau, we retain the coefficient error
modes of the actual coordinate sweep in the calculation below.

\begin{figure}

\centering{

\pandocbounded{\includegraphics[keepaspectratio]{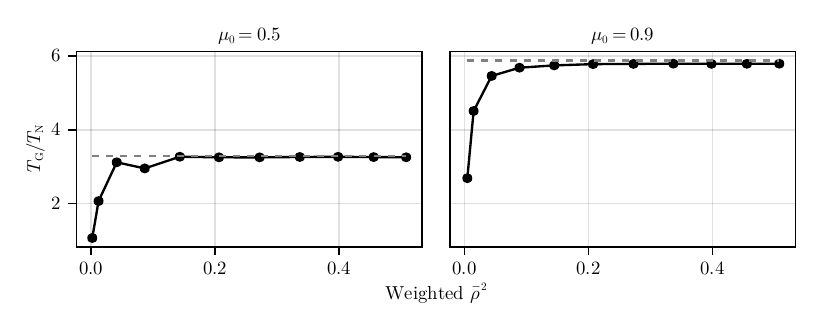}}

}

\caption{\label{fig-rho-centering}Centering sweep at fixed
\(H_{00}/L_0\). The panels show empirical
\(T_{\mathrm{G}}/T_{\mathrm{N}}\) to the relative suboptimality bound
\(10^{-6}\); the dashed line marks the strong-coupling limit
\(L_0/H_{00}\) from Equation~\ref{eq-frozen-block-rate}. The centering
fraction \(\alpha\) in Equation~\ref{eq-centering-sweep} moves the
weighted \(\bar\rho^2\) from standardized to uncentered designs. Here
\(n = 500\), \(p = 1000\), \(s = 10\), \(\lambda = 0.05\lambda_{\max}\),
and the coordinate ordering is permuted.}

\end{figure}%

The centering experiment uses one permuted coordinate sweep per pass,
which retains coefficient error modes absent from the proposition's
exact block solve. Let \(C_\pi\) be the frozen linear map for a
coefficient sweep in order \(\pi\), and let \[
  E_{\mathrm{N}} = I - \frac{1}{H_{00}}e_0e_0^\top H,
  \qquad
  E_{\mathrm{G}} = I - \frac{1}{L_0}e_0e_0^\top H = (1 - q)I + qE_{\mathrm{N}}.
\] One frozen pass is \(P_{\mathrm{N},\pi} = E_{\mathrm{N}}C_\pi\) or
\(P_{\mathrm{G},\pi} = E_{\mathrm{G}}C_\pi\). Because we draw a new
\(\pi\) each pass, the local rate is governed by products of these
random matrices, not by the spectral radius of one cyclic pass.

We compare the top random-product rates with the late-stage slopes of
the nonlinear solver in Figure~\ref{fig-rho-frozen}. We fit each
nonlinear \(\log(\text{relative suboptimality})\) curve between
\(10^{-4}\) and \(10^{-8}\), freeze \(H\) at the converged Newton
solution, and propagate the corresponding random pass matrices for
20,000 passes. The twelve cells include the transition at \(\rho = 0.6\)
and a second \(\rho = 0.2\) sweep that moves \(L_0/H_{00}\) from roughly
\(5\) to \(18\).

\begin{figure}

\centering{

\pandocbounded{\includegraphics[keepaspectratio]{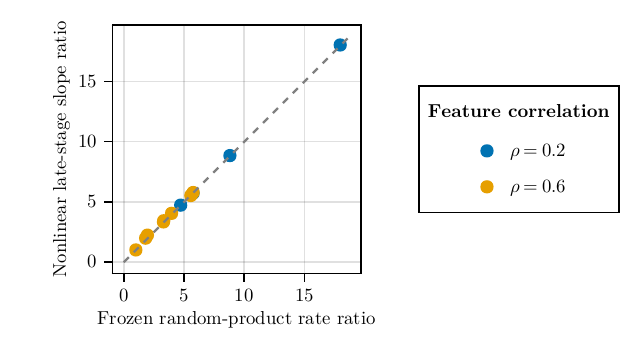}}

}

\caption{\label{fig-rho-frozen}Frozen-pass validation of the centering
mechanism. Each point compares the gradient/Newton ratio of the frozen
random-product log rates with the ratio of fitted late-stage
log-suboptimality slopes from the nonlinear solver, both under permuted
coordinate ordering. The cells cover both the weak-to-strong coupling
transition at \(\rho = 0.6\) and a second \(\rho = 0.2\) imbalance
sweep. The dashed line is equality.}

\end{figure}%

The frozen and nonlinear rates track each other across the transition.
In the \(\rho = 0.6\) sweep, \(\kappa\) rises from about \(0.4\) to
almost one, while the frozen and nonlinear rate ratios both rise from
about one to a little above three. In the \(\rho = 0.2\) sweep, the two
ratios again agree closely and reach roughly \(18\). Newton accepts the
full intercept step on every late-stage pass in these cells, so its
safeguard does not create the difference.

The rates change in two stages. Under weak coupling, an ordinary
coefficient mode controls both strategies, and their rate ratio stays
near one. As collective coupling approaches one, the coupled mode
becomes slowest; relaxing its intercept correction by \(q\) then
produces the \(1/q = L_0/H_{00}\) plateau. We have proved the
exact-block limit in Proposition~\ref{prp-frozen-block-rate} and
verified the full permuted-sweep operator numerically. A general
Lyapunov-exponent theorem for the random matrix products is beyond our
present analysis.

The curvature mismatch also appears in
Section~\ref{sec-production-solvers}. In \textbf{glmnet}'s
\texttt{modified.Newton} mode and \textbf{biglasso}'s \texttt{MM} mode,
the first intercept correction at each linearization uses the global
\(L_0 = 1/4\)~(Section~\ref{sec-irls}). The runtime ratios in
Figure~\ref{fig-production-imbalance-time} measure the practical effect
of changing the weight scheme across imbalance levels. The diagnostics
in Figure~\ref{fig-production-imbalance-curvature} track both the
shrinking curvature fraction and the changing coupling; the mode
switches also change the coefficient updates and the cost per pass.
\textbf{skglm} (version 0.5) applies a more conservative gradient update
directly to the original loss. Its slow convergence from suboptimality
\(\approx 10^{-1}\) to \(\approx 10^{-5}\) in
Figure~\ref{fig-real-skglm} accompanies the intercept's slow climb
toward its fitted value at step size \(1/4\).

An Armijo line search cannot rescue the gradient strategy when it starts
from the global-Lipschitz step \(-\partial_0 F / L_0\). It can only
\emph{shrink} the step, which is the wrong direction when
\(H_{00} \ll L_0\): the step is already too small, not too large. To
recover the iterate-level curvature \(H_{00}\) a line search would need
to \emph{expand} the step beyond \(1/L_0\), settling near \(1/H_{00}\).
But the step of size \(1/H_{00}\) applied to the gradient direction is
exactly the Newton step, so rescuing the gradient strategy turns it into
Newton. A backtracking-Armijo variant of the gradient strategy therefore
stalls identically to the bare-Lipschitz version: the Lipschitz step
satisfies the sufficient-decrease condition on the first attempt, and
the line search never fires (Figure~\ref{fig-mu-extreme}).

At the time of these experiments, \textbf{skglm}~(version 0.5)~(Bertrand
et al. 2022) used constant per-coordinate Lipschitz steps for the
coefficients, but an intercept step of \(1/(4n)\) on the unaveraged
loss. On the averaged loss, its update is \(-g_0/4\), sixteen times
smaller than the \(-4g_0\) step defined by our gradient strategy. It
uses no line search.\footnote{The pinned
  \href{https://github.com/scikit-learn-contrib/skglm/blob/b03644feedd6bac26ac0d5a4f8e0dc9a291a41cc/skglm/datafits/single_task.py\#L372-L373}{\texttt{Logistic.intercept\_update\_step}}
  returns \texttt{mean(-y\ *\ sigmoid(-y\ *\ Xw))\ /\ 4}.
  \href{https://github.com/scikit-learn-contrib/skglm/blob/b03644feedd6bac26ac0d5a4f8e0dc9a291a41cc/skglm/solvers/anderson_cd.py\#L119-L128}{\texttt{AndersonCD}}
  uses the absolute callback value in its stopping criterion and
  subtracts the callback value in the
  \href{https://github.com/scikit-learn-contrib/skglm/blob/b03644feedd6bac26ac0d5a4f8e0dc9a291a41cc/skglm/solvers/anderson_cd.py\#L162-L166}{intercept
  update}.} The other CD-based production packages we surveyed
(\textbf{glmnet}, \textbf{biglasso}, \textbf{LIBLINEAR}, and
\textbf{BlitzL1}) embed the intercept update inside an outer IRLS or
proximal-Newton linearization (Section~\ref{sec-irls}). None of the
production CD solvers we surveyed use a growth-allowing line search on
the intercept alone.

\subsection*{S2. Vector intercepts and multinomial
models}\label{sec-s-multinomial}
\addcontentsline{toc}{subsection}{S2. Vector intercepts and multinomial
models}

\subsubsection*{S2.1 Derivation and
experiments}\label{sec-vector-intercepts}
\addcontentsline{toc}{subsubsection}{S2.1 Derivation and experiments}

We can extend the argument to multinomial logistic regression with \(K\)
classes, where the scalar curvature ratio becomes a matrix operator. The
intercept is a vector \(\beta_0 \in \mathbb{R}^{K-1}\) under the
reference-class parameterization \(\eta_{iK} = 0\). Writing
\(\eta_{ik} = x_i^\top \beta_k + \beta_{0k}\) for
\(k = 1, \dots, K - 1\) and
\(p_{ik} = e^{\eta_{ik}}/\bigl(1 + \sum_{l<K} e^{\eta_{il}}\bigr)\), the
per-observation Hessian block has diagonal \(p_{ik}(1 - p_{ik})\) and
off-diagonal \(-p_{ik} p_{il}\). The intercept Hessian
\(H_{00} = (1/n)\sum_i \bigl(\operatorname{diag}(p_{i,1:K-1}) - p_{i,1:K-1} p_{i,1:K-1}^\top\bigr)\)
is a \((K - 1) \times (K - 1)\) matrix.

The Newton strategy therefore becomes a small block solve
\(\delta = -H_{00}^{-1}\nabla_0 F\) followed by Armijo backtracking on
the loss along that direction. The gradient identity in
Proposition~\ref{prp-profile-equiv}, the within-pass drift
(Equation~\ref{eq-intercept-drift}), and the Newton-coupling bias
(Equation~\ref{eq-newton-coupling}) all carry through with matrix-valued
\(H_{00}\) in place of the scalar.

To derive the matrix analogue of Equation~\ref{eq-grad-residual}, let
\(g_0 = \nabla_0 F\), collect the coefficient gradients in \(g_\beta\),
and let \(H_{\beta 0}\) be the coefficient--intercept cross block. For
an intercept step \(\beta_0 \mapsto \beta_0 - A g_0\), a Taylor
expansion gives \[
  g_\beta^+ = g_\beta - H_{\beta 0} A g_0 + O \big(\lVert A g_0\rVert^2\big).
\] Newton uses \(A = H_{00}^{-1}\), while the gradient strategy uses
\(A = L_0^{-1}I\). Dropping the common second-order remainder gives
\begin{equation}\protect\phantomsection\label{eq-grad-residual-matrix}{
  \begin{aligned}
    g_\beta^{\mathrm{grad}}
     & = g_\beta - H_{\beta 0}H_{00}^{-1}Q g_0,
     & Q                                        & = \frac{H_{00}}{L_0}, \\
    g_\beta^{\mathrm{grad}} - g_\beta^{\mathrm{prof}}
     & = H_{\beta 0}H_{00}^{-1}(I - Q)g_0,
     & g_\beta^{\mathrm{prof}}
     & = g_\beta - H_{\beta 0}H_{00}^{-1}g_0.
  \end{aligned}
}\end{equation} Thus \(Q\), rather than the individual ratios
\(H_{00,kk}/L_0\), is the matrix analogue of the scalar correction
fraction. Because the multinomial majorant satisfies
\(0 \preceq H_{00} \preceq L_0 I\), write
\(H_{00} = U\operatorname{diag}(\lambda_r)U^\top\). The gradient
strategy then retains the fraction \(\lambda_r/L_0 \in [0,1]\) of
Newton's correction in eigenmode \(u_r\). The cross block \(H_{\beta0}\)
subsequently maps that attenuated correction into the coefficient
gradients. A diagonal entry alone does not describe this coupled block
solve.

Rare classes nevertheless force small eigenvalues. Put \(q = K - 1\).
For a nonreference class \(k < K\), the Rayleigh quotient in its
coordinate direction is
\begin{equation}\protect\phantomsection\label{eq-multinomial-free-rayleigh}{
  e_k^\top H_{00} e_k = \frac{1}{n}\sum_i p_{ik}(1 - p_{ik})
                      \leq \bar p_k(1 - \bar p_k).
}\end{equation} Here \(\bar p_k = n^{-1}\sum_i p_{ik}\); at intercept
stationarity, it equals the empirical class prevalence. For the
reference class, the corresponding direction changes all free-class log
odds together. With \(v = \mathbf{1}_q/\sqrt q\),
\begin{equation}\protect\phantomsection\label{eq-multinomial-reference-rayleigh}{
  v^\top H_{00}v = \frac{1}{qn}\sum_i p_{iK}(1 - p_{iK})
                 \leq \frac{\bar p_K(1 - \bar p_K)}{q}.
}\end{equation} In either case, the smallest eigenvalue is no larger
than the displayed Rayleigh quotient and therefore vanishes with the
rare-class prevalence. When the probabilities are constant across
observations, this structure is especially explicit. If
\(D = \operatorname{diag}(\bar p_1, \ldots, \bar p_q)\) and
\(\bar p = (\bar p_1, \ldots, \bar p_q)^\top\), then
\begin{equation}\protect\phantomsection\label{eq-multinomial-intercept-inverse}{
  H_{00} = D - \bar p\bar p^\top,
  \qquad
  H_{00}^{-1} = D^{-1} + \frac{1}{\bar p_K}\mathbf{1}_q\mathbf{1}_q^\top.
}\end{equation} The first term adapts to a rare free class, while the
second adapts the common shift to a rare reference class. The
global-Lipschitz gradient step does not adapt to either. With
\(K = 10\), even a uniform marginal is only \(0.1\), and typical text or
image-classification label distributions skew further. Consequently,
low-curvature intercept modes arise naturally in multinomial models and
are not confined to extreme class imbalance.

We use coordinate descent for the class coefficients and a block update
for the intercept. For each feature \(j\), we cycle through the
\(K - 1\) class coefficients using elementwise \(\ell_1\)
soft-thresholding, refreshing the softmax probabilities after accepted
updates before moving to the next class. Each coordinate step therefore
uses the current gradient and diagonal curvature. We update the
intercept vector once per pass. An alternative is to cycle through one
class at a time while holding the others fixed, as in \textbf{glmnet}'s
symmetric-multinomial path. We return to that scheme after presenting
the block-CD results.

In the synthetic experiments, \(\pi_k\) denotes the baseline probability
used to set \(\beta_{0k} = \log(\pi_k/\pi_K)\) in the data generator.
The feature effects then change the probabilities through the softmax.
Thus \(\pi_k\) is the probability at zero feature contribution, not a
marginal prevalence. The feature means are themselves seeded random
draws. Both these means and the feature amplitude affect the realized
class counts. We reserve \(\bar p_k\) for the average fitted probability
in the Rayleigh bounds above.

\begin{figure}

\centering{

\pandocbounded{\includegraphics[keepaspectratio]{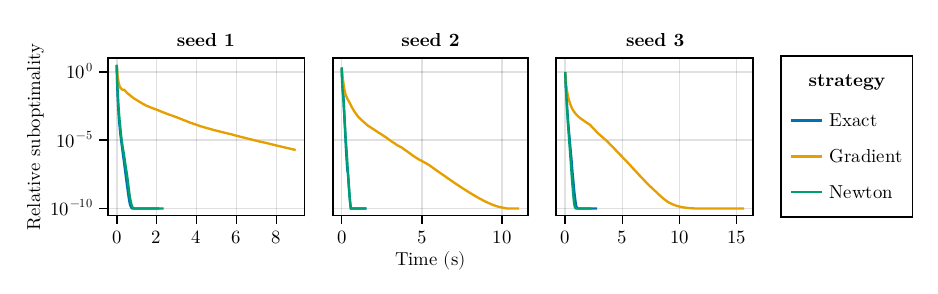}}

}

\caption{\label{fig-multinomial-imbalance}Imbalanced multinomial
logistic regression. The panels show a relative suboptimality bound
versus time for \(K = 5\), baseline probabilities
\((0.7, 0.1, 0.1, 0.05, 0.05)\), \(n = 500\), \(p = 200\),
\(\lambda = 0.05\lambda_{\max}\), cyclic CD, and three random seeds.}

\end{figure}%

We report a relative suboptimality bound
(Figure~\ref{fig-multinomial-imbalance}). The multinomial CD solver
constructs a feasible dual point by blending the centered residual with
the intercept-only feasible point until every implied probability lies
in the simplex, then rescaling onto the \(\ell_\infty\) ball
\(\lVert X^\top \Theta \rVert_\infty \le \lambda\). Within each problem,
we score every run against the largest dual value attained by any
strategy. A stalling run is therefore assessed against a shared lower
bound, and the curves cannot fall below the resolution of that
certificate.

The realized counts in Table~\ref{tbl-multinomial-counts} differ
substantially from the baseline probabilities. At intercept
stationarity, \(\bar p_k = n_k/n\), so the observed counts give the
relevant bounds, reported here to three significant digits. For class 4,
the upper bounds in Equation~\ref{eq-multinomial-free-rayleigh} are
\(0.0493\), \(0.0493\), and \(0.128\) across the three seeds. For the
reference class, Equation~\ref{eq-multinomial-reference-rayleigh} gives
\(0.00345\), \(0.0119\), and \(0.0132\). Dividing the latter by the
Böhning majorant \(L_0 = 1/2\) bounds the smallest eigenvalue of \(Q\)
by \(0.00690\), \(0.0238\), and \(0.0264\), respectively. These realized
prevalences therefore still imply a strongly attenuated gradient
correction in at least one intercept eigenmode near stationarity
(Equation~\ref{eq-grad-residual-matrix}).

\begin{table}

\caption{\label{tbl-multinomial-counts}Realized class counts for
Figure~\ref{fig-multinomial-imbalance}, with \(n=500\) and baseline
probabilities \((0.7,0.1,0.1,0.05,0.05)\). The seeded input
reconstruction in \texttt{experiments/sim-multinomial-prevalence.jl}
reports counts for all sweep cells in
\texttt{results/sim-multinomial-prevalence.csv}. }

\centering{

\begin{adjustbox}{max width=\linewidth}
\begin{tabular}{@{}rrrrrr@{}}

\toprule\noalign{}
Seed & Class 1 & Class 2 & Class 3 & Class 4 & Class 5 (reference) \\
\midrule\noalign{}

1 & 397 & 66 & 4 & 26 & 7 \\
2 & 367 & 7 & 75 & 26 & 25 \\
3 & 142 & 201 & 54 & 75 & 28 \\
\bottomrule\noalign{}
\end{tabular}
\end{adjustbox}

}

\end{table}%

Across the three seeds, Newton and exact bring the shared relative
suboptimality bound below \(10^{-8}\) in \(33\)--\(58\) passes. The
gradient strategy needs \(453\) and \(640\) passes in two seeds and
misses that threshold in the third within the thousand-pass budget.
Thus, the gradient update requires roughly an order of magnitude more
outer work on these problems.

The Newton and exact strategies follow the same visible curve. The
\((K - 1)\)-block Newton solve uses the inverse of the current intercept
Hessian, including the off-diagonal coupling
\(H_{00,kl} = -(1/n)\sum_i p_{ik}p_{il}\). The explicit inverse in
Equation~\ref{eq-multinomial-intercept-inverse} describes the special
case of probabilities constant across observations. In the fitted model,
the full Hessian solve likewise rescales the low-curvature eigenmodes
that the gradient strategy attenuates. The Armijo guard supplies descent
during the cold-start transient; full-step local accuracy applies when
damping is inactive (Equation~\ref{eq-damped-newton-error}).

We extend the test to a two-dimensional grid by shrinking the
reference-class baseline probability \(\pi_K\) to \(0.005\) and
increasing the feature amplitude to four times the baseline value. The
four corners appear in Figure~\ref{fig-multinomial-sweep}; across the
full sweep (Figure~\ref{fig-multinomial-sweep-full}), the Newton and
exact trajectories remain close. Both bring the median bound below
\(10^{-6}\) in all sixteen cells, while the gradient strategy generally
slows as \(\pi_K\) shrinks and amplitude grows. Seven individual
replicates have weak certificates and no reference-class observations.
Their trajectories do not isolate the intercept update, and the cell
medians conceal these failures, as detailed in Supplement S2.2.

\begin{figure}

\centering{

\pandocbounded{\includegraphics[keepaspectratio]{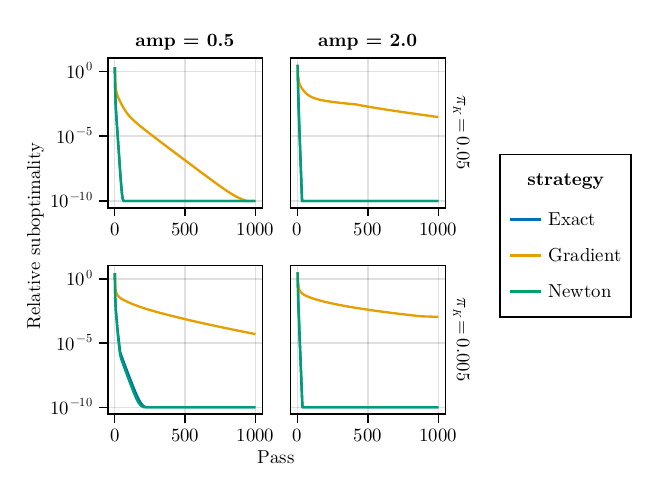}}

}

\caption{\label{fig-multinomial-sweep}Multinomial imbalance-amplitude
sweep. The panels show a relative suboptimality bound versus outer-pass
index for the four corners of the reference-class baseline probability
\(\pi_K \in \{0.05, 0.005\}\) and amplitude \(\in \{0.5, 2.0\}\) grid on
the same \(K = 5\), \(n = 500\), \(p = 200\), \(s = 5\),
\(\lambda = 0.05\lambda_{\max}\), \(\rho = 0.3\) design as
Figure~\ref{fig-multinomial-imbalance}; curves are medians over three
seeds, with early-terminated trajectories carried forward at their
terminal values.}

\end{figure}%

For multinomial models, we therefore recommend a single Armijo-guarded
Newton block step. The line search supplies the damping safeguard needed
at a scalar cold start.

The same pattern holds on real high-dimensional multinomial data. The
Yeoh2002 pediatric ALL gene-expression panel~(Yeoh et al. 2002) is a
\(K = 6\) subtype problem on \(n = 248\) samples and \(p = 12\,625\)
genes with class marginals BCR~\(= 6\%\), MLL~\(= 8\%\), E2A~\(= 11\%\),
T~\(= 17\%\), Hyperdip~\(= 26\%\), and TEL~\(= 32\%\). The two rare
classes BCR and MLL are nonreference classes, so they create
low-curvature directions in the intercept block
(Equation~\ref{eq-multinomial-free-rayleigh}). The gradient strategy
should make little progress along those directions
(Equation~\ref{eq-grad-residual-matrix}).

\begin{figure}

\centering{

\pandocbounded{\includegraphics[keepaspectratio]{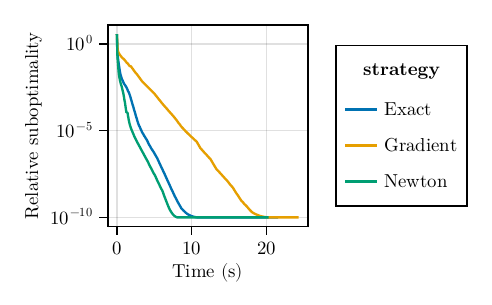}}

}

\caption{\label{fig-real-multinomial}Multinomial logistic regression on
Yeoh2002. The plot shows a relative suboptimality bound versus time at
\(\lambda = 0.05\lambda_{\max}\) under cyclic CD for \(n = 248\),
\(p = 12\,625\), and \(K = 6\). Suboptimality is bounded using the
strongest shared feasible dual point.}

\end{figure}%

On the \(p > n\) Yeoh2002 gene-expression problem, the gradient strategy
lags in the low-curvature modes associated with rare classes, while
Newton and exact need the same number of passes to reach the
\(10^{-4}\), \(10^{-6}\), and \(10^{-8}\) thresholds
(Figure~\ref{fig-real-multinomial}). Their wall-clock trajectories
differ because the intercept updates have different costs. The roughly
fourfold slowdown at a relative suboptimality bound of \(10^{-4}\) is
milder than the order-of-magnitude slowdown of
Figure~\ref{fig-multinomial-imbalance}. The realized class prevalences,
feature effects, and penalty contribution differ between these problems,
so the baseline probabilities alone do not explain the difference.

The same separation appears when the dimensions are reversed. The
StatLog Shuttle dataset~(Michie et al. 1994) is a \(K = 7\) sensor-data
classification with \(n = 58\,000\) observations and \(p = 9\)
features---the opposite shape to Yeoh2002---and class frequencies that
span four orders of magnitude. The largest class contains nearly
\(80\%\) of the observations, while the three rarest have only ten,
thirteen, and fifty samples. Here the reference class is itself among
the rarest, so the Rayleigh quotients fall much further below \(L_0\)
than in Yeoh2002
(Equation~\ref{eq-multinomial-free-rayleigh}--Equation~\ref{eq-multinomial-reference-rayleigh}).

\begin{figure}

\centering{

\pandocbounded{\includegraphics[keepaspectratio]{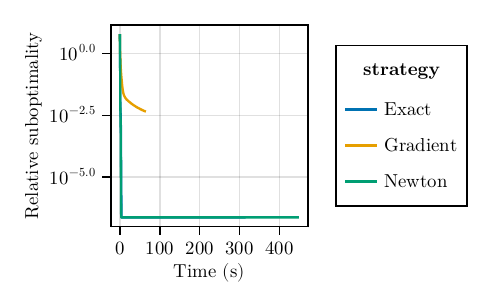}}

}

\caption{\label{fig-real-multinomial-shuttle}Multinomial logistic
regression on StatLog Shuttle. The plot shows a relative suboptimality
bound versus time at \(\lambda = 0.05\lambda_{\max}\) under cyclic CD
for \(n = 58\,000\), \(p = 9\), and \(K = 7\). Suboptimality is bounded
using the strongest shared feasible dual point.}

\end{figure}%

On Shuttle, the Newton and exact strategies bring the relative
suboptimality bound below \(10^{-4}\) in \(20\)--\(21\) passes, while
the gradient strategy remains well above that threshold after the full
thousand-pass budget. The slowdown is far more severe than the roughly
fourfold gap on Yeoh because the rarest Shuttle classes are two orders
of magnitude rarer. This is consistent with the smaller eigenmode
correction factors in Equation~\ref{eq-grad-residual-matrix}; it is not
a claim that a single Hessian diagonal controls the block solve. The
rare-class slowdown therefore appears both when \(p \gg n\) in the
gene-expression data and when \(n \gg p\) in the sensor data
(Figure~\ref{fig-real-multinomial-shuttle}).

The class-by-class scheme of \textbf{glmnet}'s symmetric-multinomial
path leads to a parallel conclusion. Holding all \(\eta_{il}\) for
\(l \neq k\) fixed, the marginal subproblem in class \(k\)'s linear
predictors \(\eta_{ik}\) is, up to a per-observation offset
\(\log\bigl(1 + \sum_{l \neq k} e^{\eta_{il}}\bigr)\), a scalar logistic
regression. The scalar Schur-coupling residual
(Equation~\ref{eq-grad-residual}) therefore applies \emph{per class},
and any rare class with extreme per-class response imbalance triggers
exactly the gradient-strategy failure documented in the binary setting,
now along the rare class's intercept coordinate. Multinomial models
therefore expose two independent failure mechanisms: scalar imbalance
under class-by-class CD and rare-class eigenmode attenuation under block
CD. We recommend the Newton strategy for both schemes.

\subsubsection*{S2.2 Full imbalance--amplitude
sweep}\label{s2.2-full-imbalanceamplitude-sweep}
\addcontentsline{toc}{subsubsection}{S2.2 Full imbalance--amplitude
sweep}

The four-corner view in Figure~\ref{fig-multinomial-sweep} shows the
endpoints of the imbalance--amplitude experiment but hides the
transition between them. Here we report all sixteen cells. We use the
same \(K = 5\) multinomial logistic design with \(n = 500\),
\(p = 200\), \(s = 5\), feature correlation \(\rho = 0.3\), and
\(\lambda = 0.05\lambda_{\max}\). The reference-class baseline
probability \(\pi_K\) ranges from \(0.05\) to \(0.005\), and the feature
amplitude ranges from \(0.5\) to \(2.0\). Each curve is the median
relative suboptimality bound over three seeds. This varies baseline log
odds and signal amplitude; the realized prevalences also change with
amplitude and seed.

\begin{figure}

\centering{

\pandocbounded{\includegraphics[keepaspectratio]{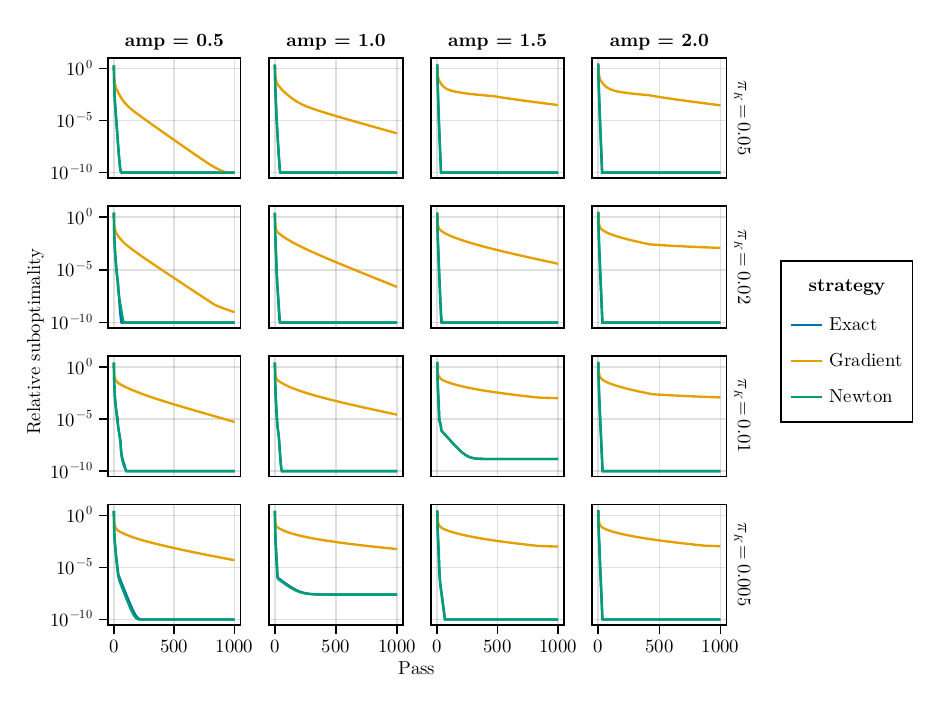}}

}

\caption{\label{fig-multinomial-sweep-full}Full multinomial
imbalance-amplitude sweep. The panels show a relative suboptimality
bound versus outer-pass index over a \(4 \times 4\) grid of
reference-class baseline probability
\(\pi_K \in \{0.05, 0.02, 0.01, 0.005\}\) and feature amplitude
\(\in \{0.5, 1.0, 1.5, 2.0\}\). Curves are medians over three seeds for
\(K = 5\), \(n = 500\), \(p = 200\), \(s = 5\), \(\rho = 0.3\), and
\(\lambda = 0.05\lambda_{\max}\); early-terminated trajectories are
carried forward at their terminal values.}

\end{figure}%

Across the grid, the gradient strategy generally stalls more severely as
\(\pi_K\) decreases and feature amplitude grows; Newton and exact remain
close. Both reach a median bound below \(10^{-6}\) in every cell. At the
replicate level, Newton and exact reach that threshold in \(41\) of the
\(48\) seeded problems, while gradient reaches it in \(14\). The seven
remaining problems all lack reference-class observations and have weak
shared certificates. At \(\pi_K = 0.005\), the reference-class counts
for seeds 1--3 are \((0,7,1)\) at amplitude \(1.5\) and \((0,8,1)\) at
amplitude \(2.0\). Thus, the median curves in these cells conceal an
uncertified replicate.

When the reference class is absent, increasing all free-class intercepts
together strictly decreases the loss at every finite iterate, without
changing the penalty. There is therefore no finite intercept minimizer
for those replicates, and the positive-curvature local analysis at a
finite optimum does not apply. Their weak certificates cannot identify
coefficient convergence or near-separation as the cause.

The class-count cache also flags absent-class replicates elsewhere in
the sweep. An objective certificate can still bound distance to the
infimum in these boundary cases, but it does not imply convergence to
finite intercepts. We retain their trajectories to show this limitation.
The curvature analysis at a finite optimum applies to replicates with
every class observed. The solvers use the same seeded data within each
problem, although the grid does not hold marginal prevalence fixed as
amplitude varies. Where the gradient strategy reaches the \(10^{-6}\)
threshold, it generally needs at least an order of magnitude more passes
than Newton. In most of the remaining certified cells, it misses that
threshold within the \(1\,000\)-pass budget even though Newton reaches
it.

\subsection*{S3. Experimental methods and
reproducibility}\label{sec-s-methodology}
\addcontentsline{toc}{subsection}{S3. Experimental methods and
reproducibility}

We run all internal-solver experiments with the \textbf{Intercepts}
Julia package on Julia 1.11 and the default \textbf{OpenBLAS}. Our
primary measure of work is the number of outer passes. One pass consists
of a sweep through all \(p\) coordinates and an intercept update. Unless
noted, ``passes to tolerance'' means the number of completed passes
needed to bring the shared relative suboptimality bound below
\(10^{-8}\). Pass budgets (\texttt{maxit}) are 1000 for the
single-\(\lambda\) figures and 5000 for the heatmap of
Section~\ref{sec-imbalance-reg}; cells that miss tolerance are shown at
the cap. We standardize the synthetic designs, use cyclic coordinate
ordering unless noted, and set the default AR(1) feature correlation to
\(\rho = 0.6\). We fix the seeds per script. We average three seeds per
cell in the heatmap of Section~\ref{sec-imbalance-reg}, show three seeds
in Figure~\ref{fig-irls-comparison}, and show five in the safeguard
experiments. We use one fixed seed for the remaining single-trajectory
figures.

Every coefficient update is accepted through a Tseng--Yun Armijo
backtrack on the proximal model decrease, with slope constant
\(10^{-4}\), step-halving, and at most 30 trials; in the common case the
first trial is accepted. The \texttt{cdsolver} defaults match the
paper's defaults (cyclic ordering, Newton intercept strategy), and every
experiment script passes its configuration explicitly.

For scalar models, the exact strategy ordinarily returns when the
normalized intercept gradient is at most \(10^{-10}\). If it consumes
all 50 safeguarded Newton steps, a final check accepts a normalized
gradient of at most \(5 \times 10^{-10}\); larger residuals raise an
error. This cap-only margin prevents saturated floating-point predictors
from stalling at the nominal boundary and remains twenty times smaller
than the \(10^{-8}\) reporting threshold. The multinomial implementation
uses a strict \(10^{-8}\) normalized block-gradient threshold and raises
an error whenever it misses that target after 50 steps. Tests exercise
both acceptance and rejection at the scalar cap. Every reported result
for the exact strategy met the corresponding conditional-solve
criterion.

At each diagnostic point, the raw residual lies in the loss-conjugate
domain but need not satisfy dual stationarity for the unpenalized
intercept. Simply subtracting its mean can leave that domain. We
therefore blend the centered residual with the intercept-only feasible
residual, taking the largest blend whose implied binomial probability,
Poisson intensity, or multinomial probability vector remains in its
conjugate domain. Both endpoints satisfy intercept stationarity. We then
rescale the result toward zero to enforce
\(\lVert X^\top \Theta \rVert_\infty \le \lambda\); this second step
preserves the domain because it moves the implied mean toward the
observed response. This feature-feasibility rescaling is the standard
dual-scaling step used by Gap Safe rules~(Fercoq et al. 2015; Ndiaye et
al. 2017); the preceding anchor blend adds the conjugate-domain and
unpenalized-intercept constraints needed here. The solver evaluates the
conjugate only after checking domain membership and stops when its
run-local relative duality gap reaches the stated tolerance.

For reporting, we put all strategies for the same problem instance on
one reference. If \(D_{s,t}\) is the feasible dual value from strategy
\(s\) at diagnostic \(t\), we set
\begin{equation}\protect\phantomsection\label{eq-shared-suboptimality}{
  \underline D = \max_{s,t} D_{s,t}, \qquad \overline P = \min_{s,t} P_{s,t},
  \qquad
  B_{s,t} = \frac{P_{s,t} - \underline D}{|\overline P|}.
}\end{equation} We use a floor of \(10^{-15}\) for the denominator in
code. Weak duality gives
\(P_{s,t} - P^\star \le P_{s,t} - \underline D\), so \(B_{s,t}\) is a
relative suboptimality upper bound with a strategy-independent scale and
reference. The certificate width
\((\overline P - \underline D)/|\overline P|\) is below \(10^{-8}\) for
every scalar trajectory figure in the paper. This is a retrospective
comparison: a displayed curve may cross the reporting threshold before
that run's own duality gap triggers termination. In the two longest
rate-diagnostic sweeps, we compute the tight Newton reference first and
stop a strategy as soon as its primal reaches the corresponding
shared-bound threshold. This changes neither its iterates nor its
reported crossing; it only avoids running beyond the point used in the
figure.

For the real-data experiments, we use pinned datasets from Zenodo~(DOI:
\href{https://doi.org/10.5281/zenodo.20315625}{10.5281/zenodo.20315625})
and standardize them before fitting. We include in the repository the
code needed to retrieve these data and generate the shared standardized
problem used in the additional production-solver diagnostics. Those
cached diagnostics use \textbf{biglasso} 1.6-1 (R), \textbf{adelie}
1.0-10 (R), \textbf{skglm} 0.5 (Python), \textbf{LIBLINEAR} through
\textbf{scikit-learn}'s \texttt{solver="liblinear"} wrapper, and
\textbf{BlitzL1} at commit \texttt{aef8d02}. They record one fit per
tolerance setting. The separate cold-start diagnostic in
\hyperref[sec-cold-start-diag]{Supplement S5.2} uses \textbf{adelie}
1.0-8. The controlled imbalance sweep uses \textbf{glmnet} 5.0 and
\textbf{biglasso} 1.6-1, with seven warmed timing repetitions per mode
and cell. It uses one BLAS/OpenMP thread and the same explicit
regularization path for both modes. The benchmark stores every
calibration result and timing, the reference certificates and curvature
diagnostics, and the package versions in
\texttt{results/production-imbalance/}. Its reference fits use
guarded-Newton CD followed by a joint Newton refinement on the active
coefficients and intercept. We evaluate their primal and feasible dual
objectives in 128-bit arithmetic to avoid loss of certificate accuracy
under saturated predictions. Run \texttt{task\ production-imbalance} to
regenerate these artifacts. We document the other commands, tolerance
grids, and iterate caps in
\href{https://github.com/jolars/intercepts/blob/main/REPRODUCIBILITY.md}{the
reproduction guide}.

We store all experimental outputs in the repository. We therefore render
most figures from cached files, and regenerating a panel requires
rerunning the corresponding experiment. The illustrative
Figure~\ref{fig-parametric} and Figure~\ref{fig-schur-example} are
computed at render time. We pin the Julia, R, and Python toolchains in
the repository. Except for the controlled production imbalance sweep,
they use their default \textbf{OpenBLAS} builds without thread limits.
The earlier experiments ran on an x86\_64 Linux workstation with an AMD
Ryzen 9 7900 (12 cores, 24 threads) and 64 GB of RAM. The production
imbalance sweep and the multinomial experiments ran on an Intel Core
Ultra 7 155U laptop with 32 GB of RAM. The wall-clock measurements
describe these machines and software environments.

\subsection*{S4. Robustness analyses}\label{sec-s-robustness}
\addcontentsline{toc}{subsection}{S4. Robustness analyses}

\subsubsection*{S4.1 Cyclic versus permuted
ordering}\label{sec-ordering}
\addcontentsline{toc}{subsubsection}{S4.1 Cyclic versus permuted
ordering}

The within-pass drift in Equation~\ref{eq-intercept-drift} does not
depend on coordinate ordering, but strict cyclic order deterministically
pairs the intercept solve with the same first coordinate on every pass,
which could compound the drift.

On w1a, we find no such compounding under the per-coordinate Armijo
backtracking used by our coordinate-descent solver, \texttt{cdsolver},
after every coordinate update (\hyperref[sec-s-methodology]{Supplement
S3}). All three strategies converge under both orderings, and the
pass-count differences are less than a factor of two
(Figure~\ref{fig-first-example-ordering}). Newton and exact remain below
100 passes under either ordering, whereas the gradient strategy takes
about 340 under both. The qualitative ranking---Newton and exact fast,
gradient slow---is unchanged, and we use cyclic CD as the default
elsewhere in the paper.

\begin{figure}

\centering{

\pandocbounded{\includegraphics[keepaspectratio]{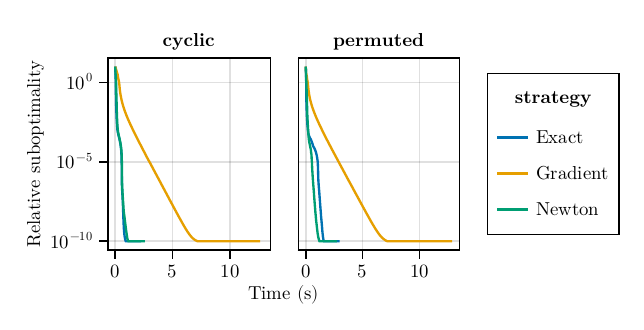}}

}

\caption{\label{fig-first-example-ordering}Ordering comparison on w1a.
The panels show a relative suboptimality bound versus time under cyclic
(left) and permuted (right) coordinate ordering for the same
configuration as Figure~\ref{fig-first-example}.}

\end{figure}%

\subsubsection*{S4.2 Warm-started path}\label{sec-warmstart-path}
\addcontentsline{toc}{subsubsection}{S4.2 Warm-started path}

In the single-\(\lambda\) experiments above, we use the most difficult
regime: a cold start with a large \(|\partial_0 F|\) at the zero
iterate. The drift in Equation~\ref{eq-intercept-drift} and the small
correction in Equation~\ref{eq-grad-residual} have their largest effects
there.

In practice, users rarely solve for a single \(\lambda\) in isolation.
Instead, they sweep a grid \(\lambda_1 > \lambda_2 > \dots > \lambda_K\)
and warm-start each subproblem from the previous solution. As the
argument above suggests, warm-starting begins each subproblem with
\(|\partial_0 F|\) already small. A single Newton step is then
essentially exact at the start, so \(k_0 \in \{1, 2\}\) in the exact
strategy and the \(k_0 - 1\) per-pass overhead evaporates. The
gradient-strategy slowdown should also lessen because the iterate is
close to optimum and the \(H_{00}/L_0\) shrinkage acts on an
already-small residual (Equation~\ref{eq-newton-coupling}).

We test this by solving an \(\ell_1\)-logistic path on two problems---a
simulated standardized design (\(n = 500\), \(p = 1000\), \(s = 10\),
\(\mu_0 = 0.9\), \(\rho = 0.6\)) and the w1a benchmark used in
Figure~\ref{fig-first-example}---over a geometrically spaced grid of
\(K = 20\) values from \(\lambda_{\max}\) down to
\(5 \cdot 10^{-3}\lambda_{\max}\).

Each subproblem runs until its local relative duality gap reaches
\(10^{-8}\) or to a budget of 300 outer passes / 20 wall-seconds,
whichever comes first. We then report the first pass at which the shared
relative suboptimality bound reaches \(10^{-8}\). We compare the three
strategies under both warm start (initialized from the
previous-\(\lambda\) solution) and cold start (re-initialized at zero),
and under both cyclic and permuted feature ordering. The panels here
show cyclic CD; Figure~\ref{fig-warm-start-full} and
Figure~\ref{fig-warm-start-mechanism-full} give the full grid over both
orderings.

\begin{figure}

\centering{

\pandocbounded{\includegraphics[keepaspectratio]{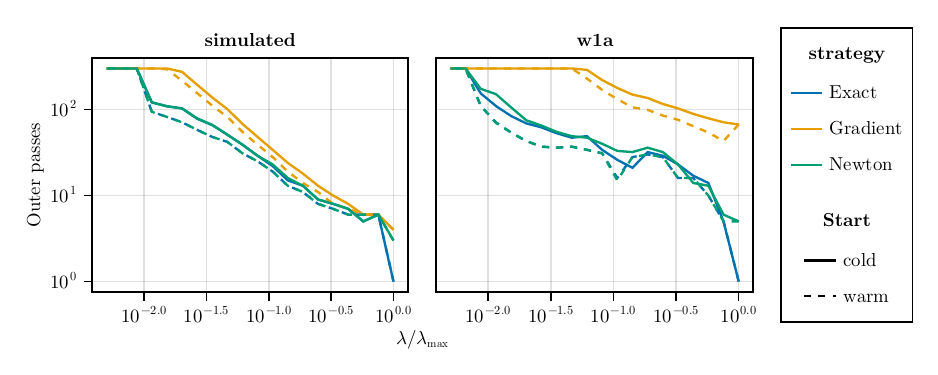}}

}

\caption{\label{fig-warm-start}Warm-started and cold-started
regularization paths. The panels show outer passes per subproblem along
a logistic-regression \(\lambda\) path under cyclic CD; dashed lines use
warm starts, solid lines cold starts, and curves capped at 300 hit the
pass budget. The left panel shows the simulated imbalanced problem
(\(\mu_0 = 0.9\)) and the right panel w1a.}

\end{figure}%

\begin{figure}

\centering{

\pandocbounded{\includegraphics[keepaspectratio]{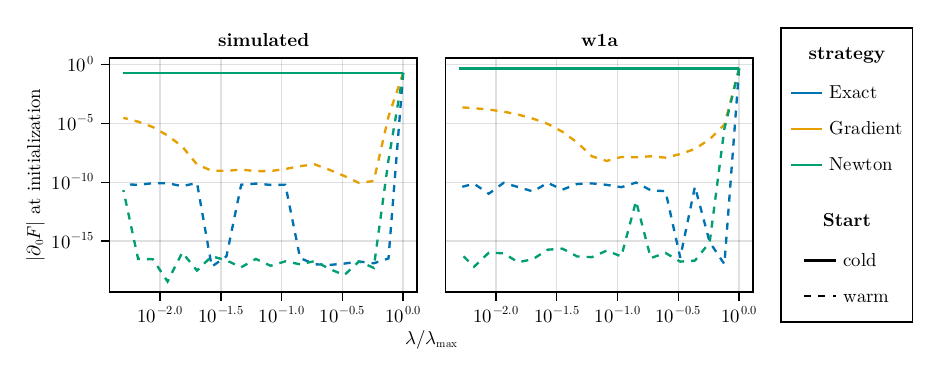}}

}

\caption{\label{fig-warm-start-mechanism}Initial intercept gradients
along the regularization path. The panels show \(|\partial_0 F|\) at the
initial iterate of each subproblem for the same path and cyclic-CD setup
as Figure~\ref{fig-warm-start}; dashed lines use warm starts and solid
lines cold starts.}

\end{figure}%

In Figure~\ref{fig-warm-start-mechanism}, warm-starting drives
\(|\partial_0 F|\) toward zero as \(\lambda\) shrinks and the
previous-\(\lambda\) solution moves closer to the current optimum. Cold
start, by contrast, sets the initial iterate to zero at every
\(\lambda\), so \(|\partial_0 F|\) is the marginal class
imbalance---roughly \(0.20\) on the simulated problem and \(0.47\) on
w1a, identical across strategies.

The pass-count reduction is modest and uneven
(Figure~\ref{fig-warm-start}). Averaged over the simulated path,
warm-starting saves about \(10\%\) of the passes for all three
strategies. On w1a, it again helps the gradient strategy only modestly,
while Newton and exact often improve by roughly \(20\%\)--\(40\%\). The
number of budget-capped subproblems does not fall uniformly, but Newton
and exact remain close under every configuration. This is what
Equation~\ref{eq-intercept-drift} suggests: once \(|\partial_0 F|\) is
already small, the \(k_0 - 1\) per-pass penalty of the exact strategy
largely disappears.

Warm-starting does not rescue the gradient strategy. At the smallest
\(\lambda\) under cyclic ordering, its relative bound improves by
roughly two orders of magnitude on the simulated problem and by more
than one order on w1a. It still remains well above the corresponding
Newton and exact bounds. Thus, a better initial point helps without
compensating for the small \(H_{00}/L_0\) correction.

These results leave the practical recommendation unchanged: use a
safeguarded Newton step in direct CD and a fully resolved intercept on
the frozen quadratic at a prox-Newton/IRLS boundary. The exact
original-loss strategy is also defensible in a warm-started path solve,
where its extra work is small.

\subsubsection*{S4.3 Per-pass cost
decomposition}\label{sec-per-pass-cost}
\addcontentsline{toc}{subsubsection}{S4.3 Per-pass cost decomposition}

The pass-count and wall-clock figures above combine two quantities that
the within-pass drift (Equation~\ref{eq-intercept-drift}) separates: the
\emph{number of outer passes} needed to reach a tolerance and the
\emph{per-pass inner cost} of the intercept update. We test whether the
exact strategy pays extra inner work per outer pass without a
commensurate rate improvement. We count its Newton directions, but floor
the count at one because even a zero-step solve must evaluate the
conditional residual. Local quadratic convergence explains why few
directions remain once an iterate is near the conditional minimizer, but
it does not supply a global bound on the work.

We instrument the solver to record \(k_0\) for every outer pass. We
compare the resulting cost trajectories with the relative-suboptimality
trajectories on the imbalanced simulated design of
Figure~\ref{fig-mu-extreme} (\(n = 500\), \(p = 1000\), \(s = 10\),
cyclic CD, \(\lambda / \lambda_{\max} = 0.05\)), drawn with an
independent seed. The solver runs to a local relative-duality-gap
tolerance of \(10^{-10}\), and we report progress to the shared
\(10^{-8}\) bound (Equation~\ref{eq-shared-suboptimality}), sweeping
\(\mu_0 \in \{0.5,0.9,0.99\}\) to vary the cold-start magnitude of
\(|\partial_0 F|\).

\begin{figure}

\centering{

\pandocbounded{\includegraphics[keepaspectratio]{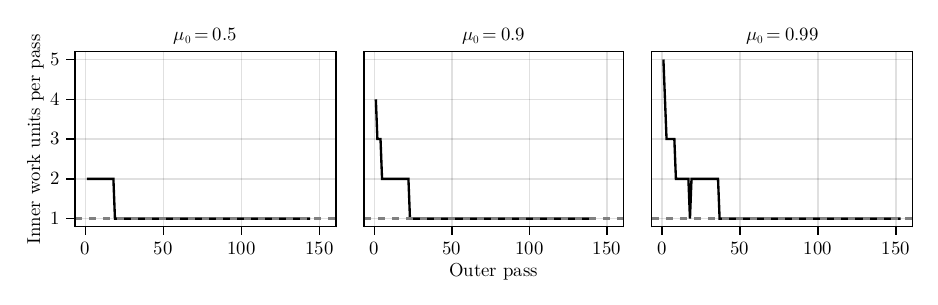}}

}

\caption{\label{fig-per-pass-cost}Per-pass inner work under the exact
strategy. The panels show the Newton-direction count, floored at one for
the mandatory residual check, versus outer-pass index for
\(\mu_0 \in \{0.5, 0.9, 0.99\}\) on an independent draw of the simulated
design of Figure~\ref{fig-mu-extreme}. The dashed line at \(k_0 = 1\)
marks the Newton strategy.}

\end{figure}%

\begin{figure}

\centering{

\pandocbounded{\includegraphics[keepaspectratio]{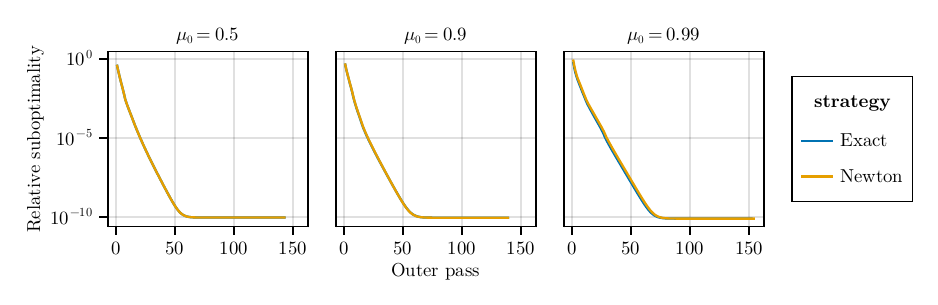}}

}

\caption{\label{fig-per-pass-progress}Per-pass progress on the same
runs. The panels show a relative suboptimality bound after each outer
pass for the Newton and exact strategies on the experiments of
Figure~\ref{fig-per-pass-cost}.}

\end{figure}%

The two figures show the overhead directly. In
Figure~\ref{fig-per-pass-cost}, the inner work peaks between two and
five units on the cold-start pass, with larger peaks under greater
imbalance, then decays to one as the iterate approaches the joint
optimum. Summed through the shared \(10^{-8}\) bound, exact uses roughly
one-and-a-half to twice the work of Newton, again increasing with
imbalance.

The extra work is hard to justify: the Newton and exact strategies trace
almost the same relative-suboptimality curve and reach \(10^{-8}\) in
comparable pass counts that differ by at most one in these runs. The
exact strategy spends more inner work as imbalance grows without
improving outer-pass progress (Figure~\ref{fig-per-pass-progress}).

The penalty is modest in these runs: \(k_0\) remains well below the
inner-loop cap of 50, and most passes need at most one Newton direction.
Every returned update satisfies the nominal \(10^{-10}\)
conditional-gradient tolerance because no update reaches the cap; the
largest observed \(k_0\) is five. A cap hit would invoke the final
numerical check documented in \hyperref[sec-s-methodology]{Supplement
S3}, not automatically produce a cached result. This behavior is
consistent with local quadratic convergence after the iterates approach
their conditional minimizers, but the experiment, rather than a global
Newton bound, establishes the reported cost. The exact strategy incurs
most of its overhead in the first few cold-start passes, when
\(|\partial_0 F|\) is largest. The warm-start regime of
\hyperref[sec-warmstart-path]{Supplement S4.2} largely avoids that
window.

\subsubsection*{S4.4 Full warm-start
grids}\label{s4.4-full-warm-start-grids}
\addcontentsline{toc}{subsubsection}{S4.4 Full warm-start grids}

The cyclic-CD panels in \hyperref[sec-warmstart-path]{Supplement S4.2}
show the ordering used throughout most of the article. Here we restore
the permuted-CD runs to display the complete experiment. We solve a
\(K = 20\) regularization path from \(\lambda_{\max}\) to
\(5 \cdot 10^{-3}\lambda_{\max}\) on two problems: the simulated design
with \(n = 500\), \(p = 1000\), \(s = 10\), \(\mu_0 = 0.9\), and
\(\rho = 0.6\), and the w1a benchmark. Each subproblem starts either
from zero or from the preceding solution and runs until its local
relative duality gap reaches \(10^{-8}\), a budget of 300 outer passes,
or a 20-second time limit. We report passes to the shared \(10^{-8}\)
suboptimality bound.

\begin{figure}

\centering{

\pandocbounded{\includegraphics[keepaspectratio]{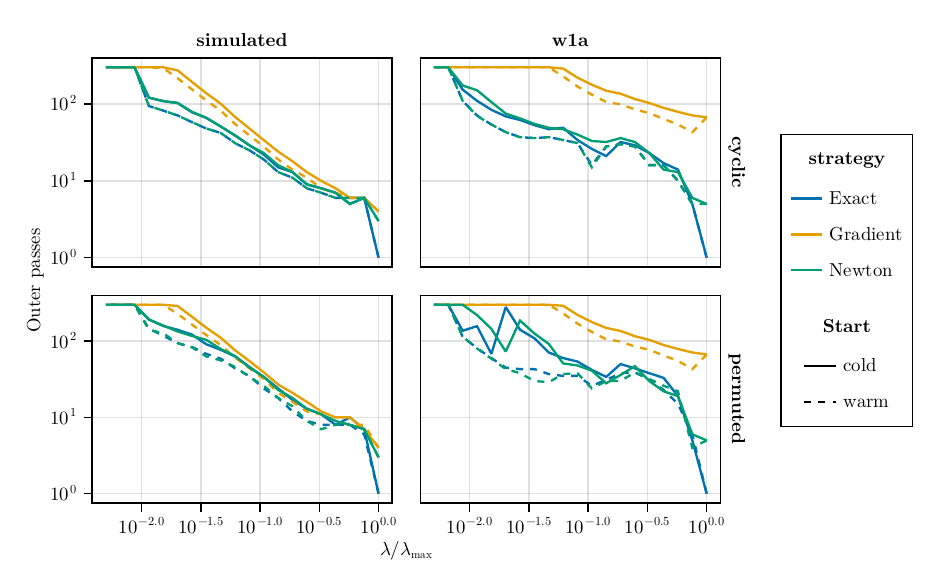}}

}

\caption{\label{fig-warm-start-full}Full warm-started and cold-started
regularization paths. The panels show outer passes per subproblem along
the logistic-regression \(\lambda\) path on the simulated imbalanced
design and w1a for cyclic CD (top) and permuted CD (bottom). Dashed
lines use warm starts, solid lines use cold starts, and curves capped at
300 hit the pass budget.}

\end{figure}%

The full pass-count grid confirms that feature ordering does not change
the comparison among intercept strategies. Warm starts reduce average
pass counts under every configuration, but they remove a budget cap in
only two cases: the cyclic gradient run on the simulated path and the
permuted Newton run on w1a. Newton and exact remain close under both
orderings. The gradient strategy still slows most sharply at small
\(\lambda\), especially on w1a, where its conservative intercept
correction controls the rate.

To connect these pass counts to the proposed mechanism,
Figure~\ref{fig-warm-start-mechanism-full} reports the intercept
gradient before the first pass of every subproblem. This quantity
measures how much intercept error the new subproblem inherits from its
initialization.

\begin{figure}

\centering{

\pandocbounded{\includegraphics[keepaspectratio]{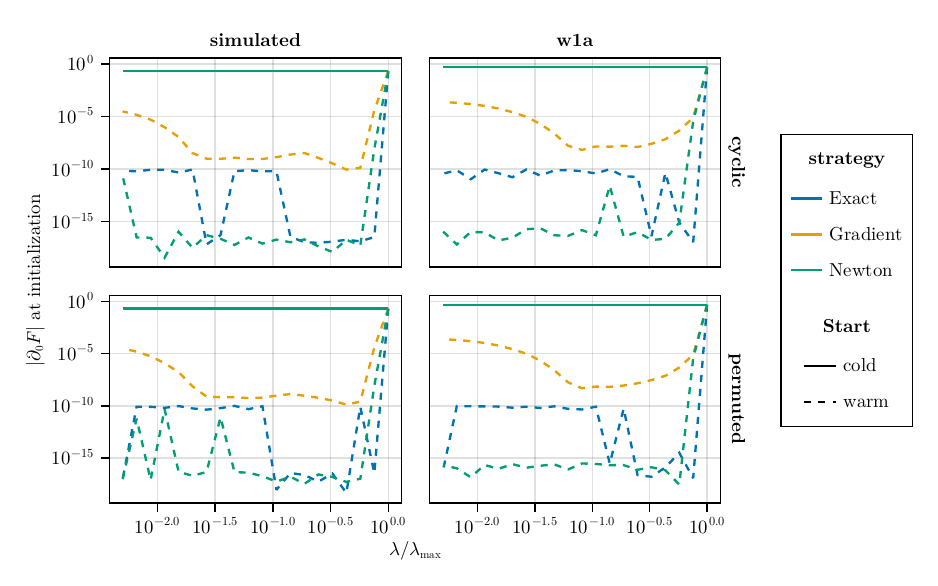}}

}

\caption{\label{fig-warm-start-mechanism-full}Full initial-gradient
regularization paths. The panels show \(|\partial_0 F|\) at the initial
iterate of each subproblem along the \(K = 20\) \(\lambda\) path on the
simulated imbalanced design and w1a for cyclic CD (top) and permuted CD
(bottom). Dashed lines use warm starts, and solid lines use cold
starts.}

\end{figure}%

Under both orderings, warm starting drives \(|\partial_0 F|\) toward
zero as the path proceeds, whereas cold starting resets it to the
marginal class imbalance at every value of \(\lambda\). The repeated
pattern links the Newton--exact agreement in
Figure~\ref{fig-warm-start-full} to the initialization rather than to
cyclic ordering: when the inherited intercept gradient is already small,
resolving the conditional intercept problem beyond one guarded Newton
step adds little.

\subsection*{S5. Production-solver diagnostics}\label{sec-s-production}
\addcontentsline{toc}{subsection}{S5. Production-solver diagnostics}

\subsubsection*{S5.1 Poisson family}\label{sec-production-poisson}
\addcontentsline{toc}{subsubsection}{S5.1 Poisson family}

Poisson loss rules out several production solvers before any trajectory
is recorded. Because \(f''(\eta) = e^\eta\) is unbounded on
\(\mathbb{R}\), no global Lipschitz upper bound exists. Consequently,
neither the \(w \equiv 1/4\) MM majorant of \textbf{glmnet}
\texttt{modified.Newton} and \textbf{biglasso} \texttt{MM} nor the
constant-Lipschitz coordinate step of \textbf{skglm} \texttt{AndersonCD}
has a Poisson analog. The status of each solver on this family is
structural, not empirical:

\begin{itemize}
\tightlist
\item
  \textbf{glmnet} supports \texttt{family\ =\ "poisson"} with a single
  fitting mode (local-IRLS); no \texttt{type.poisson} switch exists.
\item
  \textbf{biglasso} does not implement \texttt{family\ =\ "poisson"} at
  all; its signature is restricted to \texttt{gaussian},
  \texttt{binomial}, \texttt{cox}, and \texttt{mgaussian}.
\item
  \textbf{skglm} \texttt{AndersonCD} requires
  \texttt{datafit.get\_lipschitz()}, which the \texttt{Poisson} datafit
  declines to implement
  (\texttt{AttributeError:\ Poisson\ is\ not\ \ \ compatible\ with\ solver\ AndersonCD}).
  The \texttt{AndersonCD} solver therefore cannot fit a Poisson model.
\item
  \textbf{skglm} \texttt{ProxNewton} pairs with the \texttt{Poisson}
  datafit through local prox-Newton subproblems and represents
  \textbf{skglm} in the panel.
\item
  \textbf{adelie} supports \texttt{glm.poisson(y)} through the same
  direct-Newton / weighted-centered-features parameterization as the
  logistic case.
\end{itemize}

Three solvers based on local Newton curvature therefore remain:
\textbf{glmnet}, \textbf{adelie}, and \textbf{skglm}
\texttt{ProxNewton}. We run each in path mode on congress109 with
\(\lambda = 0.05\lambda_{\max}\), sweeping the solver-specific tolerance
(Figure~\ref{fig-real-poisson-production}).

\begin{figure}

\centering{

\pandocbounded{\includegraphics[keepaspectratio]{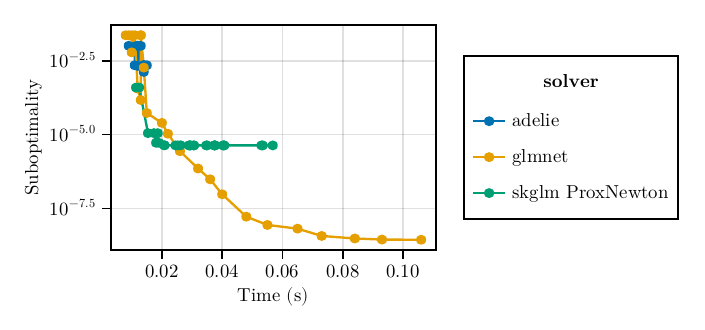}}

}

\caption{\label{fig-real-poisson-production}Production Poisson solvers
on congress109. The plot shows a suboptimality bound versus wall-clock
time for \textbf{glmnet}, \textbf{adelie}, and \textbf{skglm}
\texttt{ProxNewton} in path mode at \(n = 529\), \(p = 999\),
\(\bar y = 11.83\), and \(\lambda = 0.05\lambda_{\max}\). Every solver
uses the common certified feasible dual reference \(D^\star\).}

\end{figure}%

The within-package weight-scheme switches in
Figure~\ref{fig-production-imbalance-time} have no Poisson analog. The
loss itself does not admit the \(w \equiv 1/4\) surrogate used by the
gradient-class solvers. Every working production Poisson solver in our
survey therefore uses local Newton curvature, as the analysis
recommends.

\subsubsection*{S5.2 Single-lambda cold
start}\label{sec-cold-start-diag}
\addcontentsline{toc}{subsubsection}{S5.2 Single-lambda cold start}

The path-mode protocol used above lets inner CD avoid the transient at
\(\beta = 0\). When we instead solve a single \(\lambda\) from scratch,
an unguarded Newton intercept update can become unstable
(Equation~\ref{eq-newton-coupling}). We examine this possibility in
Table~\ref{tbl-cold-start-diag} at \(\lambda = 0.05\lambda_{\max}\) with
loose inner and outer solver tolerances of \(10^{-1}\), sweeping the
iterate cap from \(10^1\) to \(10^5\); the path-mode reference primal is
\(0.0634\) on w1a and \(0.0497\) on news20-3pct. These settings form a
targeted diagnostic rather than a test of either package's default
configuration, whose convergence tolerances are stricter.

\begin{longtable}[]{@{}
  >{\raggedright\arraybackslash}p{(\linewidth - 8\tabcolsep) * \real{0.1068}}
  >{\raggedright\arraybackslash}p{(\linewidth - 8\tabcolsep) * \real{0.3301}}
  >{\raggedright\arraybackslash}p{(\linewidth - 8\tabcolsep) * \real{0.2816}}
  >{\raggedleft\arraybackslash}p{(\linewidth - 8\tabcolsep) * \real{0.0680}}
  >{\raggedright\arraybackslash}p{(\linewidth - 8\tabcolsep) * \real{0.2136}}@{}}
\caption{Single-\(\lambda\) cold-start diagnostic on w1a and
news20-3pct. Each row reports the intercept estimate, primal objective,
and iteration count at \(\lambda = 0.05\lambda_{\max}\) with inner and
outer solver tolerances of \(10^{-1}\); the iterate cap is swept from
\(10^1\) to \(10^5\), and the path-mode reference primal is \(0.0634\)
on w1a and \(0.0497\) on news20-3pct. The \textbf{adelie} entries use
version 1.0-8. }\label{tbl-cold-start-diag}\tabularnewline
\toprule\noalign{}
\begin{minipage}[b]{\linewidth}\raggedright
dataset
\end{minipage} & \begin{minipage}[b]{\linewidth}\raggedright
solver
\end{minipage} & \begin{minipage}[b]{\linewidth}\raggedright
\(\beta_0\)
\end{minipage} & \begin{minipage}[b]{\linewidth}\raggedleft
primal
\end{minipage} & \begin{minipage}[b]{\linewidth}\raggedright
iterations
\end{minipage} \\
\midrule\noalign{}
\endfirsthead
\toprule\noalign{}
\begin{minipage}[b]{\linewidth}\raggedright
dataset
\end{minipage} & \begin{minipage}[b]{\linewidth}\raggedright
solver
\end{minipage} & \begin{minipage}[b]{\linewidth}\raggedright
\(\beta_0\)
\end{minipage} & \begin{minipage}[b]{\linewidth}\raggedleft
primal
\end{minipage} & \begin{minipage}[b]{\linewidth}\raggedright
iterations
\end{minipage} \\
\midrule\noalign{}
\endhead
\bottomrule\noalign{}
\endlastfoot
w1a & \textbf{biglasso} \texttt{alg.logistic\ =\ "Newton"} & \(-723\)
(range \([-1722,-291]\)) & \(16.19\) & \(10^5\) cap, oscillates \\
w1a & \textbf{biglasso} \texttt{alg.logistic\ =\ "MM"} & \(-3.51\) &
\(0.088\) & \(2\) \\
w1a & \textbf{adelie} & --- & --- & empty state returned \\
news20-3pct & \textbf{biglasso} \texttt{alg.logistic\ =\ "Newton"} &
\(-3.48\) & \(0.310\) & \(2\) (premature stop) \\
news20-3pct & \textbf{biglasso} \texttt{alg.logistic\ =\ "MM"} &
\(-3.48\) & \(0.089\) & \(2\) \\
news20-3pct & \textbf{adelie} & \(-5.22\) & \(0.052\) & converges \\
\end{longtable}

In Table~\ref{tbl-cold-start-diag}, the global majorant \(L_0 = 1/4\)
acts as an \(X\)-free safeguard on the intercept: two scalars (\(n\) and
\(\bar y\)) determine the step, and that is enough to absorb the
cold-start transient at trivial cost on the unpenalized coordinate. The
analogous safeguard for a coefficient would require column-norm
information.

Under this specific configuration, \textbf{biglasso}'s default Newton
mode returns inaccurate fits on both datasets: it reaches the iteration
cap while oscillating on w1a and stops prematurely on news20-3pct.
Adelie, whose only mode uses local Newton curvature, returns an empty
state on w1a but converges on news20-3pct. The contrast between the
datasets, despite their matched \(\bar y\), is consistent with the
failure depending on more than the cold-start gradient magnitude. A
plausible explanation is that, on news20-3pct's wider sparse design, the
\(\ell_1\) penalty pulls many entries to zero before the linear
predictor reaches extreme values, damping the IRLS weight collapse. The
diagnostic does not establish behavior at other values of \(\lambda\),
under other stopping tolerances, or for the packages' otherwise default
calls.

\section*{References}\label{references}
\addcontentsline{toc}{section}{References}

\protect\phantomsection\label{refs}
\begin{CSLReferences}{1}{1}
\bibitem[\citeproctext]{ref-bertrand2022}
Bertrand, Quentin, Quentin Klopfenstein, Pierre-Antoine Bannier,
Gauthier Gidel, and Mathurin Massias. 2022. {``Beyond {L1}: Faster and
Better Sparse Models with {skglm}.''} \emph{Advances in {Neural
Information Processing Systems}} (New Orleans, USA) 35
(November--December): 38950--65.
\url{https://doi.org/10.52202/068431-2823}.

\bibitem[\citeproctext]{ref-bohning1992}
Böhning, Dankmar. 1992. {``Multinomial Logistic Regression Algorithm.''}
\emph{Annals of the Institute of Statistical Mathematics} 44 (1):
197--200. \url{https://doi.org/10.1007/BF00048682}.

\bibitem[\citeproctext]{ref-breheny2011}
Breheny, Patrick, and Jian Huang. 2011. {``Coordinate Descent Algorithms
for Nonconvex Penalized Regression, with Applications to Biological
Feature Selection.''} \emph{The Annals of Applied Statistics} 5 (1):
232--53. \url{https://doi.org/10.1214/10-AOAS388}.

\bibitem[\citeproctext]{ref-fan2008}
Fan, Rong-En, Kai-Wei Chang, Cho-Jui Hsieh, Xiang-Rui Wang, and Chih-Jen
Lin. 2008. {``{LIBLINEAR}: A Library for Large Linear Classification.''}
\emph{Journal of Machine Learning Research} 9 (61): 1871--74.
\url{http://jmlr.org/papers/v9/fan08a.html}.

\bibitem[\citeproctext]{ref-fercoq2015}
Fercoq, Olivier, Alexandre Gramfort, and Joseph Salmon. 2015. {``Mind
the Duality Gap: Safer Rules for the Lasso.''} In \emph{Proceedings of
the 32nd International Conference on Machine Learning}, edited by
Francis Bach and David Blei, vol. 37. Proceedings of Machine Learning
Research. PMLR. \url{https://proceedings.mlr.press/v37/fercoq15.html}.

\bibitem[\citeproctext]{ref-friedman2010}
Friedman, Jerome, Trevor Hastie, and Robert Tibshirani. 2010.
{``Regularization Paths for Generalized Linear Models via Coordinate
Descent.''} \emph{Journal of Statistical Software} 33 (1): 1--22.
\url{https://doi.org/10.18637/jss.v033.i01}.

\bibitem[\citeproctext]{ref-hunter2004}
Hunter, David R., and Kenneth Lange. 2004. {``A Tutorial on {MM}
Algorithms.''} \emph{The American Statistician} 58 (1): 30--37.
\url{https://doi.org/10.1198/0003130042836}.

\bibitem[\citeproctext]{ref-johnson2015}
Johnson, Tyler B, and Carlos Guestrin. 2015. {``Blitz: A Principled
Meta-Algorithm for Scaling Sparse Optimization.''} \emph{Proceedings of
the 32nd {International} {Conference} on {Machine Learning}} (Lille,
France) 37: 1171--79.
\url{https://proceedings.mlr.press/v37/johnson15.html}.

\bibitem[\citeproctext]{ref-krishnapuram2005}
Krishnapuram, Balaji, Lawrence Carin, Mário A. T. Figueiredo, and
Alexander J. Hartemink. 2005. {``Sparse Multinomial Logistic Regression:
Fast Algorithms and Generalization Bounds.''} \emph{{IEEE} Transactions
on Pattern Analysis and Machine Intelligence} 27 (6): 957--68.
\url{https://doi.org/10.1109/TPAMI.2005.127}.

\bibitem[\citeproctext]{ref-lee2014}
Lee, Jason D., Yuekai Sun, and Michael A. Saunders. 2014. {``Proximal
{Newton-Type} Methods for Minimizing Composite Functions.''}
\emph{{SIAM} Journal on Optimization} 24 (3): 1420--43.
\url{https://doi.org/10.1137/130921428}.

\bibitem[\citeproctext]{ref-lovell1963}
Lovell, Michael C. 1963. {``Seasonal Adjustment of Economic Time Series
and Multiple Regression Analysis.''} \emph{Journal of the American
Statistical Association} 58 (304): 993--1010.
\url{https://doi.org/10.1080/01621459.1963.10480682}.

\bibitem[\citeproctext]{ref-massias2020}
Massias, Mathurin, Samuel Vaiter, Alexandre Gramfort, and Joseph Salmon.
2020. {``Dual Extrapolation for Sparse Generalized Linear Models.''}
\emph{Journal of Machine Learning Research} 21 (234): 1--33.
\url{http://jmlr.org/papers/v21/19-587.html}.

\bibitem[\citeproctext]{ref-mccullagh1989}
McCullagh, Peter, and John A. Nelder. 1989. \emph{Generalized Linear
Models}. 2nd ed. Chapman; Hall.

\bibitem[\citeproctext]{ref-michie1994}
Michie, D., D. J. Spiegelhalter, and C. C. Taylor, eds. 1994.
\emph{Machine Learning, Neural and Statistical Classification}. Ellis
Horwood.

\bibitem[\citeproctext]{ref-ndiaye2017}
Ndiaye, Eugene, Olivier Fercoq, Alexandre Gramfort, and Joseph Salmon.
2017. {``Gap Safe Screening Rules for Sparsity Enforcing Penalties.''}
\emph{Journal of Machine Learning Research} 18 (128): 1--33.
\url{https://jmlr.org/papers/v18/16-577.html}.

\bibitem[\citeproctext]{ref-platt1998}
Platt, John C. 1998. {``Fast Training of Support Vector Machines Using
Sequential Minimal Optimization.''} In \emph{Advances in {Kernel
Methods}: {Support Vector Learning}}, 1st ed., edited by Bernhard
Schölkopf, Christopher J. C. Burges, and Alexander J. Smola. MIT Press.
\url{https://doi.org/10.7551/mitpress/1130.003.0016}.

\bibitem[\citeproctext]{ref-taddy2015distributed}
Taddy, Matt. 2015. {``Distributed Multinomial Regression.''} \emph{The
Annals of Applied Statistics} 9 (3): 1394--414.
\url{https://doi.org/10.1214/15-AOAS831}.

\bibitem[\citeproctext]{ref-tay2023}
Tay, J. Kenneth, Balasubramanian Narasimhan, and Trevor Hastie. 2023.
{``Elastic Net Regularization Paths for All Generalized Linear
Models.''} \emph{Journal of Statistical Software} 106 (1): 1--31.
\url{https://doi.org/10.18637/jss.v106.i01}.

\bibitem[\citeproctext]{ref-yang2024}
Yang, James, and Trevor Hastie. 2024. \emph{A Fast and Scalable
Pathwise-Solver for Group Lasso and Elastic Net Penalized Regression via
Block-Coordinate Descent}. \url{https://arxiv.org/abs/2405.08631}.

\bibitem[\citeproctext]{ref-yeoh2002}
Yeoh, Eng-Juh, Mary E. Ross, Sheila A. Shurtleff, et al. 2002.
{``Classification, Subtype Discovery, and Prediction of Outcome in
Pediatric Acute Lymphoblastic Leukemia by Gene Expression Profiling.''}
\emph{Cancer Cell} 1 (2): 133--43.
\url{https://doi.org/10.1016/S1535-6108(02)00032-6}.

\bibitem[\citeproctext]{ref-yuan2012}
Yuan, Guo-Xun, Chia-Hua Ho, and Chih-Jen Lin. 2012. {``An Improved
{GLMNET} for {L1-Regularized} Logistic Regression.''} \emph{Journal of
Machine Learning Research} 13 (64): 1999--2030.
\url{http://jmlr.org/papers/v13/yuan12a.html}.

\bibitem[\citeproctext]{ref-zeng2017}
Zeng, Yaohui, and Patrick Breheny. 2021. {``The Biglasso Package: A
Memory- and Computation-Efficient Solver for Lasso Model Fitting with
Big Data in {R}.''} \emph{The R Journal} 12 (2): 6--19.
\url{https://doi.org/10.32614/RJ-2021-001}.

\end{CSLReferences}

\end{document}